\title[A Bit of Nondeterminism in Pushdown Automata]{A Bit of Nondeterminism Makes Pushdown\texorpdfstring{\\}{ }Automata Expressive and Succinct}
\thanks{This work is supported by the Indian Science and Engineering Research Board 
(SERB) grant SRG/2021/000466, by the EIPHI Graduate School (contract ANR-17-EURE-0002), and by DIREC – Digital Research Centre Denmark.}
\author[S.~Guha]{Shibashis Guha\lmcsorcid{0000-0002-9814-6651}}[a]
\author[I.~Jecker]{Isma\"el Jecker}[b,c]
\author[K.~Lehtinen]{Karoliina Lehtinen\lmcsorcid{0000-0003-1171-8790}}[d]
\author[M.~Zimmermann]{Martin Zimmermann\lmcsorcid{0000-0002-8038-2453}}[e]
\address{Tata Institute of Fundamental Research, Mumbai, India}	%optional
\address{University of Warsaw, Warsaw, Poland}	
\address{Université de Franche-Comté, CNRS, FEMTO-ST, Besançon, France}
\address{CNRS, Aix-Marseille University, LIS,
Marseille, France}
\address{Aalborg University, Aalborg, Denmark}
\tikzset{>=stealth, shorten >=1pt}
\tikzset{every edge/.style = {thick, ->, draw}}
\tikzset{every loop/.style = {thick, ->, draw}}
\newcommand{\set}[1]{\{#1\}}
\newcommand{\nats}{\mathbb{N}}
\renewcommand{\epsilon}{\varepsilon}
\newcommand{\size}[1]{|#1|}
\newcommand{\gfgpda}{HD-PDA}
\newcommand{\pda}{PDA}
\newcommand{\aut}{\mathcal{P}}
\newcommand{\daut}{\mathcal{D}}
\newcommand{\auta}{\mathcal{A}}
\newcommand{\taut}{\mathcal{T}}
\newcommand{\initmark}{I}
\newcommand{\Gammabot}{\Gamma_{\!\bot}}
\newcommand{\sh}{\mathrm{sh}}
\newcommand{\trans}[1]{\xrightarrow{#1}}
\newcommand{\hstrat}{r}
\newcommand{\lmax}{B_{3}}
\newcommand{\sink}{\texttt{end}}
\newcommand{\strat}{\sigma}
\newcommand{\cfl}{CFL\xspace}
\newcommand{\dcfl}{DCFL\xspace}
\newcommand{\gfgcfl}{HD-CFL\xspace}
\newcommand{\vpl}{VPL\xspace}
\newcommand{\HD}{HD\xspace}
\newcommand{\HDness}{HDness\xspace}
\newcommand{\hd}{history-deterministic}
\newcommand{\hdsm}{history-determinism}
\newcommand{\Sigmacall}{\Sigma_c}
\newcommand{\Sigmaskip}{\Sigma_s}
\newcommand{\Sigmareturn}{\Sigma_r}
\newcommand{\exptime}{\textsc{ExpTime}}
\newcommand{\twoexptime}{\textsc{2ExpTime}}
\newcommand{\good}{{\sc ge}} % good enough synthesis
\renewcommand{\ge}{\geqslant}
\renewcommand{\geq}{\geqslant}
\renewcommand{\le}{\leqslant}
\renewcommand{\leq}{\leqslant}
\newcommand{\markingn}[2]{n_{#1,#2}}
\newcommand{\markingl}[2]{\ell_{#1,#2}}
\newcommand{\markingr}[2]{r_{#1,#2}}
\begin{document}

%% SWITCH 2 %%%%%%%%%%%%%%%%%%%%%%%%%%%%%%%
%\input{arxiv/titlepage_arxiv.tex}
% \input{lipics/titlepage_lipics.tex}

\begin{abstract}
We study the expressiveness and succinctness of history-deterministic pushdown automata (HD-PDA) over finite words, that is, pushdown automata whose nondeterminism can be resolved based on the run constructed so far, but independently of the remainder of the input word. These are also known as good-for-games pushdown automata.

We prove that HD-PDA recognise more languages than deterministic PDA (DPDA) but not all context-free languages (CFL). This class is orthogonal to unambiguous CFL.
We further show that HD-PDA can be exponentially more succinct than DPDA, while PDA can be double-exponentially more succinct than HD-PDA.
We also study HDness in visibly pushdown automata (VPA), which enjoy better closure properties than PDA, and for which we show that deciding HDness is {\sc ExpTime}-complete.
HD-VPA can be exponentially more succinct than deterministic VPA, while VPA can be exponentially more succinct than HD-VPA.
Both of these lower bounds are tight.

We then compare \HD-PDA with PDA for which composition with games is well-behaved, i.e. \textit{good-for-games} automata. We show that these two notions coincide, but only if we consider potentially infinitely branching games.

Finally, we study the complexity of resolving  nondeterminism in HD-PDA.
Every HD-PDA has a positional resolver, a function that resolves nondeterminism and that is only dependent on the current configuration.
Pushdown transducers are sufficient to implement the resolvers of HD-VPA, but not those of HD-PDA.
HD-PDA with finite-state resolvers are determinisable.
\end{abstract}

\maketitle
%-------------------------------------------

%%%%%%%%%%%%%%%%%%%%%%%%%%%%%%%%%%%%%%%%%%%%%%%%%%%%%%%%%%%%
%%%%%%%%%%%%%%%%%%%%%%%%%%%%%%%%%%%%%%%%%%%%%%%%%%%%%%%%%%%%
\section{Introduction}
\label{sec_intro}

Nondeterminism adds both expressiveness and succinctness to deterministic pushdown automata. Indeed,
the class of context-free languages (CFL), recognised by nondeterministic pushdown automata (PDA), is strictly larger than the class of deterministic context-free languages (DCFL), recognised by deterministic pushdown automata (DPDA), both over finite and infinite words.
Even when restricted to languages in DCFL, there is no computable bound on the relative succinctness of PDA~\cite{Hartmanis80,Valiant76}.
In other words, nondeterminism is remarkably powerful, even for representing deterministic languages. 
The cost of such succinct representations is algorithmic: problems such as universality and solving games with a CFL winning condition are undecidable for PDA~\cite{DBLP:journals/tcs/Finkel01a,Hopcroft}, while they are decidable for DPDA~\cite{DBLP:journals/iandc/Walukiewicz01}.
%Thus, is it imperative to study 
Intermediate forms of automata that lie between deterministic and nondeterministic models have the potential to mitigate some of the disadvantages of fully nondeterministic automata while retaining some of the benefits of the deterministic ones. 

Unambiguity and bounded ambiguity, for example, restrict nondeterminism by requiring words to have at most one or at most $k$, for some fixed $k$, accepting runs. Holzer and Kutrib survey the noncomputable succinctness gaps between unambiguous PDA and both PDA and DPDA~\cite{HK10}, while Okhotin and Salomaa show  that unambiguous visibly pushdown automata are exponentially more succinct than DPDA~\cite{okhotin2015descriptional}. 
Universality of unambiguous PDA is decidable, as it is decidable for unambiguous context-free grammars~\cite{DBLP:books/daglib/0067812}, which are effectively equivalent~\cite{Her97}.
However, to the best of our knowledge, unambiguity is not known to reduce the algorithmic complexity of solving games with a context-free winning condition.

Another important type of restricted nondeterminism that is known to reduce the complexity of universality and solving games has been studied under the names of good-for-games nondeterminism~\cite{HP06} and history-determinism~\cite{Col09} (HD)\footnote{After the publication of the conference version~\cite{GJLZ21}, which uses the term good-for-games, it has come to light that the notions of history-determinism and good-for-gameness do not always coincide~\cite{BL21}. We therefore prefer to use the term history-determinism as it corresponds better to our definitions. In Section~\ref{sec:composition}, we show that at least for pushdown automata over \textit{finite} words and potentially infinitely branching games, the two notions coincide.}. Intuitively, a nondeterministic automaton is \HD if its nondeterminism can be resolved on-the-fly, i.e.\ without knowledge of the remainder of the input word to be processed.

    For finite automata on finite words, where nondeterminism adds succinctness, but not expressiveness, \HD nondeterminism does not even add succinctness: every \HD-NFA contains an equivalent DFA~\cite{BKS17}, which can be obtained by pruning transitions from the \HD-NFA.
    Thus, \HD-NFA cannot be more succinct than DFA.
    But for finite automata on infinite words, where nondeterminism again only adds succinctness, but not expressiveness, \HD coBüchi automata can be exponentially more succinct than deterministic automata~\cite{KS15}.
    Finally, for certain quantitative automata over infinite words, \HD nondeterminism adds as much expressiveness as arbitrary nondeterminism~\cite{Col09}.

Recently, pushdown automata on infinite words with \HD nondeterminism ($\omega$-\HD-PDA) were shown to be strictly more expressive than $\omega$-DPDA, while universality and solving games for $\omega$-\HD-PDA are not harder than for $\omega$-DPDA~\cite{LZ22}. 
Thus, \HD nondeterminism adds expressiveness without increasing the complexity of these problems, i.e.\ pushdown automata with \HD nondeterminism induce a novel and intriguing  class of context-free $\omega$-languages. 

Here, we continue this work by studying the expressiveness \emph{and} succinctness of PDA over finite words.
While the decidability results on solving games and universality for $\omega$-\HD-PDA on infinite words also hold for \HD-PDA on finite words, the separation argument between $\omega$-\HD-PDA and $\omega$-DPDA  depends crucially on combining \HD nondeterminism with the coBüchi acceptance condition. Since this condition is only relevant for infinite words, the separation result does not transfer to the setting of finite words.

Nevertheless, we prove that \HD-PDA are more expressive than DPDA, yielding the first class of automata on finite words where \HD nondeterminism adds expressiveness. 
The language witnessing the separation is remarkably simple, in contrast to the relatively subtle argument for the infinitary result~\cite{LZ22}:
the language $\{a^i \$ a^j \$ b^k\$ \mid k \leqslant  \max(i,j)\}$ is recognised by an \HD-PDA but not by a DPDA.
This yields a new class of languages, those recognised by \HD-PDA over finite words, for which universality and solving games are decidable. We also show that this class is incomparable with unambiguous context-free languages.

We then turn our attention to the succinctness of \HD-PDA.
We show that the succinctness gap between DPDA and \HD-PDA is at least exponential, while the gap between \HD-PDA and PDA is at least double-exponential. These results hold already for finite words. 

To the best of our knowledge, both our expressiveness and our succinctness results are the first examples of history-determinism being used effectively over finite, rather than infinite,  words (recall that all \HD-NFA are determinisable by pruning).
Also, this is the first succinctness result for history-deterministic automata that does not depend on the infinitary coBüchi acceptance condition, which was used to show the exponential succinctness of \HD coBüchi automata, as compared to deterministic ones~\cite{KS15}.
% The results related to expressiveness and succinctness of \HD-PDA are summarised in Table \ref{table:summaryPDA}.

%Unfortunately, \HD-PDA, like $\omega$-\HD-PDA~\cite{LZ22}, have poor closure properties, and checking \HDness is undecidable.
We then study an important subclass of \HD-PDA, namely, \HD visibly pushdown automata (VPA), in which the stack behaviour (push, pop, skip) is determined by the input letter only.
\HD-VPA enjoy the good closure properties of VPA (to which they are expressively equivalent): they are closed under complement, union and intersection.
We  show that there is an exponential succinctness gap between deterministic VPA (DVPA) and \HD-VPA, as well as between \HD-VPA and VPA.
Both of these are tight, as VPA, and therefore \HD-VPA as well, admit an exponential determinisation procedure~\cite{AlurM04}.
Furthermore, we show that \HDness of VPA is decidable in $\exptime$.
This makes \HD-VPA a particularly interesting class of PDA as they are recognisable, succinct, have good closure properties and, as for \HD-PDA in general, deciding universality and solving games are both in $\exptime$.
In contrast, solving $\omega$-VPA games is $\twoexptime$-complete~\cite{DBLP:conf/fsttcs/LodingMS04}. We also relate the problem of checking \HDness with the \textit{good-enough synthesis}~\cite{AK20} or \textit{uniformization} problem~\cite{carayol:hal-01806575}, which we show to be \exptime-complete for DVPA and \HD-VPA.
% The results related to expressiveness and succinctness of \HD-PDA are summarised in Table \ref{table:summaryVPA}.

In general, the good compositional properties of \HD automata enable them to be used for solving games and synthesis. In the regular setting, \HD automata are exactly those for which composition with games (of which the winning condition is recognised by the automaton) preserves the winner. Hence, the terms ``good-for-games" and ``history-deterministic" have often been used interchangeably. However, these two notions do not always coincide, for example for quantitative automata~\cite{BL21}. Here we examine the situation for \HD-PDA and show that for PDA, like for ($\omega$-)regular automata, \HD automata are exactly those which compose well with games, that is, history-determinism coincides with good-for-gameness. However, while in the ($\omega$-)regular setting composition with finitely branching and infinitely branching games coincide, this is not the case for PDA: there are PDA that are not \HD but enjoy compositionality with all finitely branching games.

Nondeterminism in \HD automata is resolved on-the-fly, i.e.\ the next transition to be taken only depends on the run prefix constructed so far and the next letter to be processed. 
Thus, the complexity of a resolver, mapping run prefixes and letters to transitions, is a natural complexity measure for \HD automata. 
For example, finite \HD automata (on finite and infinite words) have a finite-state resolver~\cite{HP06}.
For pushdown automata with their infinite configuration space, the situation is markedly different: 
On one hand, we show that \HD-PDA admit positional resolvers, that is, resolvers that depend only on the current configuration, rather than on the entire run prefix produced
so far. Note that this result only holds for \HD-PDA over finite words, but not for $\omega$-\HD-PDA.
Yet, positionality does not imply that resolvers are simple to implement. We show that there are \HD-PDA that do not admit a resolver implementable by a pushdown transducer. In contrast, all \HD-VPA admit pushdown resolvers, again showing that \HD-VPA are better behaved than general \HD-PDA.
Finally, \HD-PDA with finite-state resolvers are determinisable.

\paragraph{Structure of the paper} 
We begin with some preliminaries in Section~\ref{sec_prel} and introduce \HD-PDA in Section~\ref{sec_gfg}.
We study their expressiveness in Section~\ref{sec_expressivity} by comparing them to DPDA, PDA, and unambiguous PDA. 
Afterwards, in Section~\ref{section_succinctness}, we exhibit the succinctness of \HD-PDA in relation to DPDA and PDA. 
The important subclass of \HD-VPA is studied in Section~\ref{section_vpa}, focusing on succinctness, decidability of \HDness, and connections to the good-enough synthesis problem.
Closure properties of \HD-PDA are studied in Section~\ref{sec_closure} while we compare the notions of \HDness and good-for-gameness for PDA on finite words in Section~\ref{sec:composition}.
Then, in Section~\ref{sec_games.tex}, we solve synthesis with specifications given by \HD-PDA as well as universality of such automata.
Next, the resources required to resolve the nondeterminism in an \HD-PDA are studied in Section~\ref{sec_resolvers} before we conclude with some open problems in Section~\ref{section_conc}.

\paragraph{Previous version} A conference version of this paper was presented at MFCS 2021~\cite{GJLZ21}. 
The present article contains all proofs, a more detailed discussion of closure properties, and a new section on compositionality. 

\paragraph*{Related work}
The notion of \HD nondeterminism  has emerged independently several times, at least as Colcombet's history-determinism~\cite{Col09}, in Piterman and Henzinger's \HD automata~\cite{HP06},  and as Kupferman, Safra, and Vardi's nondeterminism for recognising derived languages, that is, the language of trees of which all branches are in a regular language~\cite{KSV06}. Related notions have also emerged in the context of XML document parsing. Indeed, preorder typed visibly pushdown languages and 1-pass preorder typeable tree languages, considered by Kumar, Madhusudan, and Viswanathan~\cite{KMV07} and Martens, Neven, Schwentick, and Bex~\cite{MNSB06} respectively, also consider nondeterminism which can be resolved on-the-fly. However, the restrictions there are stronger than simple \HD nondeterminism, as they also require the typing to be unique, roughly corresponding to unambiguity in automata models and grammars. This motivates the further study of unambiguous \HD automata, although this remains out of scope for the present paper. The XML extension AXML has also inspired Active Context Free Games~\cite{MSS06}, in which one player, aiming to produce a word within a target regular language, chooses positions on a word and the other player chooses a rewriting rule from a context-free grammar. Restricting the strategies of the first player to moving from left to right makes finding the winner decidable~\cite{MSS06,BSSK13}; however, since the player still knows the future of the word, this restriction is not directly comparable to \HD nondeterminism.

Unambiguity, or bounded ambiguity, is an orthogonal way of restricting nondeterminism by limiting the number of permitted accepting runs per word. For regular languages, it leads to polynomial equivalence and containment algorithms~\cite{SH85}. Minimization remains {\sc NP}-complete for both unambiguous automata~\cite{JR93,BM12} and \HD automata~\cite{Sch20} (at least when acceptance is defined on states, see~\cite{AK19}). On pushdown automata, increasing the permitted degree of ambiguity leads to both greater expressiveness and unbounded succinctness~\cite{Her97}.
Finally, let us mention two more ways of measuring--and restricting--nondeterminism in PDA: bounded nondeterminism, as studied by Herzog~\cite{Her97} counts the branching in the run-tree of a word, while the minmax measure~\cite{SY93,GLW05} counts the number of nondeterministic guesses required to accept a word. The natural generalisation of history-determinism as the \textit{width} of an automaton~\cite{MK19} has not yet, to the best of our knowledge, been studied for PDA.

%%%%%%%%%%%%%%%%%%%%%%%%%%%%%%%%%%%%%%%%%%%%%%%%%%%%%%%%%%%%
%%%%%%%%%%%%%%%%%%%%%%%%%%%%%%%%%%%%%%%%%%%%%%%%%%%%%%%%%%%%
\section{Preliminaries}
\label{sec_prel}
An alphabet~$\Sigma$ is a finite nonempty set of letters. The  empty word is denoted by $\epsilon$, the length of a word~$w$ is denoted by $\size{w}$, and the $n^\text{th}$ letter of $w$ is denoted by $w(n)$ (starting with $n = 0$). The set of (finite) words over $\Sigma$ is denoted by $\Sigma^*$, the set of nonempty (finite) words over $\Sigma$ by $\Sigma^+$, and the set of finite words of length at most $n$ by $\Sigma^{\leq n}$. 
A language over $\Sigma$ is a subset of $\Sigma^*$.

For alphabets~$\Sigma_1,\Sigma_2$, we extend functions~$f \colon \Sigma_1 \rightarrow \Sigma_2^*$ homomorphically to words over $\Sigma_1$ via
$f(w) = f(w(0)) f(w(1)) f(w(2)) \cdots$.

\subsection{Pushdown automata}
\label{sec_definitions}
A pushdown automaton (PDA for short)~$\aut = (Q, \Sigma, \Gamma, q_\initmark, \Delta, F)$
consists of a finite set~$Q$ of states with the initial state~$q_\initmark \in Q$, an input alphabet~$\Sigma$, a stack alphabet~$\Gamma$, a transition relation~$\Delta$ to be specified, and a set~$F$ of final states.
For notational convenience, we define $\Sigma_\epsilon = \Sigma \cup \set{\epsilon}$ and $\Gammabot = \Gamma \cup \set{\bot}$, where $\bot \notin \Gamma$ is a designated stack bottom symbol.
Then, the transition relation~$\Delta$ is a subset of  $Q \times \Gammabot \times \Sigma_\epsilon \times Q \times \Gammabot^{\leqslant2}$ that we require to neither write nor delete the stack bottom symbol from the stack:
if
$(q, \bot, a, q', \gamma) \in \Delta$, then $\gamma \in \bot \cdot (\Gamma \cup \set{\epsilon}) $, and if $(q, X, a, q', \gamma) \in \Delta$ for $X \in \Gamma$, then $\gamma \in \Gamma^{\leqslant2}$. 
Given a transition~$\tau = (q,X,a,q',\gamma)$ let $\ell(\tau) = a \in \Sigma_\epsilon$. 
We say that $\tau$ is an $\ell(\tau)$-transition and that $\tau$ is a $\Sigma$-transition, if $\ell(\tau) \in \Sigma$.
For a finite sequence~$\rho$ over $\Delta$, the word $\ell(\rho) \in \Sigma^*$ is defined by applying $\ell$ homomorphically to every transition.
We take the size of $\aut$ to be $|Q|+|\Gamma|$.\footnote{Note that we prove exponential succinctness gaps, so the exact definition of the size is irrelevant, as long as it is polynomial in $|Q|$ and $|\Gamma|$. Here, we pick the sum for the sake of simplicity.}

A stack content is a finite word in $\bot \Gamma^*$ (i.e. the top of the stack is at the end) and a configuration~$c = (q, \gamma)$ of $\aut$ consists of a state~$q \in Q$ and a stack content~$\gamma$.  
The initial configuration is $(q_\initmark, \bot)$.

The set of modes of $\aut$ is  $Q \times \Gammabot$. A mode~$(q,X)$ enables all transitions of the form~$(q, X, a, q', \gamma' )$ for some $a \in \Sigma_\epsilon$, $q' \in Q$, and $\gamma' \in \Gammabot^{\le 2}$.
The mode of a configuration~$c = (q, \gamma X)$ is $(q,X)$. 
A transition~$\tau$ is enabled by $c$ if it is enabled by $c$'s mode.
In this case, we write~$(q, \gamma X) \trans{\tau} (q', \gamma\gamma')$, where $\tau = (q, X, a, q', \gamma')$.

A run of $\aut$ is a finite sequence~$\rho = c_0 \tau_0 c_1 \tau_1 \cdots c_{n-1} \tau_{n-1} c_n$ of configurations and transitions with $c_0$ being the initial configuration and $c_{n'} \trans{\tau_{n'}}c_{n'+1}$ for every $n' < n$.
The run~$\rho$ is a run of $\aut$ on $w \in \Sigma^*$, if $w = \ell(\rho) $. 
We say that $\rho$ is accepting if it ends in a configuration whose state is final.
The language~$L(\aut)$ recognized by $\aut$ contains all $w \in \Sigma^*$ such that $\aut$ has an accepting run on $w$. 

\begin{rem}
Let $c_0 \tau_0 c_1 \tau_1 \cdots c_{n-1} \tau_{n-1} c_n$ be a run of $\aut$.
Then, the sequence~$c_0c_1 \cdots  c_{n-1} c_n$ of configurations is uniquely determined by the sequence~$\tau_0\tau_1 \cdots \tau_{n-1}$ of transitions.
Hence, whenever convenient, we treat a sequence of transitions as a run if it indeed induces one (not every such sequence does induce a run, e.g. if a transition~$\tau_{n'}$ is not enabled in $c_{n'}$).
\end{rem}

We say that a PDA~$\aut$ is deterministic (DPDA) if 
\begin{itemize}
    \item every mode of $\aut$ enables at most one $a$-transition for every $a \in \Sigma \cup \set{\epsilon}$, and
    
    \item for every mode of $\aut$, if it enables some
    $\epsilon$-transition, then it does not enable any $\Sigma$-transition.
\end{itemize}
Hence, for every input and for every run prefix on it there is at most one enabled transition to continue the run.
Still, due to the existence of $\epsilon$-transitions, a DPDA can have more than one run on a given input. However, these only differ by trailing $\epsilon$-transitions.

The class of languages recognized by PDA is denoted by \cfl, the class of languages recognized by DPDA by \dcfl.

\begin{exa}
\label{example:pda}
The PDA~$\aut$ depicted in Figure~\ref{fig:pdaexample} recognises the language
$\set{a c^nd^n a \mid n \geqslant1} \cup \set{b c^nd^{2n} b \mid n \geqslant1}$.
Note that while $\aut$ is nondeterministic, $L(\aut)$ is in \dcfl.
\begin{figure}
    \centering

\begin{tikzpicture}[thick]
\def\y{1.2}
\def\x{1.2}
\tikzset{every state/.style = {minimum size =22}}
\node[state] (i) at (-.5,0) {};
\node[state] (u) at (2*\x, 0) {$q$};
\node[state] (ad) at (6*\x,\y) {$q_1$};
\node[state] (bd1) at (6*\x,-\y) {$q_2$};
\node[state] (bd2) at (9*\x,-\y) {};
\node[state,fill=lightgray] (acc) at (9*\x,\y) {};

\path[-stealth]
(-.5,-.75) edge (i)
(i) edge node[above] {$a,\bot\mid \bot A$} (u)
(i) edge node[below] {$b,\bot\mid \bot B$} (u)
(u) edge[loop right] node[right] {$c, X\mid XN $} ()
(u.north east) edge[bend left=15] node[near start,above,yshift=.1cm] {$d, N \mid \epsilon$} (ad)
(u.south east) edge[bend right=15] node[near start,below,yshift=-.2cm] {$d, N \mid N$} (bd1)
(ad) edge[loop below] node[] {$d, N \mid \epsilon $} ()
(bd1) edge[bend left=20] node[above] {$d, N \mid \epsilon$} (bd2)
(bd2) edge[bend left=20] node[below] {$d, N \mid N$} (bd1)
(bd2) edge node[right] {$b, B \mid \epsilon$} (acc)
(ad) edge node[above] {$a, A \mid \epsilon$} (acc)
;
\end{tikzpicture}

    \caption{The PDA $\aut$ from Example~\ref{example:pda}. Grey states are final, and $X$ is an arbitrary stack symbol.}
    \label{fig:pdaexample}
\end{figure}

\end{exa}

%%%%%%%%%%%%%%%%%%%%%%%%%%%%%%%%%%%%%%%%%%%%%%%%%%%%%%%%%%%%
%%%%%%%%%%%%%%%%%%%%%%%%%%%%%%%%%%%%%%%%%%%%%%%%%%%%%%%%%%%%
\section{History-deterministic Pushdown Automata}
\label{sec_gfg}
Here, we introduce history-deterministic push\-down automata on finite words (\HD-PDA for short), nondeterministic pushdown automata whose nondeterminism can be resolved based on the run prefix constructed so far and on the next input letter to be processed, but independently of the continuation of the input beyond the next letter.

As an example, consider the PDA~$\aut$ from Example~\ref{example:pda}. 
It is nondeterministic, but knowing whether the first transition of the run processed an $a$ or a $b$ allows the nondeterminism to be resolved in a configuration of the form~$(q,\gamma N)$ when processing a $d$:
in the former case, take the transition to state~$q_1$, in the latter case the transition to state~$q_2$. 
Afterwards, there are no nondeterministic choices to make and the resulting run is accepting whenever the input is in the language. This automaton is therefore history-deterministic.

We implement this intuition with the notion of a \textit{resolver}, that is, a function that, given the run so far, and the next input letter, produces the sequence of transitions ($\varepsilon$~and otherwise) to be taken to process this letter. After the last letter has been processed, the run produced so far must be accepting, without trailing $\varepsilon$-transitions. We will show that this is not a serious restriction at the end of the section.

Fix a PDA~$\aut = (Q, \Sigma, \Gamma, q_\initmark, \Delta, F)$. 
We say that a run~$c_0 \tau_0 \cdots \tau_n c_{n+1}$ of $\aut$ processing some $w \in \Sigma^*$ has no trailing $\epsilon$-transitions if
\begin{enumerate}
    \item $n =-1$ if $w = \epsilon$, and
    \item $\ell(\tau_0 \cdots \tau_{n-1})$ is a strict prefix of $w$, if $w \neq \epsilon$.
\end{enumerate}

A (nondeterminism) resolver for $\aut$ is a function~$\hstrat \colon \Delta^* \times \Sigma \rightarrow \Delta$ such that 
for every $w \in L(\aut)$, there is an accepting run~$\rho = c_0 \tau_0 \cdots \tau_n c_{n+1}$ on $w$ that has no trailing $\epsilon$-transitions  
and satisfies $\tau_{n'} = \hstrat(\tau_0 \cdots \tau_{n'-1}, w(\size{ \ell(\tau_0 \cdots \tau_{n'-1}) }))$ for all $0 \leqslant n' < n$.
If $w$ is nonempty, then $w(\size{ \ell(\tau_0 \cdots \tau_{n'-1}) })$ is defined for all $0 \leqslant n' < n$, as $\rho$ has no trailing $\epsilon$-transitions.
Note that $\rho$ is unique if it exists.
We say that $\aut$ is history-deterministic (\HD) if it has a resolver.
We denote the class of languages recognised by \HD-PDA by \gfgcfl.

Note that the prefix processed so far can be recovered from $\hstrat$'s input, i.e.\ it is $\ell(\rho)$. However, the converse is not true due to the existence of $\epsilon$-transitions.
This is the reason that the run prefix and not the input prefix is the argument for the resolver.

We require the run induced by a resolver to have no trailing $\epsilon$-transitions, as a resolver requires the next letter to be processed as part of its input. 
This is obviously undefined once the input has ended. 
Nevertheless, at the end of this section, we study a variant of resolvers with end-of-word markers, which will allow trailing $\epsilon$-transitions, showing that they do not add expressiveness.

Intuitively, every DPDA \emph{should} be \HD, as there is no nondeterminism to resolve during a run.
However, in order to reach a final state, a run of a DPDA on some input~$w$ may traverse \emph{trailing} $\epsilon$-transitions after the last letter of $w$ is processed.
On the other hand, the run of an \HD-PDA on $w$ consistent with any resolver has to end with the transition processing the last letter of $w$.
Hence, not every DPDA recognises the same language when viewed as an \HD-PDA. 
Nevertheless, we show, using standard pushdown automata constructions, that every DPDA can be turned into an equivalent \HD-PDA. As every \HD-PDA is a PDA by definition, we obtain a hierarchy of languages. 

\begin{lem}
\label{lemma_inclusions}
\dcfl $\subseteq $ \gfgcfl $\subseteq$ \cfl.
\end{lem}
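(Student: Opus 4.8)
The plan is to establish the two inclusions separately, as the right inclusion is immediate and the left inclusion requires a small construction to handle the subtle mismatch between trailing $\epsilon$-transitions in DPDA and the no-trailing-$\epsilon$ requirement in the definition of \HD-PDA.

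For the inclusion $\gfgcfl \subseteq \cfl$, there is essentially nothing to prove: every \HD-PDA is, by definition, a PDA, and its language $L(\aut)$ is defined exactly as for any PDA (the resolver only certifies \HDness, it does not change the accepted language). Hence any language in \gfgcfl is recognised by a PDA and is therefore in \cfl.

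The interesting direction is $\dcfl \subseteq \gfgcfl$. Given a DPDA~$\daut$ recognising a language~$L$, the naive attempt — simply viewing $\daut$ as a PDA with a trivial resolver — fails, because the definition of \HDness forbids trailing $\epsilon$-transitions: the accepting run on~$w$ produced by a resolver must end with the transition that reads the last letter of~$w$, whereas $\daut$ may need to fire several $\epsilon$-transitions after reading the last input letter before reaching a final state. So the plan is to transform $\daut$ into an equivalent DPDA~$\daut'$ in which acceptance is detected \emph{before} any such trailing $\epsilon$-moves are required, or equivalently to mark states from which a final state is reachable by $\epsilon$-transitions. Concretely, I would first compute, for each mode~$(q,X)$, whether $\daut$ can reach a final state from a configuration with that mode using only $\epsilon$-transitions; this is a standard reachability computation on the (infinite but finitely representable) configuration graph and can be done by the usual saturation/$\epsilon$-closure construction for pushdown systems. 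Using this information, I would redirect the behaviour so that the last letter-reading transition already leads into a final state whenever the original run would have continued via trailing $\epsilon$-transitions into a final configuration. After this modification, $\daut'$ still recognises~$L$ and has, on every $w \in L$, an accepting run whose final transition reads the last letter of~$w$ (for $w = \epsilon$, the initial configuration is made final exactly when $\epsilon \in L$). Since $\daut'$ remains deterministic, at most one transition is enabled at each step, so the resolver is forced: I define $\hstrat$ to return, for any run prefix and next letter, the unique enabled transition continuing the run. This witnesses that $\daut'$ is \HD, giving $L \in \gfgcfl$.

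The main obstacle is the careful handling of the $\epsilon$-transition bookkeeping in the left inclusion: one must ensure both that the modified automaton $\daut'$ accepts exactly~$L$ and that it never needs trailing $\epsilon$-transitions to accept, while preserving determinism. The $\epsilon$-closure reachability analysis for pushdown configurations is where the real (though standard) work lies, since the stack content can grow during the $\epsilon$-moves; I would lean on the classical result that the set of configurations from which a final state is $\epsilon$-reachable is regular and effectively computable. Everything else — the right inclusion and the definition of the forced resolver — is routine once this construction is in place.
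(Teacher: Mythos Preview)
Your proposal is correct but takes a genuinely different route from the paper for the $\dcfl \subseteq \gfgcfl$ direction.

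The paper does not analyse $\epsilon$-reachability at all. Instead it \emph{delays} the processing of each letter: when the original DPDA would read $w(i)$ via a transition~$\tau_i$ and then execute an $\epsilon$-block~$\rho_{i+1}$, the new automaton replaces $\tau_i$ by an $\epsilon$-transition that merely stores $w(i)$ in the state, then simulates $\rho_{i+1}$, and only afterwards fires a dummy $w(i)$-transition. This shifts every $\epsilon$-block one letter to the left, so the trailing block~$\rho_{n+1}$ is now executed \emph{before} the last input letter is consumed. The resulting automaton is no longer deterministic (it guesses the next letter), but the resolver resolves this guess trivially since it is handed exactly that letter. This is elementary bookkeeping and needs no saturation.

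Your approach --- compute the (regular) set of configurations that $\epsilon$-reach a final state and declare those accepting --- is essentially what the paper does later, in Lemma~\ref{lemma_eow}, to eliminate end-of-word markers. It works, but be careful: your phrase ``for each mode~$(q,X)$'' is misleading, since $\epsilon$-reachability of a final state depends on the entire stack, not just the top symbol. To implement your idea you must track the state of the recognising DFA along the stack (as in the proof of Lemma~\ref{lemma_eow}), which blows up the stack alphabet. After that, you indeed get a DPDA with the forced resolver you describe. So your construction produces a \emph{deterministic} automaton at the cost of the saturation machinery, whereas the paper's construction produces a properly nondeterministic but obviously \HD automaton with almost no machinery; either is adequate here.
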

\begin{proof}
We only consider the first inclusion, as the second one is trivial.
So, let $L \in $ \dcfl, say it is recognised by the DPDA~$\aut = (Q, \Sigma, \Gamma, q_\initmark, \Delta, F)$. 
We say that a mode~$m$ of $\aut$ is a reading mode if it does not enable an $\epsilon$-transition.
Hence, due to determinism, a reading mode $m$ can only enable at most one $a$-transition for every $a \in \Sigma$.

Now, consider some nonempty word~$w(0) \cdots w(n) \in L(\aut)$ (we take care of the empty word later on), say with accepting run~$\rho$ (treated, for notational convenience, as a sequence of transitions).
This run can be decomposed as
\[\rho = \rho_0\, \tau_0\, \rho_1\, \tau_1\, \rho_2\, \cdots\, \rho_n\, \tau_n\, \rho_{n+1} \]
where $\tau_i$ processes $w(i)$ and each $\rho_i$ is a (possibly empty) sequence of $\epsilon$-transitions.
Each run prefix induced by some $\rho_0 \tau_0 \rho_1 \cdots \rho_{i}$ ends in a configuration with reading mode.

Intuitively, we have to eliminate the trailing $\epsilon$-transitions in $\rho_{n+1}$.
To do so, we postpone the processing of each letter~$w(i)$ to the end of $\rho_{i+1}$.
Instead, we guess that the next input is $w(i)$ by turning the original  $w(i)$-transition~$\tau_i$ into an $\epsilon$-transition $\tau_i'$ that stores $w(i)$ in the state space of the modified automaton. 
Then, $\rho_{i+1}$ is simulated and a dummy transition~$\tau_i^d$ processing the stored letter~$w(i)$ is executed. 

Hence, the resulting run of the modified automaton on $w$ has the form
\[\rho_0\, \tau_0'\, \rho_1\, \tau_0^d\, \tau_1'\, \rho_2\, \tau_1^d\, \cdots\, \rho_n\, \tau_{n-1}^d\, \tau_n'\, \rho_{n+1}\, \tau_n^d,\]
where each $\tau_i'$ is now an $\epsilon$-transition, each $\rho_i$ is a (possibly empty) sequence of $\epsilon$-transitions, and each $\tau_i^d$ is a dummy transition processing $w(i)$. 
Hence, the run ends with the transition processing the last letter $w(n)$ of the input.

The resulting PDA is \HD, as a resolver has access to the next letter to be processed, which is sufficient to resolve the nondeterminism introduced by the guessing of the next letter.

More formally, consider the PDA~$\aut' = (Q', \Sigma, \Gamma, q_\initmark', \Delta', F')$ where 
\begin{itemize}
    \item $Q' = Q \cup (Q \times \Sigma)$,
    \item $q_\initmark' = q_\initmark$,
    \item $F' = F \cup I$ where $I=\set{q_\initmark}$ if $\epsilon \in L(\aut)$ and $I = \emptyset$, otherwise, and
    \item $\Delta'$ is the union of the following sets of transitions:
    \begin{itemize}
        \item $\set{\tau \in \Delta \mid \ell(\tau) = \epsilon}$, which is used to simulate the leading sequence of $\epsilon$-transitions before the first letter is processed by $\aut$, i.e. the transitions in $\rho_0$ above.

        \item $\set{(q, X, \epsilon, (q',a),\gamma) \mid (q,X,a,q',\gamma) \in \Delta \text{ and } a\in \Sigma}$, which are used to guess and store the next letter to be processed, i.e. the transitions $\tau_i'$ above.

        \item $\set{((q,a), X, \epsilon, (q',a),\gamma) \mid (q,X,\epsilon,q',\gamma) \in \Delta }$, which are used to simulate $\epsilon$-transitions after a letter has been guessed, but not yet processed, i.e. transitions in some $\rho_i$ with $i>0$ above.
        
        \item $\set{ ((q,a), X, a, q, X) \mid (q,X) \text{ is a reading mode}}$, the dummy transitions used to actually process the guessed and stored letter.
        
    \end{itemize}
    
\end{itemize}

Now, formalising the intuition given above, one can show that $\aut'$ has a resolver witnessing that it recognises $L(\aut)$.
In particular, the empty word is in $L(\aut')$ if and only if it is in $L(\aut)$, as the run induced by the resolver on $\epsilon$ ends in the initial configuration, which is final if and only if $\epsilon \in L(\aut)$.
\end{proof}

\HD-PDA are by definition required to end their run with the last letter of the input word. 
Instead, one could also consider a model where they are allowed to take some trailing $\epsilon$-transitions after the last input letter has been processed.
As a resolver has access to the next input letter, which is undefined in this case, we need resolvers with end-of-word markers to signal the resolver that the last letter has been processed. 
In the following, we show that \HD-PDA with end-of-word resolvers are as expressive as standard \HD-PDA, albeit exponentially more succinct.

Fix some distinguished end-of-word-marker~$\#$, which takes the role of the next input letter to be processed, if there is none after the last letter of the input word is processed. 
Let $\aut = (Q, \Sigma, \Gamma, q_\initmark, \Delta, F)$ be a PDA with $\# \notin \Sigma$.
An EoW-resolver for $\aut$ is a function~$\hstrat \colon \Delta^* \times (\Sigma \cup \set{\#}) \rightarrow \Delta$ such that for every $w \in L(\aut)$, there is an accepting run~$c_0 \tau_0 \cdots \tau_n c_n$ on $w$ such that $\tau_{n'} = \hstrat(\tau_0 \cdots \tau_{n'-1}, w\#(\size{ \ell(\tau_0 \cdots \tau_{n'-1}) }))$ for all $0 \leqslant n' < n$.
Note that the second argument given to the resolver is a letter of $w\#$, which is equal to $\#$ if the run prefix induced by $\tau_0 \cdots \tau_{n'-1}$ has already processed the full input~$w$.

\begin{lem}
\label{lemma_eow}
\HD-PDA with EoW-resolvers are as expressive as \HD-PDA.
\end{lem}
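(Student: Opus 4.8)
The plan is to prove the two inclusions separately. The inclusion of \gfgpda\ into \gfgpda\ with EoW-resolvers is immediate: given a resolver $\hstrat \colon \Delta^* \times \Sigma \rightarrow \Delta$, I extend it to the alphabet $\Sigma \cup \set{\#}$ by defining its value on $\#$ arbitrarily. Since the run induced by $\hstrat$ on any $w \in L(\aut)$ ends with the transition processing the last letter of $w$, the argument $\#$ is never actually queried, so the extended function is an EoW-resolver witnessing the same language.

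For the converse — the substantial direction — I start from a PDA $\aut = (Q, \Sigma, \Gamma, q_\initmark, \Delta, F)$ together with an EoW-resolver $\hstrat$ and build an equivalent \gfgpda\ $\aut'$. The only obstacle is that an accepting run of $\aut$ under $\hstrat$ may end with a block of \emph{trailing} $\epsilon$-transitions, chosen by $\hstrat$ on reading $\#$. A standard resolver cannot reproduce these: it is never told that the input has been exhausted, since after processing $w(\size{w}-1)$ it is simply no longer queried. The key idea is therefore not to simulate the trailing $\epsilon$-block at all, but to fold its effect into the acceptance condition of $\aut'$. Concretely, I would first consider the set $\text{Acc}_\epsilon$ of configurations $(q,\gamma)$ from which $\aut$ reaches a final configuration using only $\epsilon$-transitions. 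This is the backward $\epsilon$-reachability set of the (regular) set of final configurations, so by the classical saturation construction for pushdown systems it is a \emph{regular} set of configurations: for each $q$, the set $\set{\gamma \mid (q,\gamma) \in \text{Acc}_\epsilon}$ is recognised by a finite automaton reading the stack. I then take $\aut'$ to be $\aut$ with its stack alphabet enriched so that every stack symbol carries the finite-state information needed to decide $\text{Acc}_\epsilon$-membership of the stack below it, this information being updated on each push and pop and mirrored in the control state. The transitions of $\aut'$ are those of $\aut$ acting on the annotated stack, and a control state is declared final exactly when, together with the current top annotation, it certifies that the live configuration lies in $\text{Acc}_\epsilon$.

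The resolver $\hstrat'$ for $\aut'$ then merely mimics $\hstrat$ on the genuine letters: given a run prefix of $\aut'$, I decode it to the corresponding prefix of $\aut$, apply $\hstrat$ with the current letter, and re-encode the resulting transition. Because $\aut'$ and $\aut$ have processed the same prefix of $w$ after corresponding prefixes, the letter supplied to $\hstrat'$ always coincides with the letter $\hstrat$ would use, and — crucially — $\hstrat'$ is never queried once all of $w$ has been consumed. Thus $\hstrat'$ reproduces exactly the part of $\hstrat$'s run up to and including the last $\Sigma$-transition, and never the trailing $\epsilon$-block. For $w \in L(\aut)$ this run ends, after that last $\Sigma$-transition, in the configuration $c$ from which $\hstrat$'s trailing $\epsilon$-block would depart; as $\hstrat$'s full run is accepting, $c \in \text{Acc}_\epsilon$, so $\aut'$ is in a final state and accepts with no trailing $\epsilon$-transition. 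For the language equality I would note that every run of $\aut'$ projects to a run of $\aut$ on the same word, and that $\text{Acc}_\epsilon$-membership is insensitive to whether one stops at the last $\Sigma$-transition or lets trailing $\epsilon$-transitions run (the relevant configurations are $\epsilon$-interreachable into $\text{Acc}_\epsilon$), whence $L(\aut') = L(\aut)$.

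The main obstacle I anticipate is not the resolver, which mimics $\hstrat$ transparently, but the pushdown annotation: carefully defining the enriched stack alphabet and the update rules so that the top annotation always reflects $\text{Acc}_\epsilon$-membership of the live stack, and verifying that these updates remain consistent under every push and pop performed by $\aut$'s transitions. This is routine pushdown bookkeeping, but it is where the real care is needed; it also accounts for the (at most exponential) size increase incurred in this direction.
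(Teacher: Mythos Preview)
Your proposal is correct and follows essentially the same approach as the paper: both directions match, and in the substantial direction you, like the paper, compute via saturation the regular set of configurations that $\epsilon$-reach a final configuration, thread the recognising DFA's state through an enriched stack alphabet, redefine acceptance as membership in that set, and let the new resolver mimic the EoW-resolver on genuine letters only. The paper spells out the stack annotation concretely (pairing each stack symbol with the current DFA state and its predecessor to support pops), which is exactly the ``routine pushdown bookkeeping'' you anticipate.
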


\begin{proof}
A (standard) resolver can be turned into an EoW-resolver that ignores the EoW-marker. 
Hence, every \HD-PDA is an \HD-PDA with EoW-resolver recognizing the same language.
So, it only remains to consider the other inclusion.

To this end, let $\aut = (Q, \Sigma, \Gamma, q_\initmark, \Delta, F)$ be a PDA with EoW-resolver.
The language
\[
C_{acc} = \set{ \gamma q \mid q \in F \text{ and } \gamma \in \bot\Gamma^* } \subseteq  \bot\Gamma^*Q
\]
encoding final configurations of $\aut$ is regular. 
Hence, the language
\begin{align*}
C = \{ \gamma q \in \bot\Gamma^*Q \mid \text{ there is a run infix } (q, \gamma) \tau_0 \cdots \tau_{n-1} c_n \ \\ 
\text{ with } \ell(\tau_0 \cdots \tau_{n-1}) = \epsilon \text{ and } c_n \in C_{acc} \}
\end{align*}
can be shown to be regular as well by applying saturation techniques~\cite{Buechi1964}\footnote{Also, see the survey by Carayol and Hague~\cite{CarayolHague14} for more details.} to the restriction of $\aut$ to $\epsilon$-transitions.
If $\aut$ reaches a configuration~$c \in C$ after processing an input~$w$, then $w \in L$, even if $c$'s state is not final.

Let $\auta = (Q_\auta, \Gammabot \cup Q, q_\initmark^\auta, \delta_\auta, F_\auta)$ be a DFA recognizing $C$.
We extend the stack alphabet of $\aut$ to $\Gamma \times  Q_\auta \times (Q_\auta \cup \set{u})$, where $u$ is a fresh symbol.
Then, we extend the transition relation such that it keeps track of the unique run of $\auta$ on the stack content: If $\aut$ reaches a stack content~$\bot(X_1, q_1, q_1') (X_2, q_2, q_2') \cdots (X_s, q_s, q_s') $, then we have 
\[
q_j = \delta_\auta^*(q_\initmark
^\auta,\bot X_1 \cdots X_j)
\]
for every $1 \leqslant j \leqslant s$ as well as $q_j' = q_{j-1}$ for every $2 \leqslant j \leqslant s$ and $q_1' = u$. Here, $\delta_\auta^*$ is the standard extension of $\delta_\auta$ to words. 
The adapted PDA is still \HD, as no new nondeterminism has been introduced, and keeps track of the state of $\auta$ reached by processing the stack content as well as the shifted sequence of states of $\auta$, which is useful when popping the top stack symbol:
If the topmost stack symbol~$(X, q, q')$ is popped of the stack then $q'$ is the state of $\auta$ reached when processing the remaining stack.

Now, we double the state space of $\aut$, making one copy final, and adapt the transition relation again so that a final state is reached whenever $\aut$ would reach a configuration in $C$.
Whether a configuration in $C$ is reached can be determined from the current state of $\aut$ being simulated, as well as the top stack symbol containing information on the run of $\auta$ on the current stack content.
The resulting PDA~$\aut'$ recognises $L(\aut)$ and has on every word~$w \in L(\aut)$ an accepting run without trailing $\epsilon$-transitions. 
Furthermore, an EoW-resolver for $\aut$ can be turned into a (standard) resolver for $\aut'$, as the tracking of stack contents and the doubling of the state space does not introduce nondeterminism.
\end{proof}

As $\auta$ has at most exponential size,  $\aut'$ is also exponential (both in the size of $\aut$).
This exponential blowup incurred by removing the end-of-word marker is in general unavoidable.
In Theorem~\ref{theorem:succinctness}, we show that the language~$L_n$ of bit strings whose $n^\text{th}$ bit from the end is a $1$ requires exponentially-sized \HD-PDA.
On the other hand, it is straightforward to devise a polynomially-sized \HD-PDA~$\aut_{\text{EoW}}$ with EoW-marker recognizing $L_n$: the underlying PDA stores the input word on the stack, guesses nondeterministically that the word has ended, uses $n$ (trailing) $\epsilon$-transitions to pop of the last $n-1$ letters stored on the stack, and then checks that the topmost stack symbol is a $1$. 
With an EoW-resolver, the end of the input does not have to be guessed, but is marked by the EoW-marker.
Hence, $\aut_{\text{EoW}}$ is \HD.

Finally, let us remark that the history-determinism of PDA and context-free languages is undecidable.
These problems were shown to be undecidable for $\omega$-\HD-PDA and $\omega$-\HD-CFL by reductions from the inclusion and universality problem for PDA on finite words~\cite[Theorem 6.1]{LZ22}. The same reductions also show that these problems are undecidable over PDA on finite words.

\begin{thm}
The following problems are undecidable:
\begin{enumerate}
    \item Given a PDA~$\aut$, is $\aut$ an \HD-PDA?
    \item Given a PDA~$\aut$, is $L(\aut) \in $ \gfgcfl? 
\end{enumerate}
\end{thm}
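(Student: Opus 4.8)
The plan is to reduce from the \emph{universality problem} for PDA over finite words, ``is $L(\aut) = \Sigma^*$?'', which is undecidable~\cite{Hopcroft}. Both reductions reuse the gadget behind the corresponding $\omega$-\HD-PDA results of~\cite{LZ20}; the essential observation is that those constructions start from a finite-word problem and never inspect an acceptance condition, so they transfer verbatim. Concretely, I would fix once and for all a context-free language $L_0$ that is \emph{not} history-deterministic (such a language exists since $\gfgcfl \subsetneq \cfl$), over an alphabet $\Gamma$ disjoint from $\Sigma$, together with a PDA recognising it. Given an input PDA $\aut$ over $\Sigma$, I would build in polynomial time a PDA $\aut''$ over $\Sigma \cup \set{\#} \cup \Gamma$ recognising
\[
L'' \;=\; L(\aut)\cdot \#\cdot \Gamma^* \;\cup\; \Sigma^*\cdot\#\cdot L_0 ,
\]
by guessing at the separator~$\#$ whether to certify $w \in L(\aut)$ (simulating $\aut$ on the $\Sigma$-prefix, then reading an arbitrary $\Gamma$-suffix) or to verify that the $\Gamma$-suffix lies in $L_0$.

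The analysis of $L''$ is where the two cases separate. If $L(\aut) = \Sigma^*$, then $L'' = \Sigma^*\#\Gamma^*$ is regular, hence in $\dcfl \subseteq \gfgcfl$. If $L(\aut) \neq \Sigma^*$, pick any $w_0 \notin L(\aut)$; then the left quotient $(w_0\#)^{-1}L''$ equals exactly $L_0$, since the first disjunct contributes nothing for the prefix $w_0$. I would then invoke the closure of $\gfgcfl$ under left quotient by a fixed word: given a resolver for an \HD-PDA, running it on the fixed prefix $w_0\#$ reaches a fixed configuration (the chosen transitions depend only on the letters read so far, never on the remainder of the input), and restarting from that configuration with the shifted resolver witnesses \HDness of the quotient language. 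Consequently $L_0 \notin \gfgcfl$ forces $L'' \notin \gfgcfl$. This establishes that $L(\aut'') \in \gfgcfl$ if and only if $L(\aut) = \Sigma^*$, proving undecidability of the second problem.

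For the first problem I would reuse the same $\aut''$ and argue that $\aut''$ is history-deterministic if and only if $L(\aut) = \Sigma^*$. The ``only if'' direction is immediate from the previous paragraph: if $L(\aut) \neq \Sigma^*$ then $L(\aut'') \notin \gfgcfl$, so \emph{no} PDA for $L''$ is \HD, and in particular $\aut''$ is not. The undecidability of universality then transfers to \HDness.

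The ``if'' direction is the delicate point and the main obstacle: when $L(\aut) = \Sigma^*$ one must exhibit a resolver for the \emph{specific} automaton $\aut''$, and the embedded simulation of $\aut$ is nondeterministic and need not be history-deterministic on its own. The two facts I expect to use here are that a resolver is an \emph{arbitrary} function of the run prefix (so it may perform unbounded computation, and in particular decide membership of any already-read word in $L(\aut)$), and that the construction can be arranged so that every nondeterministic choice affecting acceptance is committed only once the relevant $\Sigma$-prefix has been fully processed; since $L(\aut)=\Sigma^*$, that prefix is always accepted and the resolver can then select an accepting continuation. Reproducing this deferral on a single stack — that is, faithfully carrying the gadget of~\cite{LZ20} into the finite-word setting while keeping $\aut''$ correct on \emph{all} inputs — is, I expect, the only genuine technical work; everything else is bookkeeping.
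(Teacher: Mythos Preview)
Your treatment of the second problem (is $L(\aut)\in\gfgcfl$?) is correct and essentially coincides with what the paper imports from~\cite{LZ20}: reduce from universality via the language $L''=L(\aut)\cdot\#\cdot\Gamma^*\cup\Sigma^*\cdot\#\cdot L_0$ with $L_0\in\cfl\setminus\gfgcfl$, and use closure of \gfgcfl\ under left quotient by a fixed word to handle the non-universal case.

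For the first problem, however, there is a genuine gap that your ``deferral'' remark does not close. In your $\aut''$, the branch certifying $w\in L(\aut)$ simulates the nondeterministic $\aut$ online while reading the $\Sigma$-prefix. A resolver may compute arbitrary predicates, but it must still commit to a \emph{single} transition at every step; even when $L(\aut)=\Sigma^*$, the partial run it has produced may sit in a configuration from which not every continuation is accepted (take, e.g., a universal $\aut$ that guesses the parity of $|w|$ at the outset and then verifies it). Your proposed deferral---postpone all commitments until $\#$---cannot be realised without destroying the reduction: once $\aut''$ stops simulating $\aut$ online, it can no longer check membership in $L(\aut)$ at all (the prefix sits reversed on the stack, and verifying it would itself consume the stack), so $L(\aut'')$ becomes independent of $\aut$. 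In short, the two requirements ``$L(\aut'')$ depends on $L(\aut)$'' and ``$\aut''$ is \HD\ whenever $\aut$ is universal'' are in tension for this construction.

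The paper avoids this by using a \emph{different} source problem for item~(1): note its wording ``reductions from the inclusion \emph{and} universality problem''. The reduction from~\cite{LZ20} for \HDness of a specific automaton starts from (D)PDA inclusion, which is undecidable, and builds an automaton whose only nondeterminism is a single initial choice between two branches that are themselves deterministic (equivalently, \HD\ after the transformation of Lemma~\ref{lemma_inclusions}). Then $\aut'$ is \HD\ iff one branch's language contains the other; a fresh marker forces one direction to fail, yielding a clean iff with the desired inclusion. The crucial difference from your construction is that no branch embeds the unresolved nondeterminism of an input PDA, so the ``if'' direction is immediate.
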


%%%%%%%%%%%%%%%%%%%%%%%%%%%%%%%%%%%%%%%%%%%%%%%%%%%%%%%%%%%%
%%%%%%%%%%%%%%%%%%%%%%%%%%%%%%%%%%%%%%%%%%%%%%%%%%%%%%%%%%%%
\section{Expressiveness}
\label{sec_expressivity}
Here we show that \HD-PDA are more expressive than DPDA but less expressive than PDA.

\begin{thm} \label{thm:expressiveness}
\dcfl $\subsetneq$ \gfgcfl $\subsetneq$ \cfl.
\end{thm}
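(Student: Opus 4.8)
The two inclusions are given by Lemma~\ref{lemma_inclusions}, so the plan is to establish that each is strict. For \dcfl $\subsetneq$ \gfgcfl\ I would use the language promised in the introduction,
\[
L = \set{a^i \$ a^j \$ b^k \$ \mid k \leq \max(i,j)},
\]
and for \gfgcfl $\subsetneq$ \cfl\ I would exhibit a context-free language that no \HD-PDA recognises.

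\emph{$L$ is recognised by an \HD-PDA.} I would build a PDA that on $a^i$ pushes a block $A^i$, on the first $\$$ switches phase, on $a^j$ pushes a second block $B^j$ on top (reaching stack $\bot A^i B^j$), and on the second $\$$ makes its only nondeterministic choice, branching into a ``use~$j$'' mode and a ``use~$i$'' mode. In the ``use~$j$'' mode it reads $b^k$ popping one $B$ per letter, accepting (after the final $\$$) exactly when $k \leq j$; in the ``use~$i$'' mode it first pops all $B$'s with $\epsilon$-transitions and then reads $b^k$ popping $A$'s, accepting exactly when $k \leq i$. Since each branch accepts only when $k$ is at most the block it compares against, every accepting run witnesses $k \leq \max(i,j)$, so the underlying PDA recognises exactly $L$. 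It is \HD\ because the branching happens only after both $a$-blocks have been read: the run prefix at the second $\$$ determines $i$ and $j$, so a resolver can compute $\max(i,j)$ and steer into the matching branch, yielding an accepting run on every word of $L$.

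\emph{$L$ is not in \dcfl.} Here I would argue by contradiction using the closure of \dcfl\ under complement and under intersection with regular languages. If $L \in $ \dcfl, then $\overline{L} \cap R$ with $R = a^*\$a^*\$b^*\$$ is again in \dcfl, hence context-free. But this language is exactly
\[
M = \set{a^i \$ a^j \$ b^k \$ \mid k > \max(i,j)} = \set{a^i \$ a^j \$ b^k \$ \mid k > i \text{ and } k > j},
\]
and I would show $M$ is \emph{not} context-free via the pumping lemma applied to $s = a^n \$ a^n \$ b^{n+1} \$ \in M$ for $n$ above the pumping length. A routine case analysis on the location of the pumped factor $vx$ (inside either $a$-block, inside the $b$-block, or straddling a $\$$) shows that in every case some iterate $uv^t w x^t y$ falls outside $M$: pumping an $a$-block up forces $\max(i,j) \geq n+1 = k$, and pumping the $b$-block down forces $k \leq n = \max(i,j)$, while any factor meeting a $\$$ breaks the format. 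This contradiction gives $L \notin $ \dcfl, and the first strict inclusion follows.

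\emph{\gfgcfl $\subsetneq$ \cfl.} For the second strict inclusion I would take a classical context-free but nondeterministic language, for instance $N = \set{a^n b^n \mid n \geq 1} \cup \set{a^n b^{2n} \mid n \geq 1}$, which is plainly in \cfl, and prove $N \notin $ \gfgcfl. The essential difficulty, and the step I expect to be the main obstacle, is that I cannot simply invoke ``$N \notin $ \dcfl'': since \dcfl $\subsetneq$ \gfgcfl, nondeterminism may be genuinely useful as long as it is resolvable on-the-fly. The leverage I would use is that resolvers are \emph{future-blind}: two inputs sharing a prefix induce runs that coincide up to the configuration reached after that prefix. Consequently, for any \HD-PDA $\aut$ with resolver for $N$, the accepting run it produces on $a^n b^n$ is a prefix of the accepting run it produces on $a^n b^{2n}$, so both share the same (accepting) configuration $f_n$ reached after reading $a^n b^n$, from which $\aut$ must still be able to read $n$ further $b$'s and accept. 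I would then derive a contradiction purely from the finite control of $\aut$ (states and stack alphabet), pumping this correlated family of accepting runs to obtain an accepting run of $\aut$ on a word outside $N$, contradicting $L(\aut) = N$. The delicate point is exactly that the pumping must be carried out on $\aut$'s runs --- where future-blindness supplies the rigidity that determinism supplies in the classical \dcfl\ lower bound --- rather than on the resolver, whose history-dependence would otherwise defeat naive pumping.
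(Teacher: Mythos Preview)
Your treatment of the first strict inclusion matches the paper's: the same language~$B_2$, the same \HD-PDA (push both $a$-blocks, branch at the second~$\$$ according to which block is longer), and the same argument that $\overline{B_2}$ intersected with $a^*\$a^*\$b^*\$$ fails the context-free pumping lemma.

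For the second inclusion you also pick the paper's language and correctly isolate the crucial leverage: the resolver-induced accepting run on $a^nb^n$ is a prefix of the one on $a^nb^{2n}$, so there is a single accepting configuration~$f_n$ from which $\aut$ both halts acceptingly and can continue to accept $b^n$ more. But your final step --- ``pump this correlated family of accepting runs to obtain an accepting run on a word outside $N$'' --- is where the real work lies, and your proposal does not supply it. Pumping PDA runs of $\aut$ on words of $N$ is not by itself useful, since $N$ \emph{is} context-free: any pumped word obtained that way can perfectly well remain in $N$. You need a mechanism that welds the two facts (``accepting at $f_n$'' and ``can still accept $b^n$ from $f_n$'') into acceptance of something demonstrably non-context-free, and a bare pumping argument on the finite control does not do that.

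The paper's argument is constructive rather than pumping-based. From a hypothetical \HD-PDA~$\aut$ for $N$ it builds a PDA~$\widehat{\aut}$ by adding a second copy of the state set in which every $b$-transition is replaced by a $c$-transition, together with an $\epsilon$-bridge from each final state of $\aut$ into its copy. Your run-prefix observation is precisely what shows $\widehat{\aut}$ accepts every $a^nb^nc^n$: switch to the copy at $f_n$, then replay the $b^n$-suffix of the resolver's run on $a^nb^{2n}$ as $c^n$. Conversely, any accepting run of $\widehat{\aut}$ on $a^nb^nc^m$ projects to an accepting run of $\aut$ on $a^nb^{n+m}$, forcing $m\in\{0,n\}$. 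Thus $L(\widehat{\aut}) = N \cup \{a^nb^nc^n \mid n \ge 0\}$, which is not context-free --- the desired contradiction. This construction is the missing idea in your sketch.
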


To show that \HD-PDA are more expressive than deterministic ones, we consider the language~$
    B_2 = \{a^i \$ a^j \$ b^k \$ \mid k \leqslant \max(i,j)\}$. It is recognised by the PDA $\aut_{B_2}$
depicted in Figure \ref{fig:gfgpdaexample}, hence $B_2 \in $ \cfl. 
We show that $B_2 \in $ \gfgcfl by proving that $\aut_{B_2}$ is \hd: the only nondeterministic choice, between moving to $p_1$ or to $p_2$ upon reading the second $\$$, can be made only based on the prefix~$a^i\$ a^j$ processed so far, which deterministically lead to $q_2$.

\begin{figure}
    \centering
\begin{tikzpicture}[thick]
\def\y{.9}
\def\x{1.5}
\def\b{10}
\tikzset{every state/.style = {minimum size =22}}
\node[state] (i) at (0*\x,0) {$q_1$};
\node[state] (u) at (2*\x, 0) {$q_2$};
\node[state] (ad) at (4*\x,0) {$p_1$};
\node[state] (bd) at (6*\x,0) {$p_2$};
\node[state,fill=lightgray] (fin) at (8*\x,0) {$f$};

\path[-stealth]
(-.75,0) edge (i)
(i) edge[loop above] node[] {$a, X\mid Xa$} ()
 (i) edge node[below] {$\$,X\mid X\$$} (u)
 (bd) edge[] node[below] {$\$,X\mid X\$$} (fin)
% % (i) edge[bend right=90] node[below] {$c,X\mid X$} (acc)
 (u) edge[loop above] node[] {$a, X\mid Xa$} ()
 (u) edge[] node[below,near end,xshift=-.2cm] {$\$, X \mid X$} (ad)
 (u) edge[bend right] node[below] {$\$, X \mid X$} (bd)
 (ad) edge[loop above] node[] {$\epsilon, a \mid \epsilon $} ()
 (bd) edge[loop above] node[] {$b, a \mid \epsilon $} ()
 (ad) edge[] node[below,near start] {$\epsilon, \$ \mid \epsilon$} (bd)
;
\end{tikzpicture}

    \caption{A PDA $\aut_{B_2}$ recognising $B_2$.
    Grey states are final, and $X$ is an arbitrary stack symbol.}
    \label{fig:gfgpdaexample}
\end{figure}

\begin{lem} \label{lem_expressiveness}
    $B_2 \in$ \gfgcfl.
\end{lem}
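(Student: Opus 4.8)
The plan is to prove that $\aut_{B_2}$ is \hd\ by constructing a resolver~$\hstrat$; together with the fact that $\aut_{B_2}$ recognises $B_2$, this shows that $B_2$ lies in \gfgcfl. The starting point is the observation that $\aut_{B_2}$ behaves deterministically at every mode except at state~$q_2$ when reading the second~$\$$, where both the transition to $p_1$ and the transition to $p_2$ are enabled. At every other mode the current top-of-stack symbol together with the next input letter leaves at most one applicable transition: the $a$-loops at $q_1$ and $q_2$, the push of the first~$\$$, the two $\epsilon$-transitions at $p_1$ (selected according to whether the top symbol is $a$ or~$\$$), the $b$-loop at $p_2$, and the closing $\$$-transition into the final state~$f$. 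Since the run prefix determines the current configuration, the resolver is forced everywhere but at this single choice.

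For that choice I would let $\hstrat$ compare $i$ and $j$, both of which are recoverable from the run prefix handed to it — $i$ is the number of $a$-loops taken in $q_1$ and $j$ the number taken in $q_2$ — and then move to $p_1$ when $i \ge j$ and to $p_2$ when $i < j$. The reason this is the correct rule is that the branch through $p_2$ matches the incoming $b$'s directly against the $a^j$ on top of the stack, hence accepts precisely the continuations $b^k\$$ with $k \le j$, whereas the branch through $p_1$ first uses its $\epsilon$-transitions to discard $a^j$ and the separating~$\$$, exposing $a^i$, and hence accepts precisely $b^k\$$ with $k \le i$. Choosing the branch associated with $\max(i,j)$ therefore secures the largest available bound on~$k$.

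To confirm correctness I would take an arbitrary $w = a^i \$ a^j \$ b^k \$ \in B_2$, so that $k \le \max(i,j)$, and follow the run induced by $\hstrat$. It reaches $q_2$ with stack $\bot a^i \$ a^j$. If $i \ge j$, it moves to $p_1$, pops $a^j$ and then the~$\$$ to arrive at $p_2$ with stack $\bot a^i$, and since $k \le \max(i,j) = i$ it reads $b^k$ before the final~$\$$ carries it into~$f$. If $i < j$, it moves to $p_2$ directly and, since $k \le \max(i,j) = j$, again reads $b^k$ and then the final~$\$$ into~$f$. In either case the run ends in~$f$ exactly with the transition processing the last letter, so it is accepting and free of trailing $\epsilon$-transitions, as the definition demands.

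The conceptual heart of the argument — and the step I expect to require the most care — is that $\hstrat$ must commit to $p_1$ or $p_2$ before reading any~$b$, i.e.\ before $k$ is known. The resolution is precisely that the correct choice depends only on $i$ and $j$, which lie in the already-processed prefix, and never on the future quantity~$k$; a resolver may freely count and compare $i$ and $j$, a capability that a deterministic pushdown automaton constrained by its LIFO stack lacks here — which is exactly what the companion lower bound, that $B_2$ is not in \dcfl, makes precise. Beyond this, I would only need to check the degenerate cases: when $i = j$ either branch succeeds, when $k = 0$ no $b$ is read and the run proceeds straight to the final~$\$$, and one verifies that the $p_1$ branch halts its $\epsilon$-pops at the separating~$\$$ so that the underlying $a^i$ survives for the $b$-loop.
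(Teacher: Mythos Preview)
Your proposal is correct and follows essentially the same approach as the paper: both identify that the only nondeterministic choice in $\aut_{B_2}$ occurs at~$q_2$ upon reading the second~$\$$, and both resolve it by comparing $i$ and $j$ from the already-processed prefix, sending the run to $p_1$ when the first block is (at least as) long and to $p_2$ otherwise. Your version is simply more explicit, tracing the induced run and verifying the absence of trailing $\epsilon$-transitions, whereas the paper leaves these routine checks implicit.
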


\begin{proof}
Let us summarise the behaviour of the pushdown automaton $\aut_{B_2}$
recognising $B_2$ (see Figure \ref{fig:gfgpdaexample}).
First, the automaton copies the two blocks of $a$'s on
the stack.
Then, when it processes the second $\$$,
it transitions nondeterministically to either $p_1$ or $p_2$.
In $p_1$, it erases the second $a$-block from the stack,
so that the first block is at the top of the stack,
and then transitions to $p_2$.
In $p_2$, the automaton compares the number of $b$'s in the input
with the number of $a$'s in the topmost block of the stack.
If the latter is larger than or equal to the former,
$\aut_{B_2}$ pops one $a$ for each $b$ in the input,
and then transitions to the final state when it processes
the third $\$$.

When processing the second $\$$,
knowing whether the first or second block of the prefix contains more $a$'s
allows the nondeterminism to be resolved:
if the first block contains more $a$'s, take the transition to the state $p_1$,
if the second block contains more $a$'s, take the transition to the state $p_2$.
\end{proof}

Now, in order to show that $B_2$ is not in $\dcfl$,
we prove that its complement $B_2^c$ is not a context-free language.
Since $\dcfl$ is closed under complementation, this implies the desired result.

\begin{lem} \label{lem_not_DCFL}
    The complement $B_2^c$ of $B_2$ is not in $\cfl$.
\end{lem}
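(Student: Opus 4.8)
The plan is to argue by contradiction, combining the closure of \cfl under intersection with regular languages with the pumping lemma for context-free languages. Suppose, for contradiction, that $B_2^c$ is in \cfl. The language $R = a^* \$ a^* \$ b^* \$$ is regular, and \cfl is closed under intersection with regular languages, so the language
\[ L = B_2^c \cap R = \{ a^i \$ a^j \$ b^k \$ \mid k > \max(i,j) \} \]
would also be context-free. This intersection is the essential first step: on its own $B_2^c$ contains every malformed word (those not matching $a^* \$ a^* \$ b^* \$$), which makes the pumping conditions trivially satisfiable, so I would first isolate the genuinely difficult instances by cutting down to well-formed words.

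Next I would apply the pumping lemma to $L$. Let $p$ be the pumping constant and consider the word $w = a^p \$ a^p \$ b^{p+1} \$ \in L$ (here $k = p+1 > p = \max(i,j)$). Writing $w = uvxyz$ with $|vxy| \le p$ and $|vy| \ge 1$, two structural observations constrain the factorisation. First, neither $v$ nor $y$ may contain a $\$$, since otherwise pumping would change the number of $\$$-symbols and push the word out of $R$, hence out of $L$. Second, because each of the three blocks of $w$ has length at least $p$ while $|vxy| \le p$, the factor $vy$ can meet at most two \emph{adjacent} blocks.

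The core of the argument is then a short case analysis on which blocks $vy$ touches, and the key design choice is to take the final block $b^{p+1}$ of length exactly one more than each $a$-block, so that the strict inequality $k > \max(i,j)$ is maximally fragile. If $vy$ lies inside, or straddles, the two $a$-blocks, pumping \emph{up} ($n=2$) raises $\max(i',j')$ to at least $p+1 = k'$, destroying strictness; if $vy$ lies inside the $b$-block, pumping \emph{down} ($n=0$) drops $k'$ to at most $p$; and if $vy$ straddles the second $\$$ (so a $v$ made of $a$'s and a $y$ made of $b$'s are pumped together while $i=p$ stays fixed), one checks that $n=0$ works when $y$ is nonempty and $n=2$ works when $y$ is empty, again forcing $k' \le \max(i',j')$. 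In every case the pumped word leaves $L$, contradicting the pumping lemma and establishing $L \notin \cfl$, hence $B_2^c \notin \cfl$.

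The main obstacle will be the presence of the $\max$: the naive attempt fails because pumping the $a$'s \emph{down} keeps $k > \max(i,j)$, so neither the word nor a single pumping direction suffices on its own. The argument only succeeds because the tight exponent $k = p+1$ lets even one pumped symbol break the inequality in the right direction, and because the mixed straddling case—where one $a$-block and the $b$-block grow or shrink simultaneously—must be handled by tracking both affected blocks rather than a single counter.
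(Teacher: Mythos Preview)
Your proposal is correct and follows essentially the same approach as the paper: intersect $B_2^c$ with the regular language $a^*\$a^*\$b^*\$$, apply the pumping lemma to the witness $a^p\$a^p\$b^{p+1}\$$, and derive a contradiction by a short case analysis. The only cosmetic difference is that the paper organises its cases by whether one of the pumped factors lies in $\{b\}^+$ (and never invokes the length bound $|vxy|\le p$), whereas you use that bound to restrict $vy$ to at most two adjacent blocks; both splits lead to the same pump-up/pump-down conclusions.
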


\begin{proof}
    Assume, for the sake of contradiction, that
    the complement $B_2^{c}$ of $B_2$ is in $\cfl$.
    Now consider the regular language
    \begin{align*}
    A &= \{a^i \$ a^j \$ b^k \$| i,j,k \in \mathbb{N}\}.
    \end{align*}
    Since the intersection of a context-free language and a regular language is context-free,
    we have that $B_2^{c}\cap A \in \cfl$.
    Therefore, $B_2^{c} \cap A$ satisfies the pumping lemma for context-free languages~(see, e.g.~\cite{Hopcroft}):
    there exists $m \in \mathbb{N}$ such that the word $z = a^m \$ a^m \$ b^{m+1}\$ \in B_2^{c} \cap A$
    can be decomposed as $z = uvwxy$ such that
    \begin{enumerate}
        \item\label{item:pump_empty} $|vx| \geqslant 1$;
        \item\label{item:pump_iter} $uv^nwx^ny \in B_2^{c}\cap A$ for every $n \geqslant 0$.
    \end{enumerate}
    Note that Item \ref{item:pump_iter} directly implies that both $v$ and $x$ are in the language
    $\{a\}^* \cup \{b\}^*$,
    as otherwise $uv^2wx^2y$ is not in $A$.
    On top of that, Item~\ref{item:pump_empty} implies that either $v$ or $x$ is in
    $\{a\}^+ \cup \{b\}^+ $.
    We conclude by proving, through a case distinction, that Item~\ref{item:pump_iter}
    cannot hold as either $uwy$ or $uv^2wx^2y$ is in $B_2$.
    \begin{itemize}
        \item
        Assume that neither $v$ nor $x$ is in $\{b\}^+$.
        Then either $v$ or $x$ is in $\{a\}^+$,
        hence $uv^2wx^2y= a^{m_1} \$ a^{m_2} \$ b^{m+1}\$$ for some $m_1,m_2 \geqslant m$
        such that either $m_1>m$ or $m_2>m$.
        In both cases, we get $uv^2wx^2y \in B_2$.
        \item
        Assume that either $v$ or $x$ is in $\{b\}^+$.
        Then pumping $v$ and $x$ down in $z$ reduces the size
        of at most one of the $a$-blocks:
        we have that $uwy = a^{m_1} \$ a^{m_2} \$ b^{m_3 +1}\$$
        for $m_1,m_2,m_3 \leqslant m$ such that $m_3<m$, and either $m_1 = m$ or $m_2 = m$.
        In both cases, we get $uwy \in B_2$.
    \end{itemize}
    As every possible case results in a contradiction, 
    our initial hypothesis is false:
    $B_2^c \not\in$ \cfl.
\end{proof}

The previous two lemmata and Lemma~\ref{lemma_inclusions} yield \dcfl $\subsetneq$ \gfgcfl.

Finally, to show that PDA are more expressive than \HD-PDA, we consider the language $L = \{a^nb^n \mid n \geqslant 0\} \cup \{a^nb^{2n} \mid n \geqslant 0\}$. 
We note that $L \in $ \cfl while we show below $L \notin $ \gfgcfl using arguments similar to the classical proof showing that $L$ is not \dcfl.
Hence, the following lemma completes the proof of Theorem~\ref{thm:expressiveness}.

\begin{lem} \label{lem:expressiveness1}
$L \notin $ \gfgcfl.
\end{lem}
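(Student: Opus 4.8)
The plan is to assume, towards a contradiction, that some \gfgpda~$\aut$ with resolver~$\hstrat$ satisfies $L(\aut) = L$, and to extract from $\hstrat$ a single configuration that is forced to detect both endpoints $b^n$ and $b^{2n}$ at once. The key is that $\hstrat$ only sees the run prefix: since $a^n$ is a common prefix of $a^n b^n$ and $a^n b^{2n}$, the resolver processes the leading block of $a$'s identically in both cases, so the $\hstrat$-guided run reaches one and the same configuration~$c_n$ after reading $a^n$. Writing $\mathrm{Acc}(c)$ for the set of numbers $k$ such that $\aut$ has a run from the configuration~$c$ reading $b^k$ and ending in a final state, I would then show $\mathrm{Acc}(c_n) = \{n,2n\}$ for every $n \ge 1$: the inclusion $\supseteq$ holds because the accepting $\hstrat$-runs on $a^n b^n, a^n b^{2n} \in L$ pass through $c_n$, and $\subseteq$ holds because prepending the $\hstrat$-run reaching $c_n$ to any run from $c_n$ reading $b^k$ into a final state yields an accepting run of $\aut$ on $a^n b^k$, whence $a^n b^k \in L(\aut) = L$ and $k \in \{n,2n\}$. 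This step, and only this step, uses history-determinism: for an ordinary PDA the two endpoints may be handled by two different runs through two different configurations, which is exactly how $L$ lands in \cfl.

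It then remains to show that no fixed PDA can have, for all $n$, configurations~$c_n$ with $\mathrm{Acc}(c_n) = \{n,2n\}$, i.e.\ two-point acceptance sets whose gap grows without bound. If the stack heights of the $c_n$ stay bounded, the $c_n$ range over a finite set, so $c_{n_1} = c_{n_2}$ for some $n_1 < n_2$, forcing $\{n_1, 2n_1\} = \{n_2, 2n_2\}$ and hence $n_1 = n_2$, a contradiction. Otherwise the stack heights are unbounded, and a standard finite-state pumping of the (deterministic) $\hstrat$-guided run on the all-$a$ prefix exposes a repeatable stack block~$\beta$: there are a state~$p$, a stack~$\gamma$, and a fixed $d \ge 1$ such that $\aut$ reaches $(p, \gamma\beta^t)$ after reading $a^{m_t}$ with $m_t = n_1 + t d$, and the same existence-of-runs argument as above gives $\mathrm{Acc}((p,\gamma\beta^t)) \subseteq \{m_t, 2m_t\}$, so this set has at most two elements.

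Finally I would derive the contradiction from the repeated identical blocks~$\beta$. Intuitively, each copy of $\beta$ on the stack can be consumed while reading either roughly $d$ or roughly $2d$ letters~$b$ — the two speeds underlying acceptance at $m_t$ and at $2m_t$ — so mixing these choices over the $t$ blocks produces accepting runs reading $\Theta(t)$ many distinct numbers of $b$'s strictly between $m_t$ and $2m_t$, contradicting $|\mathrm{Acc}((p,\gamma\beta^t))| \le 2$. The main obstacle is to make this blending rigorous: consuming one block changes the control state, so the per-block choices do not compose verbatim. I would resolve this by an excursion analysis — tracking, for each way of popping a single $\beta$, the entering and exiting states — and using pigeonhole over the finitely many states to isolate a self-returning excursion that still admits two different $b$-lengths; iterating and recombining it then yields three distinct values in $\mathrm{Acc}((p,\gamma\beta^t))$. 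Turning this state bookkeeping into a clean argument is the technical heart of the proof; everything before it is either resolver consistency or routine pumping.
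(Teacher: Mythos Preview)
Your approach diverges substantially from the paper's, and the part you flag as the ``technical heart'' is a genuine gap rather than routine bookkeeping.

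The paper exploits a stronger consequence of history-determinism than you do. You observe that the resolver-runs on $a^nb^n$ and on $a^nb^{2n}$ agree on the prefix $a^n$; the paper observes they agree on the longer common prefix $a^nb^n$. Hence the configuration~$f_n$ reached by the resolver after $a^nb^n$ is final (since $a^nb^n\in L$) \emph{and} admits a continuation reading $b^n$ to another final configuration (the resolver-run on $a^nb^{2n}$ passes through $f_n$). The paper then duplicates the state set, relabels every $b$-transition in the copy as a $c$-transition, and adds $\epsilon$-edges from each original final state to its copy. The resulting PDA~$\widehat{\aut}$ accepts $a^nb^nc^n$ for every $n$ (take the resolver-run to $f_n$, jump to the copy, simulate the $b^n$-continuation on $c^n$), and one checks directly that $L(\widehat{\aut}) = L \cup \{a^nb^nc^n \mid n \ge 0\}$, which is not context-free. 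No pumping on the \gfgpda\ is needed at all.

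In your proposal, the bounded-height case and the identity $\mathrm{Acc}(c_n)=\{n,2n\}$ are fine. The unbounded case has two real problems. First, the step positions you pump on the resolver-run over $a^\omega$ need not be at any $c_n$: between $c_n$ and $c_{n+1}$ the resolver, seeing next letter~$a$, may take $\epsilon$-transitions that dip below $\sh(c_n)$, so the step configuration~$(p,\gamma)$ may satisfy only $\mathrm{Acc}((p,\gamma))\subseteq\{k,2k\}$ with no witnessed elements --- and then your ``two speeds'' never get off the ground. Second, and more seriously, even granting $\mathrm{Acc}((p,\gamma\beta))=\{m_1,2m_1\}$, the excursion consuming one $\beta$ while reading $b^{\ell}$ ends in some state~$q$, and the one reading $b^{\ell'}$ in some~$q'$; neither need equal $p$, so you cannot iterate either. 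Your pigeonhole sketch would have to produce, from this single pair of excursions, a self-returning one admitting two distinct $b$-lengths --- but pigeonhole over $|Q|$ states only yields a repeated state along \emph{one} excursion, not two excursions of different lengths sharing both endpoints. Closing this gap looks closer to a full interchange argument than to bookkeeping; the paper's reduction sidesteps it entirely.
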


\begin{proof}
We show that there does not exist an \HD-PDA recognising $L$.
In fact, we show that if there exists an \HD-PDA $\aut$ recognising $L$, then we can construct a PDA $\widehat{\aut}$ recognising the language $\widehat{L}=L \cup \{a^n b^n c^n  \mid  n \geqslant 0\}$.
A straightforward application of the pumping lemma for context-free languages~(see, e.g.~\cite{Hopcroft}) to words of the form~$a^nb^nc^n$ shows that $\widehat{L}$ is not in \cfl. 
Thus, we reach a contradiction.

The idea behind the construction is to replicate the part of the control unit of $\aut$ which processes the suffix $b^n$ of an input word $a^nb^{2n}$ with the difference that in the newly added parts, the transitions caused by input symbol $b$ are replaced with similar ones for input symbol $c$.
This new part of the control unit may be entered after $\widehat{\aut}$ has processed $a^nb^n$.

We now construct the PDA $\widehat{\aut}$ from $\aut$ as follows.
Let $\aut = (Q, \Sigma, \Gamma, q_I, \Delta, F)$ with $Q=\{q_0, q_1, \ldots, q_n\}$, and let $q_0=q_\initmark$.
Now consider $\widehat{\aut} = (Q \cup \widehat{Q}, \Sigma, \Gamma, q_I, \Delta \cup \widehat{\Delta}, F \cup \widehat{F})$ with
$\widehat{Q} =\{\widehat{q}_0, \widehat{q}_1, \ldots, \widehat{q}_n\}$, $\widehat{F} =\{\widehat{q_i}  \mid  q_i \in F\}$,
and $\widehat{\Delta}$ includes the following additional transitions:
\begin{enumerate}[1.]
    \item $\{(q_f, X, \epsilon, \widehat{q}_f, X) \mid q_f \in F, X \in \Gammabot\}$: switch from the original final states to the new states.
    \item $\{(\widehat{q}_i, X, c, \widehat{q}_j, \gamma) \mid (q_i, X, b, q_j, \gamma) \in \Delta \}$: replicate the original $b$-transitions by $c$-transitions in the new states.
    \item \{$(\widehat{q}_i, X, \epsilon, \widehat{q}_j, \gamma) \mid (q_i, X, \epsilon, q_j, \gamma) \in \Delta \}$: replicate all $\epsilon$-transitions. 
\end{enumerate}

Now we show that $L(\widehat{\aut})=\widehat{L}$.
First we show that $L(\widehat{\aut}) \subseteq \widehat{L}$.
Consider a word $w \in L(\widehat{\aut})$.
There may be two cases:
\begin{enumerate}[(i)]
    \item Assume $\widehat{\aut}$ has an accepting run on $w$ that does not visit a state in $\widehat{Q}$. In this case, we have that $w$ is in $L(\aut)=L\subseteq \widehat{L}$.
    
    \item Assume there exists an accepting run of $\widehat{\aut}$ on $w$ that visits a state in $\widehat{Q}$.
    Since $\aut$ recognises $L$, and by construction of $\widehat{\aut}$, a state $\widehat{q}_i \in \widehat{Q}$ can be reached from a state $q_i \in Q$ only if $\widehat{q}_i \in \widehat{F}$ and $q_i \in F$, and the corresponding transition is an $\epsilon$-transition, we have that starting from the initial configuration $(q_\initmark, \bot)$, a state in $\widehat{Q}$ is reached for the first time only after processing an input prefix $a^nb^n$ or $a^nb^{2n}$ for some $n \geqslant 0$.
    If this prefix of $w$ is $a^nb^{2n}$, then $w=a^nb^{2n}$.
    This is because if $w=a^nb^{2n}c^m$ for some $m > 0$ (recall that after visiting a state $\widehat{q}_i$ in $\widehat{Q}$, the only non-$\epsilon$ transitions possible are on the letter $c$), then by the construction of $\widehat{\aut}$, we have that $\aut$ can accept the word $a^nb^{2n}b^m$ which is not in the language $L$.
    On the other hand, let the prefix be $a^nb^n$ when a state $\widehat{q}_i \in \widehat{Q}$ is visited for the first time.
    Note that $\widehat{q}_i \in \widehat{F}$, and  let $(\widehat{q}_i, \gamma_i)$ be the corresponding configuration.
    If a sequence of transitions $\widehat{\tau}_i, \ldots, \widehat{\tau}_j$ from $(\widehat{q}_i, \gamma_i)$ to $(\widehat{q}_j, \gamma_j)$ is possible such that not all of $\widehat{\tau_i}, \ldots , \widehat{\tau_j}$ are $\epsilon$-transitions, that is, the transitions process $c^m$ for some $m \in \nats$, and $\widehat{q}_j \in \widehat{F}$, then a sequence of transitions $\tau_i, \ldots, \tau_j$ of the same length processing $b^m$ is possible from $(q_i, \gamma_i)$ to $(q_j, \gamma_j)$ with $q_j \in F$.
    Since this leads to an accepting run from $(q_\initmark, \bot)$ to $(q_j, \gamma_j)$ while visiting only the states in $Q$ on processing $a^nb^nb^m$ with $m > 0$, we have $m=n$, and hence $w=a^nb^nc^n \in \widehat{L}$.
    
    On the other hand, if all transitions $\widehat{\tau_i}, \ldots, \widehat{\tau_j}$ are $\epsilon$-transitions, then $w=a^nb^n \in \widehat{L}$.
\end{enumerate}

Now we prove the other direction, that is $\widehat{L} \subseteq L(\widehat{\aut})$.
Here, we rely on the fact that the accepting run of $\aut$ on $a^nb^n$ induced by a resolver~$\hstrat$ is a prefix of the accepting run of $\aut$ on $a^nb^{2n}$ induced by $\hstrat$. This allows to switch to the copied states~$\widehat{Q}$ after processing $a^nb^n$ and then process $c^n$ instead of $b^n$.

Consider a word $w \in \widehat{L}$ such that $w \in L$.
By construction of $\widehat{\aut}$, we have that $w \in L(\widehat{\aut})$ since $\widehat{\aut}$ accepts all words that are also accepted by $\aut$.
Now suppose that $w \in \widehat{L}$ but $w \notin L$, that is, $w$ is of the form $a^nb^nc^n$ for some $n \geqslant 1$.
Since by assumption, we have that $\aut$ is an \HD-PDA recognising the language $L$, there exists a resolver $r$ that for every word in $L$ induces an accepting run of the word in $L$.
Let  $(q_i, \gamma_i)$ be  the configuration of $\aut$ reached after processing the prefix $a^nb^n$ in the run induced by $r$ on the input $a^nb^{2n}$. 

Note that $q_i \in F$ since $r$ also induces an accepting run for the input $a^nb^n$.
Now if for the input $a^nb^{2n}$, the sequence of transitions chosen by $r$ from $(q_i, \gamma_i)$ after processing $a^nb^n$ is $\tau_i, \tau_{i+1}, \ldots, \tau_j$ with $(q_i, \gamma_i) \trans{\tau_{i}} (q_{i+1}, \gamma_{i+1}) \cdots (q_{j-1}, \gamma_{j-1}) \trans{\tau_j} (q_j,\gamma_j)$, with $q_j \in F$, and the sequence $\tau_i, \ldots, \tau_j$ processes $b^n$, then by the construction of $\widehat{\aut}$, there exists a sequence of transitions $\widehat{\tau}_{i}, \widehat{\tau}_{i+1}, \ldots, \widehat{\tau}_j$ with $(\widehat{q}_i, \gamma_i) \trans{\widehat{\tau}_i} (\widehat{q}_{i+1}, \gamma_{i+1}) \cdots (\widehat{q}_{j-1}, \gamma_{j-1}) \trans{\widehat{\tau}_j} (\widehat{q}_j, \gamma_j)$ and with $\widehat{q}_j \in \widehat{F}$ such that there is an $\epsilon$-transition from $(q_i, \gamma_i)$ to $(\widehat{q}_{i}, \gamma_{i})$ and the sequence $\widehat{\tau}_i, \widehat{\tau}_{i+1}, \ldots, \widehat{\tau}_j$ processes $c^n$, and hence $w \in L(\widehat{\aut})$.

Thus we have that $\widehat{L}=L(\widehat{\aut})$.
Hence we show that if $\aut$ is an \HD-PDA, then we can construct a PDA $\widehat{\aut}$ recognising $\widehat{L}$ which is not a CFL, thus leading to a contradiction to our assumption that $L$ is in \HD-CFL.
\end{proof}

Unambiguous context-free languages, i.e.\ those generated by grammars for which every word in the language has a unique leftmost derivation, are another class sitting between \dcfl and \cfl. 
Thus, it is natural to ask how unambiguity and history-determinism are related: To conclude this section, we show that both notions are independent.

\begin{thm}
\label{unamb_vs_gfg}
There is an unambiguous context-free language that is not in \gfgcfl and a language in \gfgcfl that is inherently ambiguous. 
\end{thm}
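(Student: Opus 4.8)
The plan is to prove the two halves of Theorem~\ref{unamb_vs_gfg} separately, using the languages already introduced in this section as building blocks.

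\textbf{An unambiguous language not in \gfgcfl.} The language $L = \{a^nb^n \mid n \geqslant 0\} \cup \{a^nb^{2n} \mid n \geqslant 0\}$ from Lemma~\ref{lem:expressiveness1} is the natural candidate. I already know $L \notin \gfgcfl$ by that lemma, so it suffices to exhibit an \emph{unambiguous} context-free grammar (or unambiguous PDA) for $L$. This is straightforward: the two component languages $\{a^nb^n\}$ and $\{a^nb^{2n}\}$ are each unambiguous, and they are disjoint except for the single word $\epsilon$ (when $n=0$ both give $\epsilon$). To get an unambiguous grammar I would take the grammar $S \to A \mid B$, with $A \to aAb \mid \epsilon$ generating $\{a^nb^n\}$ and $B \to aBbb \mid \epsilon$ generating $\{a^nb^{2n}\}$, and then resolve the overlap at $\epsilon$ by a small syntactic adjustment — e.g.\ route $\epsilon$ through exactly one of the two productions and have the other generate only $n \geqslant 1$ (write $A \to aAb \mid ab$ and let $S \to \epsilon \mid A \mid B$ with $B \to aBbb \mid abb$). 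Then every nonempty word of the form $a^nb^n$ with $n \geqslant 1$ is generated only via $A$, every word $a^nb^{2n}$ with $n \geqslant 1$ only via $B$ (the exponents $n$ versus $2n$ distinguish the two families since $n = 2n$ forces $n=0$), and $\epsilon$ has its single derivation, so every word has a unique leftmost derivation. I would verify unambiguity by checking that no nonempty word lies in both families, which is immediate.

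\textbf{An inherently ambiguous language in \gfgcfl.} Here I need a language that is (i) provably in \gfgcfl and (ii) inherently ambiguous, i.e.\ generated by no unambiguous grammar. The classical example of an inherently ambiguous context-free language is $L' = \{a^i b^j c^k \mid i = j \text{ or } j = k\}$; its inherent ambiguity is a standard result provable via an Ogden's-lemma counting argument on the diagonal words $a^nb^nc^n$. The task is therefore to show $L' \in \gfgcfl$, and this is exactly where history-determinism should help: a resolver, knowing the run prefix, can commit to checking $i = j$ or $j = k$ based on information available on-the-fly. Concretely I would build a PDA that, while reading the $a$-block, pushes the $a$'s, and upon the first $b$ makes a nondeterministic choice between a ``compare $i=j$'' branch and a ``compare $j=k$'' branch; the resolver must pick a branch that leads to acceptance using only the prefix seen so far.

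\textbf{The main obstacle.} The delicate point is that a resolver sees only the prefix up to and including the next letter, yet the choice between the two acceptance conditions for $L'$ may depend on the \emph{relative} sizes of blocks that are not all yet visible when the decision must be made. For the word $a^ib^jc^k$, at the moment the first $c$ is read the resolver already knows $i$ and $j$, so it can verify whether $i=j$; if so it accepts via that branch, and if not it must gamble that $j=k$ — but at that point $k$ is still unknown, so a wrong guess could be fatal. I expect the resolution to require a more cleverly designed automaton, likely one that reads $a$'s and $b$'s and defers the branch decision until the boundary between $b$'s and $c$'s, where $i$ and $j$ are both known: the resolver takes the $i=j$ branch precisely when $i=j$ holds, and otherwise takes the $j=k$ branch (which is the only remaining hope, and is correct exactly when the word is in $L'$). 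Thus the resolver's rule is ``check $i = j$ first from the prefix, and only if that fails commit to the $j=k$ comparison,'' which is sound because any word in $L'$ with $i \neq j$ must satisfy $j = k$. The heart of the argument is showing this resolver witnesses history-determinism, i.e.\ that it induces an accepting run on \emph{every} word of $L'$; I would need to argue carefully that the deferred decision point gives the resolver enough information, and that the remaining comparison on $c$'s introduces no further nondeterminism. The inherent-ambiguity half, while classical, still requires reproducing or citing the Ogden's-lemma argument, which I would include for completeness.
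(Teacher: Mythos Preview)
Your first half is correct and matches the paper exactly: the paper also uses $L = \{a^nb^n\} \cup \{a^nb^{2n}\}$ and remarks that an unambiguous grammar is easy to write down.

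Your second half has a genuine gap. You propose $L' = \{a^ib^jc^k \mid i=j \text{ or } j=k\}$ and sketch an \HD-PDA in which the resolver, at the $b$/$c$ boundary, chooses between an ``$i=j$'' branch and a ``$j=k$'' branch. But for this to work, \emph{both} branches must be available from the \emph{same} configuration reached after processing $a^ib^j$, and each branch --- as a nondeterministic run of the underlying PDA --- must actually verify its condition, since the PDA itself must recognise exactly $L'$. If your ``$i=j$'' branch simply accepts $c^*$ without checking anything, the PDA over-accepts (it then recognises all of $a^*b^*c^*$). So the branch must verify $i=j$. But after reading $a^ib^j$ with the stack holding, say, $A^iB^j$ (or any other single deterministic encoding), a PDA cannot compare the two blocks to test $i=j$; and the alternative encoding that \emph{does} enable the $i=j$ test (push $A$'s, pop one per $b$, leaving $|i-j|$ on the stack) destroys the information needed for the $j=k$ test. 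The two verifications require incompatible stack setups, and deferring the choice to the $b$/$c$ boundary does not help. Your sketch therefore does not yield an \HD-PDA for $L'$, and it is far from clear that $L'$ is in \gfgcfl\ at all.

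The paper avoids this trap by using a different witness, $B=\{a^ib^jc^k \mid i,j,k\geq 1,\ k\leq\max(i,j)\}$, whose \gfgcfl\ membership follows \emph{exactly} as for $B_2$: push $a^i\$b^j$, and at the third block the resolver --- knowing $i$ and $j$ --- either keeps the top $b$-block (if $j\geq i$) or pops it to expose the $a$-block (if $i>j$), then compares the $c$'s against that block. Here both branches (``$k\leq j$'' from the top block, ``$k\leq i$'' after popping) are genuinely checkable from the one stack content $a^i\$b^j$, which is precisely what fails for your $L'$. The paper then proves that $B$ is inherently ambiguous via a Maurer-style typing argument on variables; your planned Ogden's-lemma proof for $L'$ would have been standard, but the \HD\ half is where your argument breaks.
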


An unambiguous grammar for the language~$\{a^nb^n \mid n \geqslant 0\} \cup \{a^nb^{2n} \mid n \geqslant 0\} \notin$ \gfgcfl is easy to construct and we show that the language~$B=\{a^i b^j c^k \mid i,j,k \geqslant 1, k \leqslant \max(i,j)\}$ is inherently ambiguous.
Its inclusion in \gfgcfl is easily established using a similar argument as for the language~$B_2 = \{a^i \$ a^j \$ b^k \$ \mid k \leqslant \max(i,j)\}$ above. The dollars add clarity to the \HD-PDA but are cumbersome in the proof of inherent ambiguity.

% {Proof of Theorem~\ref{unamb_vs_gfg}} 
% \label{app:ambiguity}
We show that $B=\{a^i b^j c^k \mid i,j,k \geqslant 1, k \leqslant \max(i,j)\}$ is inherently ambiguous, i.e.\ for every grammar generating $B$ there is at least one word that has two different leftmost derivations.

We use standard definitions and notation for context-free grammars as in~\cite[Section~4.2]{Hopcroft}. We say that a grammar is reduced, if every variable is reachable from the start variable, every variable can be reduced to a word of terminals, and for no variable $A$, it holds that $A {\overset{*}{\implies}}A$.

Let $D(G) = \{A \in V \mid A  {\overset{*}{\implies}} x A y \text{ for some } x,y \text{ with } xy\neq \epsilon\}$.
An unambiguous CFG $G$ is called \emph{almost-looping}, if
\begin{enumerate}
    \item $G$ is reduced,
    \item all variables, possibly other than the start variable $S$, belong to $D(G)$, and
    \item either $S \in D(G)$ or $S$ occurs only once in the leftmost derivation of any word in $L(G)$.
\end{enumerate}

Now we state the following lemma from \cite{Maurer69}.
\begin{lem} \label{lem:Maurer}
For every unambiguous CFG $G$, there exists an unambiguous almost-looping CFG $G'$ such that $L(G)=L(G')$.
\end{lem}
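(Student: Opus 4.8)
The plan is to transform $G$ into $G'$ in two phases: first enforce reducedness (condition~1), and then successively eliminate the variables lying outside $D(G)$ to secure condition~2, after which condition~3 will hold automatically. The guiding observation is that a variable outside $D(G)$ in a reduced grammar is \emph{non-recursive}: if $A \overset{*}{\implies} x A y$ then either $xy = \epsilon$, contradicting reducedness, or $xy \neq \epsilon$, contradicting $A \notin D(G)$. Hence no sentential form reachable from such an $A$ contains $A$ again, and these variables can be inlined away.

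For the first phase, I would note that removing non-reachable and non-productive symbols preserves both $L(G)$ and unambiguity, as it only discards material that contributes to no derivation of a terminal word from $S$. The remaining requirement, that no useful variable $A$ satisfies $A \overset{*}{\implies} A$, comes for free: a nontrivial derivation $A \overset{*}{\implies} A$ could be inserted an arbitrary number of times into any derivation using $A$, producing distinct leftmost derivations of the same word and contradicting unambiguity. Thus, after cleaning up useless symbols, the grammar is already reduced.

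For the second phase, I would eliminate the variables of $V \setminus (D(G) \cup \{S\})$ one at a time. To eliminate a non-recursive $A$, I replace every production $B \to u A v$ by the productions $B \to u \gamma v$ ranging over all $A$-productions $A \to \gamma$ (treating several occurrences of $A$ simultaneously), and then delete $A$ and its productions; since $A$ is non-recursive no $\gamma$ mentions $A$, so one pass removes $A$ entirely and strictly decreases the number of variables, guaranteeing termination. This substitution preserves $L$, and it preserves unambiguity because it induces a bijection between parse trees: contracting each $A$-labelled node into its parent turns a parse tree of the current grammar into one of the new grammar, and this map is invertible precisely because $A$ never nests inside itself, so leftmost derivations are matched one-to-one. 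As inlining only shortcuts existing derivations, it leaves the reducedness conditions intact and does not alter the looping status of the surviving variables, so $D$ of the new grammar is exactly the old $D(G)$ minus $A$; iterating therefore drives $V \setminus (D \cup \{S\})$ to the empty set and yields condition~2. Finally, condition~3 is immediate: if the resulting $S$ lies in $D$ we are done, and otherwise $S$ is non-recursive, so no reachable variable carries $S$ on a right-hand side and $S$ appears exactly once, at the root, in every leftmost derivation.

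The main obstacle is the preservation of unambiguity, concentrated in the parse-tree bijection of the inlining step; the delicate point is to verify that contracting $A$-nodes really is injective on leftmost derivations, which hinges on $A$ being non-recursive so that no $A$-subtree can be confused with a surrounding occurrence. A secondary technical nuisance is the treatment of $\epsilon$-productions when checking that eliminating $A$ does not spuriously change membership in $D$ for the other variables; here one argues that, since inlining preserves the set of sentential forms over the surviving symbols, a witness $B \overset{*}{\implies} x B y$ with $xy \neq \epsilon$ survives the contraction, so no looping variable silently becomes non-looping.
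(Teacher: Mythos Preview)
The paper does not give its own proof of this lemma: it is simply quoted from Maurer~\cite{Maurer69}, with only the observation that the construction there preserves unambiguity. So there is no in-paper argument to compare your proposal against.

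Your sketch is correct and is essentially the standard way to establish the result. The two phases are exactly right: removing useless symbols already yields reducedness because a nontrivial cycle $A \Rightarrow^+ A$ together with usefulness of $A$ produces infinitely many leftmost derivations of some terminal word; and any variable $A \notin D(G)$ in a reduced grammar is genuinely non-recursive (any $A \Rightarrow^* \alpha A \beta$ with $\alpha\beta \neq \epsilon$ would put $A$ in $D(G)$, and $A \Rightarrow^+ A$ is excluded by reducedness), so it can be inlined away. Your parse-tree bijection for the unambiguity of the inlining step is the right idea; the one point worth making explicit is that in a reduced unambiguous grammar, two distinct pairs $(B\to uAv,\,A\to\gamma)$ and $(B\to u'Av',\,A\to\gamma')$ cannot yield the same inlined production, since otherwise one immediately builds two distinct parse trees with the same yield. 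With that, contraction of $A$-nodes really is a bijection, and your invariants (reducedness, and $D$ shrinking by exactly $\{A\}$) go through.
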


% An important property of the grammar obtained through the translation is that if the initial grammar is unambiguous, then so is the translated almost-looping grammar.\todo{something is off here.. check together}

An example of an almost-looping grammar for language $B$ is the following:
\begin{figure}[h]
% \centering
\hfill
\begin{minipage}{0.4\textwidth}
% \centering
$S \rightarrow S_1  \mid  S_2$ \\
$S_1 \rightarrow aS_1  \mid  aS_1c  \mid  aBc$\\
$B \rightarrow bB  \mid  b$
\end{minipage}
\hfill
\begin{minipage}{0.4\textwidth}
% \centering
$S_2 \rightarrow aS_2  \mid  aD$\\
$D \rightarrow bDc  \mid  bD  \mid  bc$
\end{minipage}
\hfill
%\centering
\caption{An example CFG for language $B$}
\label{fig:cfg1}
\end{figure}

Now we prove the following, using techniques inspired by Maurer's proof that $\set{a^i b^j c^ k \mid i,j,k \geqslant 1, i=j \vee j=k}$ is inherently ambiguous~\cite{Maurer69}.

\begin{lem}
The language $B$ is inherently ambiguous.
\end{lem}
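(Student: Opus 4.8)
The plan is to prove inherent ambiguity of $B=\{a^i b^j c^k \mid i,j,k \geqslant 1, k \leqslant \max(i,j)\}$ by contradiction: assume some unambiguous CFG generates $B$, invoke Lemma~\ref{lem:Maurer} to obtain an equivalent unambiguous almost-looping grammar $G'$, and then derive a combinatorial contradiction. The key observation is that $B$ decomposes into two overlapping regions: the words with $k \leqslant i$ (where the $a$-block dominates) and those with $k \leqslant j$ (where the $b$-block dominates). A grammar must pump $a$'s against $c$'s in some derivations and $b$'s against $c$'s in others. Since the almost-looping structure forces every non-start variable to be self-embedding (in $D(G')$), each variable generates derivations with a fixed ``pumping signature'' describing which letter-counts grow in lockstep. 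The strategy is to show that no single unambiguous assignment of such signatures can cover both overlapping regions of $B$ simultaneously.

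\textbf{Concretely, the steps I would carry out.} First, I would analyse the possible pumping patterns: for a self-embedding variable $A$ with $A \overset{*}{\implies} x A y$ and $xy \neq \epsilon$, the strings $x,y$ inserted must, by the structure of $A^*B^*C^*$-shaped words, pump one of the three blocks or synchronise two adjacent blocks. I would classify each variable by whether its pumping increments $(i)$, $(j)$, $(k)$, or a matched pair like $(i,k)$ or $(j,k)$. Second, I would focus on a carefully chosen family of ``boundary'' words where $k = \max(i,j)$ and $i,j,k$ are all large, say $z_n = a^n b^n c^n$, which lies in $B$ since $k=n\leqslant\max(n,n)$. For such a word both the constraint $k \leqslant i$ and $k \leqslant j$ are tight, so its leftmost derivation must use a pumping mechanism that keeps $c$'s bounded by \emph{one} of the two $a$/$b$ blocks. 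Third, I would argue that considering two word-families, one forcing a $(i,k)$-synchronised pump (words with $j$ small, $k$ close to $i$) and one forcing a $(j,k)$-synchronised pump (words with $i$ small, $k$ close to $j$), compels the grammar to contain variables of both synchronisation types. Finally, I would exhibit a single word that admits two distinct leftmost derivations—one routing through the $(i,k)$-machinery and one through the $(j,k)$-machinery—contradicting unambiguity, exactly as in Maurer's argument for $\{a^ib^jc^k \mid i=j \vee j=k\}$.

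\textbf{The main obstacle} I anticipate is making the ``signature'' bookkeeping rigorous: I must rule out degenerate pumping patterns (a single variable that somehow pumps $a$, $b$, and $c$ together, or pumps across the block boundaries in a way that still stays inside $B$) and show that the $\max$ condition genuinely forces two incompatible derivation mechanisms that overlap on a common word. The almost-looping normal form from Lemma~\ref{lem:Maurer} is precisely what tames this, since it guarantees every variable is genuinely self-embedding and therefore has a well-defined pumping behaviour, eliminating pathological non-recursive variables. The delicate quantitative step will be choosing the overlapping word family so that the two required derivations are both \emph{valid} leftmost derivations of the \emph{same} word and are demonstrably \emph{distinct}; this is where adapting Maurer's disjunction $i=j \vee j=k$ to our $k \leqslant \max(i,j)$ condition requires the most care, since our language is defined by an inequality rather than an equality and thus admits more derivation freedom that must be channelled correctly.
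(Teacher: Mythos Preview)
Your proposal is correct and follows essentially the same approach as the paper's proof: both use the almost-looping normal form, classify self-embedding variables by their pumping signatures (the paper distinguishes five types, including your $(i,k)$- and $(j,k)$-synchronised ones), and then exhibit a single word with two distinct leftmost derivations by pumping from two auxiliary words living in the two regions of $B$. The paper makes your plan concrete by taking $w_b = a^{2p}b^{p}c^{2p}$ (small $j$, forcing an $a$--$c$ pump and forbidding a $b$--$c$ pump) and $w_a = a^{p}b^{2p}c^{2p}$ (small $i$, forbidding an $a$--$c$ pump), and then injecting $b^{p}$ respectively $a^{p}$ to obtain two derivations of $w = a^{2p}b^{2p}c^{2p}$, one using an $(i,k)$-type variable and one not.
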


\begin{proof}
Assume, towards a contradiction, that $G$ is an unambiguous grammar for $B$, which, from Lemma~\ref{lem:Maurer}, we can assume, without loss of generality, to be an almost-looping grammar.
Let $A$ be a variable of $G$.
\begin{enumerate}
    
    \item \label{onlya} $A$ is of Type~\ref{onlya} if there is a derivation~$A {\overset{*}{\implies}} xAy$ where $xy=a^{\markingn{A}{1}}$ for some $\markingn{A}{1} > 0$.

    \item \label{onlyb} $A$ is of Type~\ref{onlyb} if there is a derivation~$A {\overset{*}{\implies}} xAy$ where $xy=b^{\markingn{A}{2}}$ for some $\markingn{A}{2} > 0$.
    
    \item \label{ac} $A$ is of Type~\ref{ac} if there is a derivation~$A {\overset{*}{\implies}} xAy$ where $x=a^{\markingl{A}{3}}$ and $y=c^{\markingr{A}{3}}$ for some $\markingl{A}{3} \geqslant \markingr{A}{3} > 0$.
    
    \item \label{bc} $A$ is of Type~\ref{bc} if there is a derivation~$A {\overset{*}{\implies}} xAy$ where $x=b^{\markingl{A}{4}}$ and $y=c^{\markingr{A}{4}}$ for some $\markingl{A}{4} \geqslant \markingr{A}{4} > 0$.
    
    \item \label{ab} $A$ is of Type~\ref{ab} if there is a derivation~$A {\overset{*}{\implies}} xAy$ where $x=a^{\markingl{A}{5}}$ and $y=b^{\markingr{A}{5}}$ for some $\markingl{A}{5}, \markingr{A}{5} > 0$.    
\end{enumerate}
Note that some variables may be of multiple types (e.g.\ the variable~$D$ in Figure~\ref{fig:cfg1}
 has Type~\ref{onlyb} and Type~\ref{bc}).
 
First, we show that each variable in $D(G)$ has at least one of these five types.
So, let $A \in D(G)$. Then, there exists a derivation~$A {\overset{*}{\implies}} xAy$ with $xy\neq\epsilon$.
Note that both $x$ and $y$ belong to $a^*$, $b^*$, or $c^*$ since otherwise, due to $G$ being reduced, one could derive words that are not in the language.
Next, we note that the cases where $x$ belongs to $c^*$, and $y$ belongs to $a^*$ or $b^*$ cannot happen.
Similarly, the case where $x$ belongs to $b^*$, and $y$ belongs to $a^*$ cannot happen.
Also we cannot have $xy$ in $c^*$, since this will allow us to have words with an arbitrary number of $c$'s which can be more than the number of $a$'s and $b$'s and such a word is not in the language.

Further, we cannot have $x=a^{\ell}$ and $y=c^{r}$ with $0 < \ell < r$. Otherwise, consider a derivation of some word in $B$ that uses $A$, i.e.\ \[S {\overset{*}{\implies}} \alpha A \beta {\overset{*}{\implies}} a^s b^u c^v \text{\quad with $v \leqslant \max(s,u)$.}\]  Now, towards a contradiction assume we indeed have \[A {\overset{*}{\implies}} x A y \text{\quad with $x=a^{\ell}, y=c^{r}$ and $\ell < r$.}\] 
Then, pumping $q$ copies of $x$ and $y$, for some suitable $q \in\nats$, yields a derivation
\[
S {\overset{*}{\implies}} \alpha A \beta {\overset{*}{\implies}} \alpha x^q A y^q \beta {\overset{*}{\implies}} a^{s+\ell q} b^{u} c^{v+r q}
\text{\quad such that $v+rq > \max(s+\ell q, u)$},
\]
i.e.\ we have derived a word that is not in $B$.
Similarly, we cannot have $x=b^\ell$ and $y=c^r$ for some $0 < \ell < r$.
Altogether, this implies that $A$ indeed has at least one of the five types stated above.

Moreover, we claim that there is a $t \in \nats$ such that the following three properties are true for every word~$w \in B$:
\begin{description}
    \item[Property 1] If $w$ has more than $t$ $c$'s, then the (unique) leftmost derivation of $w$ has the form  
    \[S {\overset{*}{\implies}} \alpha A \beta {\overset{*}{\implies}} \alpha x A y \beta {\overset{*}{\implies}} w
    \text{\quad such that $xy$ contains a $c$. }\]
    Thus, $A$ has type~\ref{ac} or type~\ref{bc}.

\item [Property 2] If $w$ has more than $t$ $a$'s, then the (unique) leftmost derivation of $w$ has the form  
    \[S {\overset{*}{\implies}} \alpha A \beta {\overset{*}{\implies}} \alpha x A y \beta {\overset{*}{\implies}} w
    \text{\quad such that $xy$ contains an $a$. }\]
    Thus, $A$ has type~\ref{onlya}, type~\ref{ac}, or type~\ref{ab}.
    
\item [Property 3] If $w$ has more than $t$ $b$'s, then the (unique) leftmost derivation of $w$ has the form  
    \[S {\overset{*}{\implies}} \alpha A \beta {\overset{*}{\implies}} \alpha x A y \beta {\overset{*}{\implies}} w
    \text{\quad such that $xy$ contains a $b$. }\]
    Thus, $A$ has type~\ref{onlyb}, type~\ref{bc}, or type~\ref{ab}.
\end{description}

We prove these properties as follows:
we denote by $d$ the \emph{width} of the grammar $G$
which is the maximum number of symbols appearing on the right side of some production rule of $G$.
Further, we denote by $m$ the number of variables appearing in $G$.
We argue that $t = d^{m+1}$ satisfies the three properties above.
We focus on Property~1, the two other proofs are similar.
Suppose that $w$ contains more than $d^{m+1}$ $c$'s and consider the derivation tree of that word.
The \emph{weight}~$\omega(v)$ of a vertex~$v$ in the derivation tree is defined as the number of $c$'s in the subtree rooted at~$v$.
Hence, the root of the derivation tree has at least weight~$d^{m+1}$.
We build a finite path $v_0,v_1, \ldots, v_k$
from the root of this tree to one of its leaves as follows:
The initial vertex~$v_0$ is the root and
at each step, we choose as successor of $v_{i}$
its child $v_{i+1}$ with the largest weight.
A vertex~$v_{i}$ of this path is \emph{decreasing} if $\omega(v_{i}) > \omega(v_{i+1})$.
There are are at least $m+1$ decreasing vertices on the path
because 
$\omega(v_0) = d^{m+1}$,
$\omega(v_k) = 1$,
and $\omega(v_{i+1}) \geq \frac{1}{d} \cdot \omega(v_i)$.
Thus, there are two decreasing vertices on the path that are labelled by the same variable~$A$ such that there is a derivation of the form~$A {\overset{*}{\implies}} xAy$ with some $c$ in $xy$.

Let $p > t$ be a positive integer divisible by the least common multiple of the $\markingn{A}{i}, \markingl{A}{i}$ and $\markingr{A}{i}$ for all $A \in D(G)$ and $i \in \set{1, \ldots, 5}$, where we define $\markingn{A}{1} = 1$ if $A$ is not of Type~\ref{onlya}, and similarly for all other~$i > 1$.
We show that the word $w = a^{2p}b^{2p}c^{2p} \in B$ has two leftmost derivations.

First consider the derivation of the word $w_b = a^{2p}b^{p}c^{2p} \in B$.
As we have more than $t$ $c$'s in $w_b$ Property~1 shows that the derivation contains a variable of Type~\ref{ac} or Type~\ref{bc}. 
Next, we argue that it cannot contain a variable of Type~\ref{bc}: The occurrence of such a variable would allow us to either produce a word that is not in $a^*b^*c^*$ or to inject $b^{p}c^{r}$ for $p \geqslant r > 0$ leading to the derivation of $a^{2p}b^{2p}c^{2p+r}$, which is not in the language.
Thus, the derivation of $w_b$  uses at least one variable of Type~\ref{ac}.
Also, since $w_b$ has $p > t$ $b$'s,
Property~3 implies that the (unique) leftmost derivation of $w_b$ has the form
\[S {\overset{*}{\implies}} \alpha A \beta {\overset{*}{\implies}} \alpha x A y \beta {\overset{*}{\implies}} w_b
    \text{\quad such that $xy$ contains a $b$. }\]
Thus $A$ is a variable of Type~\ref{onlyb} or Type~\ref{ab}
(note that we have already ruled out Type~\ref{bc} above).
More precisely, we have that $x$ belongs to $a^+$ or $b^*$ and $y = b^j$ for some $j \in \nats$.
Now we show that the case where $x$ belongs to $a^+$ is not possible.
Assume for contradiction that $x = a^i$ for some $i>0$.
Then we also have the derivation
\[S {\overset{*}{\implies}} \alpha A \beta {\overset{*}{\implies}} a^{2p-i}b^{p-j}c^{2p} \notin B.\]
Therefore, $A$ is a Type~\ref{onlyb} variable that is used in the derivation of $w_b$, which can be used to inject another $b^p$, yielding a derivation of $w$. Thus, we have exhibited a derivation of $w$ that uses a variable of Type~\ref{ac}.

Now consider a derivation of the word $w_a = a^{p}b^{2p}c^{2p}$.
Such a derivation cannot contain a variable of Type~\ref{ac} since this allows us either to produce a word that is not in $a^*b^*c^*$ or to inject $a^{p}c^{r}$ for $p \geqslant r > 0$, leading to the derivation of $a^{2p}b^{2p}c^{2p+r} \notin B$.
Further, arguing as above, some variable of Type~\ref{onlya}
must appear in the derivation of $w_a$ that is used to obtain sufficient number of $a$'s in the derivation of $w_a$.
Such a variable of Type~\ref{onlya} can be used to inject $a^p$ into $w_a$ which leads to the derivation of $w$.
Thus, we have exhibited a derivation of $w$ that does not contain a variable of Type~\ref{ac}.

Altogether, there are two different leftmost derivations of the word $w$.
Thus, $G$ is not unambiguous, yielding the desired contradiction.
\end{proof}

%%%%%%%%%%%%%%%%%%%%%%%%%%%%%%%%%%%%%%%%%%%%%%%%%%%%%%%%%%%%
%%%%%%%%%%%%%%%%%%%%%%%%%%%%%%%%%%%%%%%%%%%%%%%%%%%%%%%%%%%%
\section{Succinctness}
\label{section_succinctness}

We show that \HD-PDA are not only more expressive 
than DPDA, but also more succinct. Similarly, we show that PDA are more succinct than \HD-PDA. Recall that the size of a \pda{} is the sum of its state-space and stack alphabet.

\begin{thm}\label{theorem:succinctness}
\HD-PDA can be exponentially more succinct than DPDA,
and PDA can be
double-exponentially more succinct than \HD-PDA.
\end{thm}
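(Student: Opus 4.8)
The plan is to establish two separate succinctness gaps, each requiring a family of languages together with matching upper and lower bounds. For the exponential gap between \HD-PDA and DPDA, and simultaneously for the lower bound mentioned earlier in the discussion following Lemma~\ref{lemma_eow}, I would use the family $L_n$ of bit strings whose $n^\text{th}$ bit from the end equals $1$. First I would exhibit a polynomially-sized \HD-PDA recognising $L_n$: the automaton pushes the input onto the stack, and upon guessing (via a resolver that has access to the next input letter) that the word is about to end, it pops the top $n-1$ symbols and checks that the remaining top-of-stack symbol is a $1$. The resolver knows when the word ends because it sees the next letter, so the guessing of the end is made deterministic on-the-fly; the automaton uses $O(n)$ states and a constant-sized stack alphabet. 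The matching lower bound is that any DPDA for $L_n$ needs exponentially many states: since a DPDA is deterministic, after reading any prefix its configuration must remember enough to answer membership correctly once the word ends, and a standard fooling-set or Myhill--Nerode-style argument shows that the $2^n$ possible "last $n$ letters" must lead to distinguishable configurations, forcing $2^{\Omega(n)}$ states.

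For the double-exponential gap between PDA and \HD-PDA, I would iterate this idea: define a family whose separating witness encodes, at the level of the \emph{stack alphabet} rather than the states, a counting or indexing task that a nondeterministic PDA can guess succinctly but that any \HD-PDA must make explicit in its configurations. The natural approach is to take a language where a nondeterministic PDA can guess a position or a subset of exponential size in one step using nondeterminism, whereas an \HD-PDA, because its nondeterminism must be resolvable on-the-fly without seeing the remainder of the input, is forced to maintain a deterministic-like bookkeeping that blows up. Concretely I expect to use the positionality of resolvers (every \HD-PDA has a positional resolver, as stated in the introduction): once the resolver is positional, the pruned \HD-PDA behaves essentially deterministically, so the \HD-PDA must already be "almost deterministic" in the relevant sense, and hence subject to a lower bound argument analogous to the DPDA one but applied one exponential level higher — yielding $2^{2^{\Omega(n)}}$ against the polynomial PDA.

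The main obstacle will be the lower bound for the \HD case in the double-exponential gap. Unlike the DPDA lower bound, where determinism lets us pin down a unique configuration per prefix, an \HD-PDA is genuinely nondeterministic, so I cannot directly apply a fooling-set argument to its runs. The key technical device is to invoke positionality of resolvers to extract from the \HD-PDA a canonical deterministic strategy, and then argue that the configurations reachable under this strategy must be pairwise distinguishable on a doubly-exponential set of distinguishing suffixes. Making this rigorous requires controlling the interaction between the stack contents and the resolver: I must ensure that the distinguishing power cannot be "hidden" in the stack in a way that a nondeterministic PDA exploits but an \HD-PDA cannot. I would therefore choose the witness language carefully so that the relevant information is provably forced into the finite control (states times top-of-stack interactions) of any on-the-fly-resolvable automaton, closing the gap.

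For each direction I would conclude by combining the polynomial upper bound with the exponential (respectively double-exponential) lower bound to obtain the claimed separation, and I would remark that the upper bounds are witnessed by explicit constructions while the lower bounds rely on \emph{incompressibility} arguments tailored to the determinism constraints that \HDness imposes on the resolvable runs.
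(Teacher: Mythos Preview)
Your proposal has a fundamental error in the first gap. You claim a polynomially-sized \HD-PDA for $L_n$ (the language of bit strings whose $n^{\text{th}}$ bit from the end is $1$) by having the resolver detect the end of the word. But under the paper's definition, a resolver receives only the next letter \emph{to be processed}; it is never told that the word has ended, and the induced run must be accepting \emph{immediately} after processing the last letter, with no trailing $\epsilon$-transitions. So your construction cannot ``guess the word is about to end'' in the standard model. In fact, the discussion after Lemma~\ref{lemma_eow} that you cite says exactly the opposite of what you took from it: a polynomial \HD-PDA for $L_n$ exists only \emph{with} an EoW-marker, and the paper proves that every standard \HD-PDA for $L_n$ needs exponential size. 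Thus $L_n$ is the witness for the \emph{second} gap (PDA versus \HD-PDA), not the first.

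For the first gap the paper uses a different family: the ``bad counter'' languages $C_n = \{w \mid w \neq c_n\#\}$, where $c_n$ encodes an $n$-bit binary counter from $0$ to $2^n-1$. The $O(n)$-size \HD-PDA pushes the input and then, while popping, nondeterministically guesses a position witnessing a counting error; the resolver can make this guess because it knows the stack contents. The exponential DPDA lower bound goes via complementation: any PDA for the singleton $\{c_n\#\}$ requires exponential size, shown by converting to a grammar and counting variables.

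For the second gap, you misread what ``double-exponential'' means here. The paper does not produce a $2^{2^{\Omega(n)}}$ lower bound on \HD-PDA. Instead, it shows a PDA for $L_n$ of size $O(\log n)$ (using the stack as a binary counter for $n$) together with an exponential-in-$n$ lower bound on any \HD-PDA for $L_n$; since the PDA size is logarithmic in $n$, the \HD-PDA size is double-exponential in the PDA size. The \HD-PDA lower bound is not via positionality of resolvers but via a direct combinatorial argument using rotational equivalence classes of $\{0,1\}^n$ and a ``step'' argument on the resolver-induced runs on the infinite inputs $w^\omega$: if $|Q|\cdot|\Gamma|$ were too small, two rotationally inequivalent words would share a step mode, allowing a run-suffix swap that makes the automaton accept a word outside $L_n$. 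Your sketch (``iterate the idea'', ``one exponential level higher'') does not supply any of this and would not yield a proof.
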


We first show that \HD-PDA can be 
exponentially more succinct than DPDA.
To this end, we construct a family $(C_n)_{n \in \nats}$
of languages such that $C_n$ is recognised by an \HD-PDA
of size $O(n)$,
yet every DPDA recognising $C_n$
has at least exponential size in $n$.

Let $c_n\in (\$\{0,1\}^n)^*$ be the word describing
an $n$-bit binary counter counting from $0$ to $2^{n}-1$.
For example,
$c_2=\$00\$01\$10\$11$.
We consider the family of languages~$ 
C_n= \big\{w \in \{0,1,\$,\#\}^*\mid w \neq c_n\#\big\} \subseteq \{0,1,\$,\#\}^*$ of bad counters. Each $C_n$ is recognised by a DPDA since it is co-finite.

We show that the language $C_n$ is recognised by
an \gfgpda{} of size $O(n)$ and that every DPDA~$\daut$ recognising $C_n$ has exponential size in $n$. 
Observe that this result implies that even \HD-PDA that are equivalent to DPDA are not determinisable by pruning.
In contrast, for NFA, \HDness implies determinisability by pruning~\cite{BKS17}.

\begin{lem}
\label{lemma:succinct_GFGPDA1}
The language $C_n$ is recognised by
an \gfgpda{} of size $O(n)$.
\end{lem}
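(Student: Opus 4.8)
The plan is to construct an $\gfgpda$ of size $O(n)$ recognising $C_n = \{w \in \{0,1,\$,\#\}^* \mid w \neq c_n\#\}$, i.e.\ the set of strings that are \emph{not} the unique correct $n$-bit counter word followed by $\#$. The key observation is that a word $w$ fails to equal $c_n\#$ if and only if at least one of a finite list of ``local'' error conditions holds, and history-determinism lets the resolver \emph{guess which error to look for} based on where the error occurs, while the deterministic core of the automaton then \emph{verifies} that guessed error using only $O(n)$ states and $O(n)$ stack symbols.

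First I would enumerate the error types that characterise $w \ne c_n\#$. A correct $c_n\#$ must (a)~have the right shape $(\$\{0,1\}^n)^*\#$ with exactly $\#$ at the end, (b)~start with the block $\$0^n$, (c)~end with the block $\$1^n$ before the $\#$, and (d)~have every pair of consecutive blocks be a correct binary increment. Accordingly, a \emph{bad} counter exhibits at least one of: a malformed block structure (wrong symbol, wrong block length, misplaced or missing $\#$); an incorrect first block; an incorrect last block; or two adjacent blocks $\$u\$v$ where $v$ is not the binary successor of $u$. The shape/first/last errors are regular and need only finitely many states. The crucial case is the increment error: I would have the automaton nondeterministically pick a bit position $i \in \{1,\dots,n\}$ and a block boundary, push a marker recording $i$ and the relevant carry information onto the stack while scanning the first of the two blocks, then pop and compare against bit $i$ of the next block, checking that the successor relation is violated at that position. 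Encoding a single position $i \le n$ and a constant amount of carry/bit data costs $O(n)$ states and $O(n)$ stack symbols.

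The main obstacle — and the reason history-determinism is essential — is that detecting an increment error by comparing bit $i$ of one block with bit $i$ of the next requires \emph{remembering a position and a bit across an arbitrarily long block}, which a DPDA cannot do while scanning left to right without blowing up to exponential size (this is precisely what the companion lower bound will exploit). With an $\gfgpda$, I do not need the automaton itself to deterministically locate the error: the resolver, which sees the run prefix and the next input letter, can steer the nondeterministic choices so that whenever $w \in C_n$ the automaton commits early to checking an error that genuinely occurs in $w$. Concretely, I would argue that the resolver can be defined to detect, on the fly, the first deviation from the correct counter: as it reads $w$ it maintains (in the information available from the run prefix) what the correct counter prefix should be, and as soon as $w$ diverges it directs the automaton down the branch verifying exactly that divergence. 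Since every $w \in C_n$ diverges somewhere, such a branch always exists and leads to acceptance, establishing history-determinism.

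Finally I would verify correctness in both directions. For soundness, each accepting branch witnesses a concrete local property that is incompatible with $w = c_n\#$, so $L(\aut) \subseteq C_n$. For completeness together with history-determinism, I would exhibit the resolver described above and check that for every $w \in C_n$ it induces an accepting run ending with the last letter processed (no trailing $\epsilon$-transitions, as required by the definition of $\gfgpda$ in Section~\ref{sec_gfg}). A size audit then confirms $|Q| + |\Gamma| = O(n)$, since every component — shape check, endpoint checks, and the position-$i$ increment check — uses only a counter up to $n$ and a bounded amount of auxiliary data. This completes the construction.
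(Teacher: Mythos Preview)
Your high-level plan matches the paper's: classify the errors (format, first block, last block, increment), build a small nondeterministic automaton that guesses which error to verify, and argue that a resolver knowing the input prefix can steer the guess. That part is sound.

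The gap is in your increment check. You have the automaton commit to a bit position $i$ ``while scanning the first of the two blocks'' and then compare against bit $i$ of the second block. For history-determinism, the resolver must make this commitment based only on the run prefix and the next letter; but while scanning block $k$ (or even the $\$$ after it) the resolver has not yet seen block $k{+}1$, so it cannot know which position $i$ will be wrong there. Your own resolver sketch (``as soon as $w$ diverges it directs the automaton'') locates the error only when the offending bit of block $k{+}1$ is actually read, which is one block too late for the automaton you describe: the commitment to $i$ was required already during block $k$. So the automaton and the resolver you outline are not compatible.

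The paper avoids this timing mismatch by decoupling reading from checking. It first pushes the entire prefix onto the stack, reading until the pattern $1^n$ appears, while simultaneously verifying the block format and that the first block is $\$0^n$; any violation here is accepted immediately and needs only $O(n)$ states. The increment check is then done entirely with $\epsilon$-transitions: the automaton pops the stack and at any point may nondeterministically decide that the current top symbol is the error site; it then pops $n{+}1$ symbols (remembering the first one popped and tracking the carry by whether a $0$ has been seen after the intervening $\$$) and compares that remembered symbol with the new stack top---these are exactly the same bit position in adjacent blocks. Crucially, at the moment of this guess the resolver already knows the full pushed prefix (it equals the stack content), so it can point to an error without any lookahead. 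A short final phase then rejects only if the remaining suffix is exactly $\#$. The whole construction uses $O(n)$ states and a constant-size stack alphabet.
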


\begin{proof}
We define a PDA~$\aut=(Q, \Sigma, \Gamma, q_\initmark, \Delta, F)$ that recognises $C_n$.
The automaton $\aut$
operates in three phases:
a push phase,
followed by a check phase,
and then a final phase.
These phases work as follows.
Suppose that $\aut$ receives an input $w \in \{0,1,\$,\#\}^*$.
During the first phase, $\aut$ pushes the input processed onto the stack until the sequence $1^n$ appears.
If it never appears, the input is accepted.
During this phase, $\aut$ also checks whether the prefix $w'$ of $w$
processed up to this point
is a sequence of counter values starting with $0^n$,
i.e. whether $w'$ is in the language
\[
L_c=
\{
\$d_0\$ d_1\$ \cdots \$ d_m \mid
 d_0 = 0^n,
 d_i = 1^n \Leftrightarrow i = m,
\textup{ and }
 d_i \in \{0,1\}^n
 \textup{ for all }
 1 \leqslant i \leqslant m
 \}.
\]
If $w' \not\in L_c$,
then $\aut$ immediately accepts.
Otherwise, $\aut$ moves to the second phase.
During the check phase, $\aut$ pops the stack.
At any point, $\aut$ can nondeterministically guess that the top symbol of the stack
is evidence of bad counting (details are described below). It then accepts the input if the guess was correct.
If $\aut$ completely pops the stack without
correctly guessing an error in the counter,
it moves to the final phase.
Since the prefix $w'$ processed up to this point
ends with the sequence $1^n$,
if $\aut$ now processes any suffix different from a single $\#$,
then the input is not equal to $c_n\#$, and can be accepted.

The stack alphabet of $\aut$ has constant size $3$.
The push phase requires $3(n+1)$ states:
\begin{itemize}
    \item First, $\aut$ checks whether $\$ 0^n$ is a prefix of the input.
    This can be done with $n + 2$ states.
    \item Then, $\aut$ checks whether the following $\{0,1\}^*$ segments are $n$-bits wide,
    and only the last one is $1^n$.
    This can be done with $2n+1$ additional states:
    repeatedly, $\aut$ processes $n+1$ symbols,
    checks whether only the first of them is a $\$$,
    and keeps track of whether at least one of them is $0$.
\end{itemize}

We now show that $6(n+1)$ additional states
are enough for the check phase.
To this end, we study the errors that $\aut$ needs to check.
Note that, to increment the counter correctly,
we need to change the value of all the bits starting from the last $0$,
and leave the previous bits unchanged.
Therefore, $\aut$ can recognise with $6(n+1)$ states
whether the top symbol of the stack does not correspond
to a correct counter increment:
$\aut$ pops the top $n+1$ stack symbols while keeping in memory
\begin{itemize}
    \item the value of the first symbol popped;
    \item whether we have not yet popped a $\$$ (there is exactly one $\$$ in the top $n+1$ stack symbols, as the stack content is in $L_c$),
    or a $\$$ but no $0$ afterwards,
    or a $\$$ and at least one $0$ afterwards.
\end{itemize}
The input is accepted whenever the first symbol popped and the top stack symbol after popping match
yet no $0$ has been popped between the $\$$ and the last symbol,
or they differ yet at least one $0$ has been popped
between the $\$$ and the last symbol.

Finally, only three states are needed for the final phase:
when the bottom of the stack is reached, $\aut$
transitions to a new state, and from there it checks
whether the suffix is in the language
$\{0,1,\$,\# \}^* \setminus \{ \# \}$.

To conclude, note that $\aut$ is history-deterministic:
the only nondeterministic choice happens during the check phase,
and the resolver knows which symbols of the stack are evidence of bad counting.
Note that this choice only depends
on the current stack content.
\end{proof}

As mentioned above, each $C_n$ is recognised by a DPDA; now we show that such a DPDA must be large.
\begin{lem}
\label{lemma:succinct_GFGPDA2}
Every DPDA recognising the language $C_n$ has
at least exponential size in $n$.
\end{lem}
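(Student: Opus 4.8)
The plan is to exhibit an exponential family of ``block-boundary'' configurations that the DPDA is forced to distinguish, and to argue that the strictly last-in-first-out discipline of the stack prevents it from doing so with subexponentially many states and stack symbols. Fix a DPDA $\daut$ recognising $C_n$ and write $c_n = \$ v_0 \$ v_1 \cdots \$ v_{2^n-1}$, where $v_i \in \{0,1\}^n$ is the $n$-bit encoding of $i$. For each $0 \le i \le 2^n - 1$, let $C_i$ be the configuration reached by $\daut$ after reading the prefix $\$ v_0 \cdots \$ v_{i-1}\$$ (the configuration from which the next input symbol is consumed, which is well defined up to trailing $\epsilon$-moves). Since $\daut$ rejects only $c_n\#$, the configuration $C_i$ must reject exactly the continuation $v_i \$ v_{i+1} \cdots \$ v_{2^n-1}\#$ and accept every other continuation; in particular this continuation distinguishes $C_i$ from every $C_j$ with $j \neq i$, so the $2^n$ configurations $C_i$ have pairwise distinct residual behaviours. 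The goal is to convert these $2^n$ distinct behaviours into a lower bound of $2^{\Omega(n)}$ on $|Q| + |\Gamma|$.

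Before doing so I would record why the obvious arguments are too weak, since this pinpoints where the real work lies. Counting configurations does not help: a stack of height $n$ over a two-letter stack alphabet already yields $2^n$ distinct configurations, so the $C_i$ may all differ merely in their stack contents. Likewise, pumping the rejecting run of $\daut$ on $c_n\#$ gives only a \emph{linear} lower bound: a repeated configuration, or a balanced stack excursion, along that run can be pumped to produce a strictly shorter word that is then also rejected, i.e.\ a word in $C_n$ that is wrongly rejected; but the pumping threshold for a pushdown automaton of size $s$ is $2^{O(s)}$, while $|c_n\#| = \Theta(n 2^n)$ is only singly exponential, so a contradiction arises only once $s \ge \Omega(n)$. (Consistently, a single ``compressible'' word such as $a^{2^n}$ \emph{is} recognised by an $O(n)$-size DPDA, by maintaining a binary counter on the stack.) An exponential bound must therefore exploit the \emph{orientation} of the stack, not merely its size.

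The core step, and the main obstacle, is to bound the amount of stack information that $\daut$ can actually consult when distinguishing the value $i$ at a block boundary. The intuition is the fundamental tension in verifying a binary counter with one stack: comparing the next block $v_{i+1}$ against a remembered value needs that value at the top of the stack read most-significant-bit first, whereas forming the successor $i+1$ from $i$ propagates a carry from the least significant bit, i.e.\ from deep inside the stack. I would formalise this with a crossing-sequence argument on the stack-height profile of $\daut$ while it reads the block $v_i$ out of $C_i$: isolating the deepest level touched during that block, I would show that two boundary configurations sharing the same state, the same stack content above that level, and the same return behaviour at that level cannot be separated by any continuation whose processing stays above the level. Pushing this through, the behaviour relevant to identifying $i$ is governed by the state together with only a bounded surface portion of the stack, so separating all $2^n$ values of $i$ forces $|Q| \cdot |\Gamma|^{O(1)} \ge 2^{\Omega(n)}$, and hence $|Q| + |\Gamma| = 2^{\Omega(n)}$. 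Making the bound on the accessible surface genuinely sublinear in $n$ — rather than the vacuous $O(n)$ symbols that the LIFO discipline permits a priori — is the delicate point, and is exactly where the carry-versus-comparison tension must be turned into a quantitative statement.
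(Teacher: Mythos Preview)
Your proposal is not a proof: it is an outline that correctly identifies why the naive counting and pumping arguments fail, and then gestures at a ``crossing-sequence argument on the stack-height profile'' whose decisive step you yourself flag as unfinished (``Making the bound on the accessible surface genuinely sublinear in $n$ \ldots\ is the delicate point''). Nothing in what you wrote actually establishes that the DPDA can only consult a bounded surface of its stack when processing a block; a DPDA is free to pop $\Theta(n)$ symbols while reading $v_i$ and then push them back, so the claim that the relevant behaviour depends on ``the state together with only a bounded surface portion of the stack'' needs a real argument, not the carry-versus-comparison intuition. As written, the key implication $|Q|\cdot|\Gamma|^{O(1)}\ge 2^{\Omega(n)}$ is asserted, not proved.

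The paper sidesteps this entire difficulty with a one-line reduction you missed: DPDA are closed under complement with only a constant-factor blowup, so it suffices to lower-bound any \emph{nondeterministic} PDA for the \emph{singleton} language $\{c_n\#\}$. Passing to the standard triple-construction grammar, whose number of variables is $(|Q|+1)^2(|\Gamma|+1)$, the problem becomes: any CFG generating the single word $c_n\#$ needs $2^{n-1}$ variables. That is proved by a short derivation-tree argument --- for each counter value $d_i$ take the lowest node whose yield still contains $d_i$; since productions have width at most three, one node can split at most two counter values, so the nodes for even $i$ are pairwise distinct, and since each $d_i$ occurs exactly once in $c_n$, their labels must be pairwise distinct variables. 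This grammar-compression lower bound (essentially the $mk$-Lemma) replaces your undelivered crossing-sequence argument with a clean counting argument that works precisely because the complemented language is a singleton.
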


\begin{proof}
It is known that every DPDA can be complemented
at the cost of multiplying
its number of states by three~\cite{Hopcroft}.
Therefore, to prove the statement,
we show that even every PDA recognising 
the complement $\{c_n\#\}$ of $C_n$
has at least exponential size in $n$:

\begin{clm}
Every PDA $\aut=(Q, \Sigma, \Gamma, q_\initmark, \Delta, F)$
recognising $\{ c_n\# \}$ has a size greater than $2^{(n-1)/3}$.
\end{clm}
To prove the claim, we transform $\aut$ into a context-free
grammar generating the singleton language $\{ c_n\# \}$,
and then we show that such a grammar requires
exponentially many variables.
This is a direct consequence of the $mk$ Lemma~\cite{DBLP:journals/tit/CharikarLLPPSS05},
but proving it directly using similar techniques yields a slightly better bound.

Before changing $\aut$ into a grammar,
we slightly modify
its acceptance condition:
we add to $\aut$ a fresh final state $f$
in which the stack can be completely popped
\emph{including the bottom of stack symbol} $\bot$
(which normally cannot be touched
according to our definition of PDA).
Moreover, we allow $\aut$ to transition
towards $f$ nondeterministically
from all of its other final states.
This new automaton, which accepts by empty stack,
is easily transformed into 
a grammar $\mathcal{G}$ using
the standard transformation~\cite{Hopcroft}:
\begin{itemize}
    \item The terminals of $\mathcal{G}$ are $0$, $1$, $\$$ and $\#$.
    \item The variables of $\mathcal{G}$ are the triples $(p,X,q)$,
    for every state $p, q \in Q \cup \{f\}$ and stack symbol
    $X \in \Gamma_{\bot}$.
    \item The initial variable is $(q_\initmark, \bot, f)$,
    where $q_\initmark$ is the initial state of $\aut$
    and $f$ is the fresh final state.
    \item
    Each transition $(p,X,a,q,\gamma) \in \Delta$ yields production rules as follows:
    \begin{enumerate}\item
        If $\gamma = \epsilon$, then $\mathcal{G}$
        has the production rule
        $(p,X,q) \rightarrow a$;
        \item
        If $\gamma = Y$,
        then
        $\mathcal{G}$ has the production rule $(p,X,q_1) \rightarrow a (q,Y,q_1)$
        for all $q_1 \in Q$;
        \item 
        If $\gamma = YZ$,
        then
        $\mathcal{G}$ has the production rule
        $(p,X,q_2) \rightarrow a (q,Y,q_1) (q_1,Z,q_2)$
        for all $q_1,q_2 \in Q$.
    \end{enumerate}
\end{itemize}
The variables can be interpreted as follows:
for every $p, q \in Q$ and $X \in \Gamma$,
the variable $(p,X,q)$ can be derived into any input word $w \in \{0,1,\$, \#\}^*$
that $\aut$ can process starting in state $p$ and ending in state $q$
while consuming the symbol $X$ from the top of the stack.
Therefore, in particular, 
since the initial variable is $(q_\initmark, \bot, f)$,
$\mathcal{G}$ generates the same language as $\aut$.

We now prove that the grammar $\mathcal{G}$
has at least $2^{n-1}$ distinct variables,
hence $(|Q|+1)^2(|\Gamma|+1) \geqslant 2^{n-1}$,
which implies that the size
$|Q| + |\Gamma|$ of $\aut$ is at least $\geqslant 2^{(n-1)/3}$.
To this end, we study a (directed and ordered) derivation tree $T$ of the word $c_n\#$.

Remember that $c_n = \$d_0\$d_1\$ \cdots \$ d_{2^{n}-1}$ represents
an $n$-bit binary counter counting from $0$ to $2^n-1$.
For each $0 \leq i \leq 2^n-1$,
let us consider the vertex $v_i$ of $T$
such that the counter value $d_i$ is an infix of the derivation of $v_i$,
but of none of its children.
In other words, $d_i$ is split between the derivations of the children of $v_i$.
By definition of the grammar $\mathcal{G}$,
each vertex of $T$ has at most three children,
hence at most two counter values can be split amongst the children of a given vertex.
Therefore,
for every $1 \leq i \leq 2^{n}-2$,
while $v_i = v_{i-1}$ or $v_i = v_{i+1}$ might hold,
we know that $v_i \neq v_{j}$ for each $j \in \{1,2,\ldots,2^n-1\} \setminus \{i-1,i,i+1\}$. 
As a consequence, the vertices $v_0,v_2,v_4, \ldots, v_{2^n-2}$ are all distinct.
Finally, since $c_n\#$ is the only word recognised by $\aut$
and each counter value $d \in \{0,1\}^n$ appears a single time as an infix of $c_n\#$,
the $2^{n-1}$ variables labelling these vertices need to be distinct.
\end{proof}

Now, we consider the gap between \HD-PDA and PDA.
We show that there exists a family $(L_n)_{n > 0}$ of languages such that $L_n$ is recognised by a PDA of size $O(\log n)$ while every \HD-PDA recognising this language has at least exponential size in $n$.

Formally, we set
$L_n =  (0+1)^*1(0+1)^{n-1}$, that is, the $n^{\text{th}}$ bit from the end is a $1$. Here, we count starting from $1$, so that the last bit is the $1^{\text{st}}$ bit from the end.
Note that this is the standard example for showing that NFA can be exponentially more succinct than DFA, and has been used for many other succinctness results ever since.

\begin{lem}
\label{lem:PDA-logn}
There exists a PDA of size $O(\log n)$ recognising $L_n$.
\end{lem}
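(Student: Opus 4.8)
The plan is to emulate the textbook nondeterministic finite automaton for $L_n$, which guesses the position of the $1$ that sits $n$ letters from the end and then verifies, via a linear chain of $n-1$ states, that exactly $n-1$ further letters follow. The chain of length $n-1$ is the only part that is not already of size $O(\log n)$, so I would replace it by a binary counter kept on the stack. This is possible precisely because we are constructing a \pda{}, not an \HD-PDA: full nondeterminism is available to guess the crucial position, and trailing $\epsilon$-transitions may be taken after the last input letter to inspect the counter.

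Concretely, I would build a \pda{}~$\aut$ operating in three phases. In its initial state $q_0$ it reads an arbitrary prefix, looping on both $0$ and $1$. At some nondeterministically chosen moment it reads a $1$ --- the guess that this is the $n$-th letter from the end --- and, through a sequence of $\epsilon$-transitions passing through $\lceil \log_2 n\rceil$ states, it pushes onto the stack the binary representation of $n-1$ over the two-symbol stack alphabet $\{Z_0,Z_1\}$, with the least significant bit on top. Since $n$ is fixed, the bits of $n-1$ are hard-coded into these transitions. The automaton then enters a counting state $q_{\mathrm{count}}$ in which each input letter (either $0$ or $1$) triggers one decrement of the counter, realised by $\epsilon$-transitions. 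If a decrement ever underflows, i.e.\ reaches $\bot$ with every bit equal to $Z_0$, the run is sent to a rejecting sink. When the input ends, a final block of trailing $\epsilon$-transitions pops the stack from $q_{\mathrm{count}}$ and checks that every symbol is $Z_0$; if $\bot$ is reached without encountering a $Z_1$, the counter was exactly $0$ and the automaton enters its unique final state.

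For correctness I would verify both inclusions. If $w \in L_n$, then $\size{w} \ge n$ and its $n$-th letter from the end is a $1$; guessing that position initialises the counter to $n-1$ and decrements it exactly once per remaining letter, so it reaches $0$ precisely when the input is exhausted and the verification phase accepts. Conversely, any accepting run must guess a position carrying a $1$ with exactly $n-1$ letters after it: too many trailing letters force an underflow, while too few leave the counter positive so the final check fails. Hence no run accepts when the $n$-th letter from the end is absent or equal to $0$, giving $L(\aut) = L_n$.

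The main obstacle is purely mechanical: implementing the decrement with its borrow propagation using only $O(\log n)$ states. With the least significant bit on top, a decrement flips the block of topmost $Z_0$'s to $Z_1$'s and the first $Z_1$ below them to $Z_0$; since the counter never holds more than $\lceil\log_2 n\rceil$ bits, one can pop the leading zeros while counting them in the control state and then push back the corresponding number of $Z_1$'s, all within $O(\log n)$ states. Together with the $O(\log n)$ states used to push the constant $n-1$ and the constant-size stack alphabet $\{Z_0,Z_1\}$, the resulting \pda{} has size $\size{Q} + \size{\Gamma} = O(\log n)$, as required.
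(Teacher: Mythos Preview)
Your proposal is correct and follows essentially the same approach as the paper: nondeterministically guess the $1$ that is $n$ letters from the end, then use a binary counter on the stack (initialised and decremented with $O(\log n)$ control states and a constant stack alphabet) to verify that exactly $n-1$ further input letters follow. The only cosmetic difference is that the paper initialises the counter to $n-2$ and detects termination by recognising an underflow during the decrement of the last input letter (moving to a final state with no outgoing transitions), whereas you initialise to $n-1$ and perform a separate trailing-$\epsilon$ check that the counter is zero; both variants are valid and yield size $O(\log n)$.
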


\begin{proof}
We describe a PDA $\aut_n$ that recognises $L_n$.
The PDA $\aut_n$ nondeterministically guesses the $n^\text{th}$ bit from the end, checks that it is a $1$ and switches to a counting gadget that checks that the word ends in $n$ steps, as follows:
\begin{enumerate}[(i)]
    \item It pushes the binary representation of $n-2$ onto the stack.
    For example, if $n=8$, then $110$ is pushed onto the stack with $0$ at the top.
    Note that $\log (n-2)$ states suffice for pushing the binary representation of $n-2$.
    If $n=1$, then instead of pushing anything onto the stack, the automaton directly moves to a final state without any enabled transitions.
    \item \label{step:counter} Then $\aut_n$ moves to a state that attempts to decrement the counter by one for each successive input letter, as follows:
    When an input letter is processed, it pops $0$'s until $1$ is at the top of the stack, say $m$ $0$'s. Then, it replaces the $1$ with a $0$, and finally pushes $m$ $1$'s back onto the stack before processing the next letter.
     If the stack empties before a $1$ is at the top of the stack, then the counter value is $0$ and the automaton moves to  a final state with no enabled transitions.
    Note that $O(\log n)$ states again suffice for this step.
\end{enumerate}
Thus, $\aut_n$ has $O(\log n)$ states.
Note that for all $n$, $\aut_n$ uses only three stack symbols, $0$, $1$, and $\bot$.
Altogether, the size of $\aut_n$ is $O(\log n)$, and $\aut_n$ recognises $L_n$.
\end{proof}

\begin{lem}
\label{lem:GFG-PDA-exponential}
Every \HD-PDA recognising $L_n$ has at least exponential size in $n$.
\end{lem}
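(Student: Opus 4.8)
We need a lower bound on the size of any \HD-PDA recognising $L_n = (0+1)^*1(0+1)^{n-1}$. The high-level strategy is a fooling/distinguishability argument: a resolver for an \HD-PDA recognising $L_n$ must, after reading any prefix $u$, commit to a configuration from which it can correctly handle every continuation. Intuitively, the resolver's chosen configuration after reading $u$ must ``remember'' the last $n$ bits of $u$, because whether $uv \in L_n$ depends on which bit of $u$ lands in the $n^{\text{th}}$-from-end position once a suitable suffix $v$ is appended. Since there are $2^n$ possible last-$n$-bit windows, we want to force $2^n$ distinguishable configurations, and then convert that into a size bound of the form $2^{\Omega(n)}$.

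\textbf{Key steps, in order.}
First I would fix a resolver $\hstrat$ for the putative \HD-PDA $\aut$ and, for each string $u \in \{0,1\}^n$, consider the configuration $c_u$ that the unique $\hstrat$-guided run reaches after processing $u$ (taking care of any $\epsilon$-transitions the resolver inserts). Second, I would show that distinct $u, u' \in \{0,1\}^n$ that differ in some bit position yield configurations that must behave differently: there is a suffix $v$ (of length chosen so that the differing bit lands exactly in the decisive $n^{\text{th}}$-from-end slot) with $uv \in L_n$ but $u'v \notin L_n$ (or vice versa). Because the resolver must produce an accepting run on $uv$ continuing from $c_u$, and no accepting run may continue from $c_{u'}$ on $v$, the configurations $c_u$ and $c_{u'}$ cannot coincide. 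This gives $2^n$ pairwise-distinct reachable configurations. The third step is the crucial conversion from ``many configurations'' to ``large automaton''. Since a PDA has unboundedly many configurations even when small (the stack is unbounded), I cannot simply count configurations; instead I would bound how much information about the last $n$ bits can be encoded given a state space of size $|Q|$ and stack alphabet $\Gamma$, using a pumping-style argument on the stack. Concretely, I expect to argue that if $|Q| + |\Gamma|$ were subexponential, then among the $2^n$ configurations $c_u$ two of them would be forced to agree in a way that a bounded-lookahead suffix cannot distinguish — contradicting their distinguishability.

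\textbf{The main obstacle.}
The hard part will be the third step: ruling out that the \HD-PDA uses its \emph{stack} to store the last $n$ bits cheaply. Unlike the DFA lower bound for $L_n$, where a small state space directly caps the information retained, here the automaton could in principle push the bit-window onto the stack. The key insight I would exploit is that $L_n$ has a fixed finite ``memory horizon'': acceptance of $uv$ depends only on $|v|$ relative to $n$ and on the single bit of $u$ at distance $n$ from the end. I would therefore focus on suffixes $v \in \{0\}^{*}$ of bounded length ($< n$), so that during the processing of $v$ the automaton performs at most a bounded number of stack operations and its behaviour is governed by a \emph{bounded-height} portion of the stack together with its state. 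Restricting attention to such short, stack-shallow suffixes lets me treat the relevant part of each configuration $c_u$ as drawn from a set of size polynomial in $|Q|$ and $|\Gamma|$ but still required to take $2^n$ distinct distinguishable values, forcing $|Q| + |\Gamma| = 2^{\Omega(n)}$. Making this ``bounded stack interaction'' rigorous — precisely identifying which finite summary of a configuration controls acceptance of all short suffixes, and showing this summary ranges over a polynomially-bounded set — is where the real work lies, and I would expect to need a careful analysis of how a length-$(<n)$ run can consult the stack.
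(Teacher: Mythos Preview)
Your plan correctly identifies the central difficulty—that a small PDA can store the $n$-bit window on the stack—but your proposed resolution does not work. You claim that for suffixes $v\in\{0\}^*$ of length $<n$ ``the automaton performs at most a bounded number of stack operations,'' so that only a bounded-height fragment of $c_u$ matters. This is false: \HD-PDA have $\epsilon$-transitions, and the resolver may insert arbitrarily many of them while processing $v$ (or even between the last letter of $u$ and the first of $v$), popping arbitrarily deep into the stack. There is thus no a priori bound on how much of $c_u$ the continuation can inspect, and your ``finite summary'' of $c_u$ cannot be extracted from short suffixes alone. The argument, as sketched, collapses at exactly the step you flagged as the main obstacle.

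The paper sidesteps this by never comparing full configurations. Instead it uses the classical notion of a \emph{step}: a position in a run after which the stack height never drops below its current value. At a step, the future behaviour of the run is determined by the \emph{mode} (state plus top stack symbol), of which there are only $|Q|\cdot|\Gamma|$. To guarantee infinitely many steps and to make the $n$-bit window visible at each of them, the paper feeds the resolver \emph{infinite periodic} inputs $w^\omega$ for $w\in\{0,1\}^n$; the $n$ letters preceding any step are then some rotation of $w$, so the right equivalence on words is \emph{rotational} equivalence, yielding $\ge 2^n/n$ classes. If $|Q|\cdot|\Gamma|<2^n/n$, two non-rotationally-equivalent words share a step mode, and swapping the run suffixes at those steps produces an accepting run on a word not in $L_n$. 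The pieces you are missing are precisely steps, periodic inputs, and rotational equivalence; your configuration-counting framework cannot substitute for them.
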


 Towards proving this, we define the following notions.
We say that a word $w$ of length $n$ is \emph{rotationally equivalent} to a word $w'$ if $w'$ is obtained from $w$ by rotating it.
For example, the word $w=1101$ is rotationally equivalent to  $w'=1110$ since $w'$ can be obtained from $w$ by rotating it once to the right.
Note that the words that are rotationally equivalent form an equivalence class, and thus rotational equivalence partitions $\{0,1\}^n$.
Since the size of each class is at most $n$, the number of equivalence classes is at least $\frac{2^n}{n}$.

Now, we define the \emph{stack height} of a configuration~$c= (q, \gamma)$ as $\sh(c) = \size{\gamma}-1$,
and we define \emph{steps} of a run as usual:
Consider a run~$c_0 \tau_0 c_1 \tau_1 \cdots c_{n-1} \tau_{n-1} c_n$.
A position~$s \in \set{0,\ldots, n}$ is a step if for all $s' \geqslant s$, we have that $\sh(c_{s'})\geqslant \sh(c_s)$, that is, the stack height is always at least $\sh(c_s)$ after position $s$.
Any infinite run of a PDA has infinitely many steps.
We have the following observation.
\begin{rem} \label{rem:steps}
If two runs of a PDA have steps $s_0$ and $s_1$, respectively, with the same mode, then the suffix of the run following the step $s_0$ can replace the suffix of the other run following the step $s_1$, and the resulting run is a valid run of the PDA.
\end{rem}

Now, we are ready to prove Lemma~\ref{lem:GFG-PDA-exponential}. Here, we work with infinite inputs for \HD-PDA. The run induced by a resolver on such an input is the limit of the runs on the prefixes.

\begin{proof}
Let $\aut$ be an \HD-PDA with resolver~$r$ that recognises $L_n$.  We show that $|Q| \cdot |\Gamma| \geqslant \frac{2^n}{n}$, where $Q$ is the set of states and $\Gamma$ is the stack alphabet of $\aut$.

Towards a contradiction, assume that $|Q|\cdot |\Gamma| < \frac{2^n}{n}$.
Then there exist two words $w_0$ and $w_1$ of length~$n$ that are not rotationally equivalent and such that the runs~$\rho_0$ and $\rho_1$ of $\aut$ induced by $\hstrat$ on $w_0^\omega$ and $w_1^\omega$ contain steps with the same mode, at positions $s_0$ and $s_1$ in $\rho_0$ and $\rho_1$ respectively, such that at least $n$ letters are processed  before $s_0$ and $s_1$.
Now consider in each of these two runs the sequence of input letters of length $n$
preceding and including the step position.
Let these $n$-letter words be $w_0'$ and $w_1'$ respectively.
Since $w_0$ and $w_1$ are not rotationally equivalent, $w_0'$ and $w_1'$ differ in at least one position $j \leqslant n$.

W.l.o.g., assume that for $w_0'$, the bit at position $j$ is $0$, while it is $1$ at position $j$ for $w_1'$.
Since the resolver chooses a run such that for every word where the $n^\text{th}$ letter from the end is a $1$ is accepted, this implies that $\rho_0$ does not visit a final state after processing $j-1$ letters after $s_0$, while $\rho_1$ visits a final state after processing $j-1$ letters after $s_1$.

Now we reach a contradiction as follows.
The suffix of $\rho_0$ starting from position $s_0+1$ can be replaced with the suffix of $\rho_1$ starting from position $s_1+1$. 
By Remark~\ref{rem:steps}, this yields a valid run~$\rho$ of $\aut$.
However, since the state that occurs after $j-1$ letters are processed after position $s_1$ in $\rho_1$ is final, after the replacement, the state that occurs after $j-1$ letters are processed after position $s_0$ in $\rho$ is final as well.
However the $n^\text{th}$ letter from the end of the word processed by this accepting run of $\aut$ is a $0$, contradicting that $\aut$ recognises $L_n$.

Thus, we have that $|Q| \cdot |\Gamma|$ is at least equal to the number of rotationally equivalent classes, that is, $|Q| \cdot |\Gamma| \geqslant \frac{2^n}{n}$.
Hence, the size of $\aut$ is at least $\frac{2^{\frac{n}{2}}}{\sqrt{n}}$.
\end{proof}

% To this end, we first show that $L_n$ is 
% recognised by a PDA of size~$O(\log n)$.
% To conclude, we prove that every \HD-PDA recognising $L_n$ has
% at least exponential size in~$n$.

% \input{content/succinctnessVPA.tex}

%%%%%%%%%%%%%%%%%%%%%%%%%%%%%%%%%%%%%%%%%%%%%%%%%%%%%%%%%%%%
%%%%%%%%%%%%%%%%%%%%%%%%%%%%%%%%%%%%%%%%%%%%%%%%%%%%%%%%%%%%
\section{History-deterministic Visibly Pushdown Automata}
\label{section_vpa}
As we shall see in Section~\ref{sec_closure}, one downside of \HD-PDA is that, like $\omega$-\HD-PDA, they have poor closure properties and checking \HDness is undecidable. We therefore consider a well-behaved class of \HD-PDA, namely \HD visibly pushdown automata, \HD-VPA for short, that is closed under union, intersection, and complementation.
%As VPA are determinisable~\cite{AlurM04}, \HD-VPA are not more expressive than DVPA; however, we show that
% in this section we show that 
%\HD-VPA can be exponentially more succinct than DVPA (and this bound is tight) and that checking \HDness is \exptime-complete. We also establish a relation between checking \HDness and the good-enough synthesis or uniformization problem, which is also \exptime-complete for DVPA. %they also enjoy better algorithmic complexity than VPA~\cite{DBLP:conf/fsttcs/LodingMS04}.

Let $\Sigma_c, \Sigma_r$ and $\Sigma_{\sf int}$ be three disjoint sets of \emph{call} symbols, \emph{return} symbols and \emph{internal} symbols respectively.
Let $\Sigma = \Sigma_c \cup \Sigma_r \cup \Sigma_{\sf int}$.
A \emph{visibly pushdown automaton} \cite{AlurM04} (VPA) $\aut=(Q, \Sigma, \Gamma, q_I, \Delta, F)$ is a restricted PDA that pushes onto the stack only when it reads a call symbol, it pops the stack only when a return symbol is read, and does not use the stack when there is an internal symbol.
Formally,
\begin{itemize}
    \item A letter~$a \in \Sigma_{\sf c}$ is only processed by transitions of the form~$(q, X, a, q', XY)$ with $X\in \Gammabot$, i.e. some stack symbol~$Y \in \Gamma$ is pushed onto the stack.
    
    \item A letter~$a \in \Sigma_{\sf r}$ is only processed by transitions of the form~$(q, X, a, q', \epsilon)$ with $X \neq \bot$ or $(q, \bot, a, q',\bot)$, i.e. the topmost stack symbol is removed, or if the stack is empty, it is left unchanged.
    
    \item A letter~$a \in \Sigma_{\sf int}$ is only processed by transitions of the form~$(q, X, a, q',X)$ with $X \in \Gammabot$, i.e. the stack is left unchanged.

    \item There are no $\epsilon$-transitions.
    
\end{itemize}
Intuitively, the stack height of the last configuration of a run processing some $w \in (\Sigmacall \cup \Sigmareturn \cup \Sigmaskip)^*$ only depends on $w$.

We denote by \HD-VPA the VPA that are history-deterministic.
Every VPA (and hence every \HD-VPA) can be determinised, i.e. all three classes of automata recognise the same class of languages, denoted by \vpl, which is a strict subset of \dcfl~\cite{AlurM04}.

\subsection{Succinctness}

While all three classes of VPA are equally expressive, VPA can be exponentially more succinct than deterministic VPA (DVPA)~\cite{AlurM04}. We show that there is an exponential gap both between the succinctness of \HD-VPA and DVPA and between VPA and \HD-VPA. The proof of the former gap again uses a language of bad counters, similar to $C_n$ used in Theorem~\ref{theorem:succinctness}, which we adapt for the VPA setting by adding a suffix allowing the automaton to pop the stack. Furthermore, for the gap between VPA and \HD-VPA, we similarly adapt the language $L_n$ of words where the $n^{\text{th}}$ bit from the end is a $1$, from the proof of Theorem~\ref{theorem:succinctness}, by making sure that the stack height is always bounded by $1$. Such an \HD-VPA is essentially an \HD-NFA, and therefore determinisable by pruning, which means that it is as big as a deterministic automaton for the language.

\begin{thm}\label{theorem:succinct_VPA}
\HD-VPA can be exponentially more succinct than DVPA and VPA can be exponentially more succinct than \HD-VPA.
\end{thm}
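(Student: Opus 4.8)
The plan is to establish the two exponential succinctness gaps separately, adapting the constructions from Theorem~\ref{theorem:succinctness} to respect the visibly pushdown discipline. Since VPA have the property that stack height is determined by the input word alone, I must be careful that the languages I use can be recognized while pushing on call symbols, popping on returns, and leaving the stack unchanged on internal symbols.

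\medskip
\noindent\textbf{Gap between \HD-VPA and DVPA.}
For the first gap, I would adapt the bad-counter language $C_n$. The issue is that $C_n$ as defined uses the stack freely: the push phase pushes the entire input, and the check phase pops it. To make this visibly pushdown, I would designate the counter-value symbols (the bits and separators) so that the push phase uses call symbols and then append a distinguished suffix whose return symbols let the automaton pop the stack during the check phase. Concretely, I would use an alphabet where the counter is written with call symbols, followed by a block of return symbols of matching length that triggers the popping. The \HD-VPA then guesses nondeterministically, during the pop phase, which popped symbol witnesses a counting error; a resolver can make this choice knowing the stack content, exactly as in Lemma~\ref{lemma:succinct_GFGPDA1}. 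This yields an \HD-VPA of size $O(n)$. For the lower bound, I would argue that any DVPA recognizing the adapted language, equivalently any DVPA for its (essentially singleton) complement, must distinguish all $2^n$ counter values. The argument mirrors Lemma~\ref{lemma:succinct_GFGPDA2}: a deterministic machine for the correct-counter word cannot compress the $2^n$ distinct counter values into subexponentially many states and stack symbols, by a Myhill--Nerode-style or grammar-size argument adapted to the visibly pushdown setting.

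\medskip
\noindent\textbf{Gap between VPA and \HD-VPA.}
For the second gap, I would adapt $L_n$, the language of words whose $n^{\text{th}}$ bit from the end is a $1$. As noted in the excerpt, the key is to keep the stack height bounded by $1$. I would encode the bits so that the automaton only ever performs a single push (or none) before the final check, ensuring that every reachable configuration has stack height at most $1$. Under this restriction, an \HD-VPA behaves essentially as an \HD-NFA, and by the result that \HD-NFA are determinisable by pruning~\cite{BKS17}, such an \HD-VPA is no smaller than a deterministic automaton for $L_n$. Since $L_n$ requires $2^{\Omega(n)}$ states deterministically (the standard NFA-versus-DFA example), every \HD-VPA for it has exponential size. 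Conversely, a VPA can guess the distinguished bit and use a counting gadget, just as the PDA $\aut_n$ of Lemma~\ref{lem:PDA-logn} does, but now with bounded stack; I would verify that the guess-and-count strategy can be realized with only $O(n)$ (or even $O(\log n)$) states while respecting the visibly pushdown alphabet discipline. This gives a VPA exponentially smaller than any \HD-VPA.

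\medskip
\noindent\textbf{Main obstacle.}
I expect the principal difficulty to lie in the encodings that reconcile the two counter-based constructions with the rigidity of the visibly pushdown stack discipline. Unlike general PDA, a VPA cannot decide when to push or pop based on its control state; this is dictated by the input letter. So the challenge is to design the alphabets and the interleaving of call and return symbols so that (a) the pushing of the counter and the popping that exposes the witness in the first construction are both driven by the input, and (b) the bounded-stack property of the second construction holds on every reachable configuration, not merely on the intended runs. Once these encodings are fixed, the history-determinism arguments (resolver knows the stack content, respectively reduction to the \HD-NFA case) and the exponential lower bounds follow the templates already established in Lemmata~\ref{lemma:succinct_GFGPDA1}, \ref{lemma:succinct_GFGPDA2}, \ref{lem:PDA-logn}, and \ref{lem:GFG-PDA-exponential}.
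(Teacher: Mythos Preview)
Your plan matches the paper's on both parts. For the first gap, the paper makes $0,1,\$$ call symbols and $\#$ a return symbol, and replaces the single trailing $\#$ by $\#^{2^n(n+1)}$ so that the check phase pops the stack while consuming $\#$'s rather than via $\epsilon$-transitions; the lower-bound argument of Lemma~\ref{lemma:succinct_GFGPDA2} then carries over verbatim, since it only uses that $c_n$ is an infix of the unique rejected word.

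For the second gap your high-level idea is correct, but the parenthetical ``or even $O(\log n)$'' would fail. The $O(\log n)$ counting gadget of Lemma~\ref{lem:PDA-logn} relies on an \emph{unbounded} stack to store a binary counter; once the language forces bounded stack height, a VPA is effectively an NFA and needs $\Theta(n)$ states to count the remaining $n$ letters. The paper accordingly claims only an $O(n)$-state VPA, which still gives the exponential gap. Its concrete encoding is $L_n' = (01+10)^*\cdot(\epsilon+0+1)$ with $n^{\text{th}}$-last letter equal to $1$: the alternating pattern guarantees stack height at most $2$ under \emph{every} partition of $\{0,1\}$ into calls, returns, and internals. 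The lower bound then proceeds by a short case analysis over the three partition types, each time reducing the HD-VPA to an HD-NFA of comparable size (at most $|Q|\cdot|\Gamma|^2$) and invoking determinisability-by-pruning together with the standard $2^{\Omega(n)}$ DFA lower bound. This is slightly stronger than what you outlined, since it does not commit to a single partition.
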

We split the proof into two parts.

\begin{lem}
\HD-VPA can be exponentially more succinct than DVPA.
\end{lem}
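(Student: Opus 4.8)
The plan is to adapt the bad-counter language $C_n$ from Theorem~\ref{theorem:succinctness} to the visibly pushdown setting, preserving both the succinct \HD construction and the exponential lower bound for deterministic automata. The core difficulty with transferring $C_n$ directly is that the $\aut$ from Lemma~\ref{lemma:succinct_GFGPDA1} relies on $\epsilon$-transitions and on a stack discipline that is unconstrained by the input letters, neither of which is permitted in a VPA. So first I would re-partition the alphabet: I would designate the counter digits $0,1$ (and the separator $\$$) as \emph{call} symbols, so that pushing them onto the stack during the push phase is forced by visibility, and introduce a fresh block of \emph{return} symbols that explicitly drive the popping during the check phase. Concretely, I would define a language like $C_n' = \{w r^{|w|_c} \mid w \in C_n\}$ (or a similarly padded variant), where a suffix of return symbols of matching length lets the automaton pop the whole stack; the padding ensures the stack can actually be emptied during the check phase, which a VPA cannot otherwise do on internal moves.

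For the \HD upper bound, I would rerun the three-phase argument of Lemma~\ref{lemma:succinct_GFGPDA1} almost verbatim, reading and pushing the counter onto the stack on call symbols, then using the return-symbol suffix to pop $n{+}1$ symbols at a time while keeping in the finite state the same bookkeeping (value of the first symbol popped; whether a $\$$ has been seen, and whether a $0$ followed it) that detects a faulty increment. The only nondeterministic choice is again the guess of \emph{which} popped block witnesses bad counting, and as in the original proof the resolver can make this choice based on the stack content alone; crucially this does not require inspecting the future, so the automaton is \HD. The state count stays $O(n)$ and the stack alphabet stays constant, giving an \HD-VPA of size $O(n)$.

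For the lower bound I would mirror Lemma~\ref{lemma:succinct_GFGPDA2}: since DVPA are closed under complementation with only a polynomial blow-up (indeed VPA are closed under complement~\cite{AlurM04}), it suffices to show that any VPA recognising the singleton-like complement (the well-padded good counter $c_n\#$ together with its matching return suffix) needs exponentially many states or stack symbols. The grammar-based counting argument carries over: I would convert such a VPA into a context-free grammar generating the single good-counter word and argue, exactly as before via the vertices $v_0, v_2, \ldots$ that split the $2^n$ distinct counter values, that $2^{n-1}$ distinct variables are required, forcing the size of the automaton to be at least $2^{\Omega(n)}$. I expect the main obstacle to be the \textbf{design of the padding suffix}: it must be long enough and visibly-typed so that the \HD-VPA can genuinely pop its stack within the VPA discipline, yet not so permissive that it destroys the uniqueness of the good-counter word in the complement (which is what the lower-bound counting argument needs). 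Getting this encoding right — so that both the $O(n)$ \HD construction and the $2^{\Omega(n)}$ deterministic lower bound go through on the \emph{same} language — is the delicate part; the two automata-theoretic arguments themselves are then routine adaptations of the proofs already given for $C_n$.
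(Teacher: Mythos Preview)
Your overall strategy---make $0,1,\$$ calls, add return letters to drive the pop phase, then recycle the three-phase \HD construction and the grammar-counting lower bound---is exactly the paper's. The one place you have not yet landed is precisely the one you flag: the encoding.

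Your concrete candidate $C_n'=\{w\,r^{|w|_c}\mid w\in C_n\}$ does not have a ``singleton-like'' complement. Its complement contains \emph{every} word that is not well-padded, so the grammar argument of Lemma~\ref{lemma:succinct_GFGPDA2}, which needs the complement to be a single word with $c_n$ as infix, does not transfer directly; you would have to add an extra argument intersecting with the (non-visibly-regular, since it depends on stack height) set of well-padded words. The paper avoids this entirely by a simpler move: it does \emph{not} restrict to well-padded inputs. It reuses $\#$ as the return symbol and sets
\[
C_n' \;=\; \bigl\{\,w \in \{0,1,\$,\#\}^* \;\bigm|\; w \neq c_n\,\#^{\,2^n(n+1)}\,\bigr\},
\]
so the complement is literally a singleton and Lemma~\ref{lemma:succinct_GFGPDA2} applies verbatim. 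On the \HD side, the check phase now consumes the $\#$'s instead of using $\epsilon$-moves, and---this is the extra wrinkle you would need---the automaton stays in a final state throughout the popping as long as the stack is nonempty (any input that stops early is automatically $\neq c_n\#^{2^n(n+1)}$); once the stack empties, a final phase accepts iff any further letter arrives. So your diagnosis of where the difficulty lies is correct; the resolution is just to take the complement of a single padded word over the whole alphabet rather than to carve out a matched-format sublanguage.
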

\begin{proof}
We construct a family~$(C_n')_{n \in \nats}$ of languages
such that there is an \HD-VPA
of size~$O(n)$ recognising $C_n'$,
yet every DVPA recognising $C_n'$ has at least exponential size in $n$.
This family is obtained by adapting
the family $(C_n)_{n \in\nats}$ that we used to prove
the succinctness of \HD-PDA in Section \ref{section_succinctness}:
Once again, we consider
the word $c_n\in (\$\{0,1\}^n)^*$ describing
an $n$-bit binary counter counting from $0$ to $2^{n}-1$.
We consider the languages
\[
C_n'= \big\{w \in \{0,1,\$,\#\}^*\mid w \neq c_n\#^{2^n(n+1)}\big\} \subseteq \{0,1,\$,\#\}^*
\]
 of bad counters, where $0$, $1$ and $\$$ are call symbols and
$\#$ is a return symbol.
The only difference with $C_n$
is that the forbidden word is $c_n\#^{2^n(n+1)}$
instead of $c_n\#$.
An \HD-VPA of size $O(n)$ recognising $C_n'$
is obtained by a small modification of the construction
presented in the proof of 
Lemma \ref{lemma:succinct_GFGPDA1}.
We adapt the construction of the automaton~$\aut$
recognising $C_n$ as follows:
\begin{itemize}
    \item 
    The push phase is identical;
    \item 
    The check phase is performed
    by consuming the $\#$ symbols
    instead of having $\epsilon$-transitions.
    While the stack is not empty,
    $\aut$ accepts even if it has not
    found evidence of bad counting yet.
    Moreover, $\aut$ transitions towards a final sink state
    if a non-$\#$ symbol is read.
    Once the stack is empty, it transitions towards
    the final phase;
    \item 
    In the final phase,
    since the prefix processed up to this point
    ends with an empty stack,
    if the suffix left to read is non-empty
    then the input is not equal to $c_n\#^{2^n(n+1)}$,
    and can be accepted.
\end{itemize}

Finally, we can prove 
that every DPDA (and in particular every DVPA)
recognising $C_n'$
has at least exponential size in $n$
in the exact same way as we proved
Lemma~\ref{lemma:succinct_GFGPDA2}:
The proof only uses the fact that $c_n$
is an infix of the single word rejected by $C_n$, 
% and ignores the $\#$ symbols,
thus $C_n'$ can be treated identically.
Note that this lower bound is independent
of the partition of the letters
into calls, returns, and internals.
\end{proof}

\begin{lem}
\label{lemma_vpavsgfgvpa}
VPA can be exponentially more succinct than \HD-VPA.
\end{lem}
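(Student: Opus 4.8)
I want to exhibit a family $(L_n')_{n>0}$ of visibly pushdown languages such that each $L_n'$ is recognised by a VPA of size $O(\log n)$, while every \HD-VPA recognising $L_n'$ has at least exponential size in $n$. The starting point is the family $L_n = (0+1)^*1(0+1)^{n-1}$ from the proof of Theorem~\ref{theorem:succinctness}, for which a PDA of size $O(\log n)$ exists (Lemma~\ref{lem:PDA-logn}). The issue is that that PDA is \emph{not} visibly pushdown: it pushes a binary counter and then decrements it, so the stack operations are not governed by the input letters. The key idea, as announced in the text preceding the theorem, is to force the stack height to stay bounded by~$1$, so that the VPA is essentially a finite automaton on the informative part of the input, and can therefore use the stack only to count up to~$n$ in a visibly-controlled way.

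\textbf{Definition of the language and the small VPA.} Concretely, I would put $0$ and $1$ as internal symbols and introduce a call symbol $c$ and a return symbol $r$. I would let $L_n'$ consist of words of the form $w\,c^{m}\,r^{m}\,\sigma$ where $w\in(0+1)^*$ and the visibly-controlled tail $c^m r^m$ lets the automaton count: the intended meaning is that the block of $c$'s followed by $r$'s marks off exactly $n$ positions from the end of $w$, and acceptance asks that the designated bit is a $1$. A VPA of size $O(\log n)$ recognising $L_n'$ is built by pushing the binary representation of $n$ (using $O(\log n)$ states and a constant stack alphabet) during the call phase, and decrementing it during the return phase, exactly mirroring the counting gadget of Lemma~\ref{lem:PDA-logn}; since now the pushes happen precisely on call letters and the pops on return letters, the automaton is genuinely visibly pushdown. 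I would arrange the alphabet and the format so that on the accepting inputs the stack height is always at most~$1$ at the point where the guessed bit must be remembered, so that the nondeterministic guess about \emph{which} bit is the $n^{\text{th}}$ from the end cannot be aided by the stack.

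\textbf{The exponential lower bound.} For the lower bound I would argue that any \HD-VPA $\aut$ for $L_n'$ must have exponentially many states. Because the language is designed so that on the relevant prefixes the stack height is bounded (say by~$1$), the resolver of $\aut$ is forced to operate essentially like an \HD-NFA on these prefixes: the stack carries at most one symbol, so the reachable configurations on the informative part range over $Q\times(\Gammabot\cup\Gamma)$, a set of size $O(|Q|\cdot|\Gamma|)$ rather than infinite. By the result cited in the text that \HD-NFA are determinisable by pruning~\cite{BKS17}, an \HD automaton whose stack is bounded is no smaller than a deterministic finite automaton for the induced regular problem. Since deciding the $n^{\text{th}}$ bit from the end requires a DFA with at least $2^n$ states (the classical lower bound for $L_n$, via $2^n$ pairwise distinguishable length-$n$ prefixes), I would conclude $|Q|\cdot|\Gamma|\ge 2^n$, hence $\aut$ has exponential size. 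The cleanest way to make this rigorous is a fooling-set/distinguishability argument directly on configurations: I would show that the resolver, after reading two distinct length-$n$ bit strings, must be in distinguishable bounded-stack configurations, because there is a common continuation (a suitable $c^m r^m$ tail) that is accepted after one but rejected after the other; this yields $2^n$ distinct configurations and the bound $|Q|\cdot|\Gamma|\ge 2^n$.

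\textbf{Main obstacle.} The delicate point is ensuring that the stack really stays bounded precisely where the information-theoretic argument is applied, while still allowing the \emph{small} VPA to use $O(\log n)$ stack depth for its counter. These two demands pull in opposite directions, so the language format must be chosen so that the bit-guessing decision is forced to happen while the stack is shallow, and the unbounded (logarithmic-depth) counting is confined to a visibly-controlled suffix whose structure does not help distinguish the bit strings. Getting this separation right—so that the \HD-VPA cannot smuggle information about the guessed bit into the stack during the counting phase—is the crux; once it is arranged, the reduction to the \HD-NFA/DFA lower bound is routine.
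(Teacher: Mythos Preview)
Your lower-bound idea is sound in spirit and matches the paper's: force the stack to be bounded on the bit-string part, so the \HD-VPA behaves like an \HD-NFA there, and then invoke determinisability-by-pruning together with the classical $2^{\Omega(n)}$ DFA lower bound for ``$n^{\text{th}}$ bit from the end is $1$.'' The paper does exactly this, with a case analysis on which of $0,1$ are call/return/internal; its $(01+10)^*(\epsilon+0+1)$ format guarantees the stack height is bounded in every case.

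Where your proposal breaks is the upper bound. You try to port the $O(\log n)$ PDA of Lemma~\ref{lem:PDA-logn} to a VPA by pushing a binary counter during a call phase and decrementing during a return phase. But VPA cannot implement that gadget: there are no $\epsilon$-transitions, each call pushes \emph{exactly one} symbol, and each return pops \emph{exactly one}. The binary-counter trick in Lemma~\ref{lem:PDA-logn} essentially requires popping an unbounded number of $0$'s and pushing an unbounded number of $1$'s while processing a single input letter, which is impossible in a VPA. More concretely, in your language the $c^m r^m$ tail comes \emph{after} the bit string $w$; but the quantity you must verify---the distance from the guessed $1$ to the end of $w$---is consumed while reading internal letters $0,1$, so neither states nor stack can record it without $\Omega(n)$ resources. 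You note the tension yourself, but it is not a matter of arranging the format cleverly: with $0,1$ internal there is simply no way for a $O(\log n)$ VPA to count $n$ positions inside $w$.

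The paper therefore does not aim for $O(\log n)$. It takes the obvious $O(n)$-state VPA (essentially an NFA that guesses the position and counts $n-1$ further letters in its state), and proves the $2^{\Omega(n)}$ lower bound for \HD-VPA. That still gives an exponential gap, which is all the lemma claims; your proposal would, if it worked, give a double-exponential gap, but that stronger statement is not attainable here.
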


We show that there exists a family~$(L'_n)_{n \in \nats}$ of languages such that there exists a VPA of size~$O(n)$  recognising $L_n'$ while every \HD-VPA recognising the same language has size at least $2^{n/6}$.

Towards this we consider a language $L'_n$ of words in $ (01+10)^* \cdot (\epsilon+0+1)$ with the $n^\text{th}$ last letter being $1$.
We first note that $L'_n$ can be recognised by a  VPA with $O(n)$ states, which checks that the input is in $(01+10)^* \cdot (\epsilon+0+1)$ and nondeterministically guesses the $n^{\text{th}}$ last letter and verifies that it is a $1$.

First, we note that every DFA recognising $L'_n$ has exponential size, which can be shown by counting the equivalence classes of the Myhill-Nerode congruence of $L'_n$ (see, e.g.~\cite{Hopcroft}).

\begin{rem} \label{rem:GFG_Ln_DFA}
Every DFA recognising $L'_n$ has at least $2^{\lceil n /2 \rceil}$ states.
\end{rem}

Using this, we obtain an exponential lower bound on the size of \HD-VPA recognising $L_n'$, thereby completing the proof of Lemma~\ref{lemma_vpavsgfgvpa}.

\begin{lem}
Every \HD-VPA recognising $L'_n$ has at least size~$2^{\lceil n/6 \rceil}$.
\end{lem}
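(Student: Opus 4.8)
The plan is to leverage Remark~\ref{rem:GFG_Ln_DFA}, which already establishes that every DFA recognising $L'_n$ has at least $2^{\lceil n/2\rceil}$ states. The key structural observation is that, on inputs from $(01+10)^* \cdot (\epsilon+0+1)$, a VPA sees strictly alternating call/return behaviour in a way that keeps the stack height tightly controlled. Indeed, if $0$ is designated a call symbol and $1$ a return symbol (or some fixed such assignment), then along any input in $(01+10)^*$ the stack height oscillates and never exceeds a constant bound. The first step is therefore to fix the partition of $\{0,1\}$ into calls and returns witnessing the VPA upper bound of $O(n)$ and to verify that this partition forces the stack height of every configuration reached while reading an $L'_n$-input to stay bounded by a small constant (at most $1$, or a similarly small value).

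\textbf{Bounded stack height reduces the \HD-VPA to an \HD-NFA.} Once the stack height is bounded by some constant, the number of distinct stack contents reachable on the relevant inputs is bounded by $|\Gamma|$ raised to that constant, hence by a polynomial in the size of the \HD-VPA. I would then argue that on this restricted input set the \HD-VPA behaves like a finite automaton whose state set is the set of reachable configurations (state paired with bounded stack content). Crucially, since \HD-NFA are determinisable by pruning~\cite{BKS17} (i.e.\ an \HD-NFA contains an equivalent DFA on its own state set), the \HD-VPA — viewed as an \HD-NFA over the product of states and bounded stack contents — contains an equivalent deterministic automaton of the same size recognising $L'_n$. This deterministic automaton is a DFA for $L'_n$, so its size is at least $2^{\lceil n/2\rceil}$ by Remark~\ref{rem:GFG_Ln_DFA}.

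\textbf{Extracting the size lower bound.} Combining the two steps, if $N$ denotes the size $|Q|+|\Gamma|$ of the \HD-VPA, the induced deterministic device has at most roughly $|Q|\cdot|\Gamma|^{h}$ states, where $h$ is the constant stack-height bound. Setting this $\geqslant 2^{\lceil n/2\rceil}$ and solving for $N$ yields a bound of the form $N \geqslant 2^{\lceil n/c\rceil}$ for an appropriate constant $c$; the target $2^{\lceil n/6\rceil}$ corresponds to $c=6$, which should follow once the precise stack-height bound and the precise form of the product state-space are pinned down.

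\textbf{The main obstacle} is making the reduction from \HD-VPA to \HD-NFA fully rigorous: I must argue that restricting attention to inputs of the form $(01+10)^*\cdot(\epsilon+0+1)$ does not let the \HD-VPA exploit the stack in any essential way, and that the pruning result~\cite{BKS17} genuinely applies to the configuration-graph automaton rather than merely to an abstract NFA. The subtle point is that the resolver of the \HD-VPA resolves nondeterminism based on run prefixes, whereas the pruned DFA must commit positionally on reachable configurations; I would need to check that on the bounded-stack inputs the two agree, so that pruning yields a correct deterministic recogniser of $L'_n$ and the counting bound goes through.
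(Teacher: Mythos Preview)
Your overall strategy—reduce to an \HD-NFA on reachable configurations and then invoke determinisability-by-pruning together with Remark~\ref{rem:GFG_Ln_DFA}—is exactly what the paper does. However, there is a genuine gap in your setup.

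The lemma asserts a lower bound for \emph{every} \HD-VPA recognising $L'_n$, so you are not free to ``fix the partition of $\{0,1\}$ into calls and returns witnessing the VPA upper bound''; you must handle every possible partition of $\{0,1\}$ into call, return, and internal symbols. The paper therefore proceeds by case analysis on this partition. Your bounded-stack-height argument only covers the case where one of $0,1$ is a call and the other a return (there the stack height along any word in $(01+10)^*\cdot(\epsilon+0+1)$ is indeed bounded by $2$, giving $|Q|\cdot|\Gamma|^2 \geqslant 2^{\lceil n/2\rceil}$ and hence size $\geqslant 2^{\lceil n/6\rceil}$). But you miss two other cases: (i) both symbols are return or internal, where the stack is always $\bot$ and the automaton is literally an \HD-NFA; and more importantly (ii) at least one symbol is a call and the other is a call or internal. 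In case~(ii) the stack height is \emph{unbounded}, so your bounded-configuration argument fails outright. The paper handles this case by observing that the stack height is nondecreasing, so only the top stack symbol is ever read, and one can build an equivalent \HD-NFA with state set $Q \times \Gamma$, yielding $|Q|\cdot|\Gamma|\geqslant 2^{\lceil n/2\rceil}$.

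The ``main obstacle'' you identify (rigour of the pruning step) is not the real difficulty—once the configuration space is finite the reduction to an \HD-NFA is routine. The actual missing idea is the case split on the partition and, in particular, the separate argument for the unbounded-but-monotone-stack case.
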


\begin{proof}
The proof is based on the fact that \HD-NFA can be determinised by pruning~\cite{BKS17}, that is, they always contain an equivalent DFA, i.e. the lower bound of Remark~\ref{rem:GFG_Ln_DFA} is applicable to \HD-NFA as well.

Let $\aut$ be an \HD-VPA recognising $L_n'$.
We consider the following cases:
\begin{enumerate}
    \item \emph{Both $0$ and $1$ are either a return symbol or an internal symbol}: The \HD-VPA $\aut$ in this case can essentially be seen as an \HD-NFA with the same set of states, since the stack is not used (it is always equal to $\bot$).
    Given that \HD-NFA are determinisable by pruning, by Remark~\ref{rem:GFG_Ln_DFA} such an \HD-NFA has at least $2^{\lceil n/2 \rceil}$ states.
    \item \emph{At least one of $0$ and $1$ is a call symbol while the other one is a call or an internal symbol}: Let $Q$ be the set of states and $\Gamma$ be the stack alphabet of $\aut$.
    Since the height of the stack is nondecreasing, $\aut$ has only access to the top stack symbol.
    We can thus construct an equivalent \HD-NFA over finite words with states in $Q \times \Gamma$.
    Since \HD-NFA are determinisable by pruning, and using Remark \ref{rem:GFG_Ln_DFA} again, we have that $|Q| \cdot |\Gamma| \geqslant 2^{\lceil n/2 \rceil}$.
    Thus either $|Q| \geqslant 2^{n/4}$ or $|\Gamma| \geqslant 2^{n/4}$.
    Hence for this case, we have that the size of the \HD-VPA is at least $2^{n/4}$.
    \item \emph{One of $0$ and $1$ is a call symbol while the other one is a return symbol}: Note that since a word in $L_n'$ is composed of sequences of $10$ and $01$, the stack height can always be restricted to $2$.
    Thus the configuration space of $\aut$, restricted to configurations on accepting runs, is finite, and there is an equivalent \HD-NFA of size at most $|Q| \cdot |\Gamma|^2$.
    Thus $|Q| \cdot |\Gamma|^2 \geqslant 2^{\lceil n/2 \rceil}$ giving either $|Q| \geqslant 2^{\lceil n/6 \rceil}$ or $|\Gamma| \geqslant 2^{\lceil n/6 \rceil}$.
    Again by the determinizability by pruning argument, we have that the size of the \HD-VPA $\aut$ is at least $2^{\lceil n/6 \rceil}$.
\qedhere
\end{enumerate}
\end{proof}

\subsection{Deciding HDness of VPA}

We now turn to the question of deciding whether a given VPA is \HD. We show decidability  using the \textit{one-token game}, introduced by Bagnol and Kuperberg~\cite{BK18}. It is easier to decide than the game-based characterisation of \HDness of $\omega$-regular automata by Henzinger and Piterman~\cite{HP06}. While the one-token game does not characterise the \HDness of B\"uchi automata~\cite{BK18}, here we show that it suffices for VPA.

\begin{thm}
\label{thm:vpagfgnesscomplexity}
The following problem is $\exptime$-complete: Given a VPA~$\aut$, is $\aut$ \HD?
\end{thm}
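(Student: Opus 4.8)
The plan is to establish both an upper bound (membership in $\exptime$) and a matching lower bound ($\exptime$-hardness) for deciding \HDness of VPA.

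For the upper bound, I would follow the token-game approach of Bagnol and Kuperberg~\cite{BK18}. The key claim is that the \emph{one-token game} $G_1(\aut)$ characterises \HDness of VPA, even though it fails to do so for general B\"uchi automata. In this game, played on the VPA~$\aut$, one player (the \emph{prover}, resolving nondeterminism) builds a run of $\aut$ letter by letter, while the \emph{adversary} simultaneously chooses the next input letter \emph{and} builds a single competing run (the ``token''). The prover wins a play if, whenever the adversary's token-run is accepting, the prover's run is accepting too. The first step is to argue that the prover wins $G_1(\aut)$ if and only if $\aut$ is \HD. The ``only if'' direction is immediate, since a resolver yields a winning strategy. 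For the ``if'' direction—the crux of the argument—I would exploit the visibly pushdown structure: because the stack height is synchronised across both runs (the prover's and the token's move on the same input, and in a VPA the stack operations are dictated by the letter), the two stacks grow and shrink in lockstep. This synchronisation is exactly what is missing for B\"uchi automata and is what should let the one-token game suffice here, rather than needing the full two-token or letter-game characterisation. I would make this precise by showing that a winning prover strategy can be composed with itself to defeat arbitrarily many simultaneous tokens, and hence yields a genuine resolver.

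Once the game characterisation is in place, the second step is to bound the complexity of solving $G_1(\aut)$. The game is played on a configuration space that is essentially a product of two copies of $\aut$, which is again a visibly pushdown structure with synchronised stacks; solving it reduces to a VPA emptiness/safety question, or to a pushdown game with a synchronisation constraint. Using the standard saturation or determinisation machinery for VPA~\cite{AlurM04}—which incurs a single exponential—I expect the winner of $G_1(\aut)$ to be computable in $\exptime$ in the size of $\aut$. Concretely, I would reduce the game to solving a safety game on a finite arena of exponential size obtained by determinising the stack-synchronised product, giving the $\exptime$ upper bound.

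For the lower bound, I would reduce from a known $\exptime$-complete problem on VPA, such as universality of VPA (equivalently, emptiness of the complement, which requires exponential determinisation) or a related game-solving problem. The idea is to encode an $\exptime$-hard instance into a VPA that is \HD precisely when the instance is positive, typically by gadget constructions that make resolving nondeterminism tantamount to solving the hard problem. The main obstacle I anticipate is the ``if'' direction of the game characterisation: proving that winning the \emph{one-token} game (rather than a multi-token or full history-deterministic game) genuinely certifies \HDness for VPA requires carefully leveraging the stack synchronisation to show that a single-token strategy composes into a full resolver. This is the step where the visibly pushdown restriction is essential and where the analogous statement fails for general pushdown or B\"uchi automata.
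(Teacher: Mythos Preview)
Your overall architecture---characterise \HDness via the one-token game, then solve that game in \exptime, and prove hardness by reduction from a known \exptime-hard VPA problem---matches the paper. But the heart of your upper-bound argument, the ``if'' direction of the one-token characterisation, is misdiagnosed.

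You attribute the sufficiency of the one-token game to the visibly pushdown structure and stack-height synchronisation, and propose to ``compose a winning prover strategy with itself to defeat arbitrarily many tokens.'' That is not the mechanism, and it is unclear how you would carry it out. The paper's argument is entirely about \emph{finite-word acceptance}, not about stacks: it considers the family of \emph{copycat} strategies for the adversary, who mirrors the prover's moves until the prover plays a \emph{non-residual} transition (one leaving a configuration from which some continuation $aw$ is accepted, to one from which $w$ is not), at which point the adversary switches to an accepting run witnessing the loss. Any winning prover strategy must therefore avoid non-residual transitions, and over finite words that alone makes it a resolver. This argument works verbatim for NFA; nothing about VPA is used here. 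The contrast you draw with B\"uchi automata is about infinite-word acceptance (where residuality is not enough), not about stack synchronisation---B\"uchi automata have no stacks.

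Stack synchronisation \emph{is} crucial, but only at the next step: it is what makes the one-token game a safety game on the configuration graph of a \emph{single-stack} pushdown system (the product of two copies of $\aut$ with paired stack alphabet), which is then solved directly in \exptime\ via Walukiewicz~\cite{DBLP:journals/iandc/Walukiewicz01}. There is no need to determinise anything; your proposed route through determinisation to a finite exponential arena would also land in \exptime, but is more roundabout. For the lower bound, the paper reduces from VPA \emph{inclusion} (known \exptime-hard from~\cite{AlurM04}), reusing the reduction of~\cite{LZ20}; your suggestion of universality is in the right spirit but you would need to verify its hardness separately.
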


\begin{proof}
We first define the \textit{one-token game}, introduced by Bagnol and Kuperberg~\cite{BK18} in the context of regular languages, for VPA. Fix a VPA $\aut= (Q,\Sigma,\Gamma,q_I,\Delta,F)$, which we assume w.l.o.g.\ to be complete, i.e. every mode has at least one enabled $a$-transition for every $a \in \Sigma$.
The positions of the one-token game consist of pairs of configurations $(c,c')$, starting from the pair containing the initial configuration of $\aut$ twice. At each round $i$ from position $(c_i, c'_i)$:
\begin{itemize}
    \item Player~1 picks a letter $a_i\in \Sigma$, then
    \item Player~2 picks an $a_i$-transition $\tau_i \in \Delta$ enabled in $c_i$, leading to a configuration $c_{i+1}$, then
    \item Player~1 picks an $a_i$-transition $\tau'_i\in \Delta$ enabled in $c_i'$, leading to a configuration $c_{i+1}'$.
    \item Then round $i$ ends. In round $i+1$, the play then proceeds from the position $(c_{i+1},c_{i+1}')$.
\end{itemize}
Note that due to completeness of $\aut$, there is always at least one move available for each player. 

The moves of the two players during a play induce an infinite word $a_0a_1\cdots\in \Sigma^\omega$ and two runs~$c_0\tau_0c_1\tau_1c_2\cdots$ and $c_0'\tau'_0c_1'\tau'_1c_2'\cdots$ built by Player~2 and Player~1 respectively. Player~2 wins if for all $n \geqslant 0$, whenever the run $c_0'\tau'_0\cdots\tau'_n c_{n+1}'$ constructed by Player~1 is accepting then the run~$c_0\tau_0\cdots \tau_nc_{n+1}$ constructed by Player~2 is accepting as well. Recall that VPA do not have $\varepsilon$-transitions, so the two runs proceed in lockstep, i.e. both runs process  $a_0\cdots a_n$.

Observe that this game can be seen as a safety game on an infinite arena induced by the configuration graph of a visibly pushdown automaton, obtained by taking the product of $\aut$ with itself. 
This in turn is solvable in exponential time~\cite{DBLP:journals/iandc/Walukiewicz01}. 

It now suffices to argue that this game characterises whether the VPA $\aut$ is \HD:
We argue that $\aut$ is \HD if and only if Player~2 wins the one-token game on $\aut$.
One direction is immediate: if $\aut$ is \HD, then the resolver induces a strategy for Player~2 in the one-token game which ensures that the run constructed by Player~2 is accepting whenever Player~1 picks a word that is accepted by $\aut$. 
This covers in particular the cases when the run constructed by Player~1 is accepting, which suffices for Player~2 to win.

For the converse direction, we show that a winning strategy for Player~2 in the one-token game can be turned into a resolver for $\aut$.
To this end, consider the family of \textit{copycat strategies} for Player~1 that copy the transition chosen by Player~2 until she plays an $a$-transition from a configuration~$c$ to a configuration~$c'$ such that there is a word~$aw$ that is accepted from $c$ but $w$ is not accepted from $c'$. We call such transitions non-residual. If Player~2 plays such a non-residual transition, then the copycat strategies stop copying and instead play the letters of $w$ and the transitions of an accepting run over $aw$ from $c$. 

If Player~2 wins the one-token game with a strategy $\strat$, she wins, in particular, against every copycat strategy for Player~1. Observe that copycat strategies win any play along which Player~2 plays a non-residual transition. Therefore, $\strat$ must avoid ever playing a non-residual transition.
We can now use $\strat$ to construct a resolver~$r_\strat$ for $\aut$: $r_\strat$ maps a sequence of transitions over a word $w$ and a letter~$a$ to the transition chosen by $\strat$ in the one-token game where Player~1 played $wa$ and used a copycat strategy to construct his run. Then, $r_\strat$ never produces a non-residual transition. As a result, if a word $w$ is in $L(\aut)$, then the run induced by $r_\strat$ over every prefix $v$ of $w$ leads to a configuration that accepts the remainder of $w$. This is in particular the case for $w$ itself, for which $r_\strat$ induces an accepting run. This concludes our argument that $r_\strat$ is indeed a resolver, and $\aut$ is therefore \HD.

Thus, to decide whether a VPA $\aut$ is \HD it suffices to solve the one-token game on $\aut$, which can be done in exponential time.
The matching lower bound follows from a reduction from the inclusion problem for VPA, which is $\exptime$-hard~\cite{AlurM04}, to \HDness (see~\cite[Theorem~6.1(1)]{LZ22} for details of the reduction in the context of $\omega$-\HD-PDA).
\end{proof}

Thus, \HDness for VPA is decidable. 
On the other hand, the problem of deciding whether a language of a VPA is history-deterministic, i.e. accepted by some \HD-VPA, is trivial, as VPA are determinisable~\cite{AlurM04}. 

\subsection{Good-enough Synthesis}

Finally, we relate the \HDness problem to the \textit{good-enough synthesis} problem~\cite{AK20}, also known as the \textit{uniformization} problem~\cite{carayol:hal-01806575}, which is similar to Church's synthesis problem\footnote{Compare the following definitions to that of Gale-Stewart games in Section~\ref{sec_games.tex}, which capture Church's synthesis problem.}, except that the system is only required to satisfy the specification on inputs in the projection of the specification on the first component.

Let $w \in \Sigma_1$ and $w' \in \Sigma_2$ with $|w| = |w'|$. Then, for the sake of readability, we write $\binom{w}{w'}$ for the word~$\binom{w(0)}{w'(0)} \cdots \binom{w(|w|-1)}{w'(|w|-1)}$ over $\Sigma_1 \times \Sigma_2$.

\begin{defi}[\good-synthesis]
Given a language $L\subseteq (\Sigma_1\times \Sigma_2)^*$, is there a function $f:\Sigma_1^* \rightarrow \Sigma_2$ such that for each $w\in \{w\mid \exists w'\in \Sigma_2^*. \binom{w}{w'})\in L\}$ the word $\binom{w}{w'}$ is in $L$, where $w'(n) = f(w(0) \cdots w(n))$ for each $0 \le n < |w|$.
\end{defi}

We now prove that the \good-synthesis problem for \HD-VPA and DVPA is as hard as the \HDness problem for VPA, giving us the following corollary of Theorem~\ref{thm:vpagfgnesscomplexity}.

\begin{cor}\label{cor:good-enough}
The \good-synthesis problem for inputs given by \HD-VPA or DVPA is \exptime-complete.
\end{cor}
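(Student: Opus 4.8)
The plan is to prove \exptime-completeness by reducing \good-synthesis to the \HDness problem for VPA for the upper bound, and conversely reducing \HDness of VPA to \good-synthesis for the lower bound; the latter problem is \exptime-complete by Theorem~\ref{thm:vpagfgnesscomplexity}. Throughout I assume, as is standard for visibly pushdown specifications, that the partition of $\Sigma_1 \times \Sigma_2$ into calls, returns and internals depends only on the first component, so that projecting a VPA over $\Sigma_1 \times \Sigma_2$ onto $\Sigma_1$ again yields a VPA. Recall also that VPA have no $\epsilon$-transitions, so a run processes one letter per transition and a resolver of a VPA is a function of the input prefix alone.

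For the upper bound, given the input automaton $\aut$ over $\Sigma_1 \times \Sigma_2$ (a DVPA, or an \HD-VPA, which is by definition history-deterministic), I would build the projection VPA $\aut_1$ over $\Sigma_1$: it keeps the states and stack alphabet of $\aut$ and has, for each transition of $\aut$ processing $\binom{a}{b}$, a transition processing $a$, kept in bijection with the transitions of $\aut$. Thus a resolver of $\aut_1$ simultaneously reveals an output letter $b$ and an on-the-fly run of $\aut$, and $L(\aut_1)$ is exactly the projection $\{w \mid \exists w'.\ \binom{w}{w'} \in L\}$. The core claim is that \good-synthesis for $L$ is solvable iff $\aut_1$ is \HD. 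For the easy direction, a resolver for $\aut_1$ yields a \good-synthesis function $f$ by reading off, per input prefix $w(0)\cdots w(i)$, the output letter of the chosen transition; the associated run witnesses $\binom{w}{w'} \in L$. Since $\aut_1$ is polynomial in $\aut$, checking its \HDness via Theorem~\ref{thm:vpagfgnesscomplexity} runs in \exptime.

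For the lower bound I reduce \HDness of an arbitrary VPA $\aut = (Q,\Sigma,\Gamma,q_I,\Delta,F)$ to \good-synthesis for a DVPA. I set $\Sigma_1 = \Sigma$ and $\Sigma_2 = \Delta$, and let $\daut$ be the DVPA over $\Sigma \times \Delta$ recognising $L = \{\binom{w}{\rho} \mid \rho \text{ is an accepting run of } \aut \text{ on } w\}$: reading $\binom{w(i)}{\tau_i}$, it checks that $\tau_i$ is a $w(i)$-transition enabled in the currently simulated configuration, updates that configuration, and accepts iff the run ends in a final state. It is deterministic because the transition to simulate is handed to it in the second component, and it respects the visibly pushdown discipline since the stack effect of $\tau_i$ matches the class of $w(i)$; hence $\daut$ has polynomial size. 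Now $\proj_1(L) = L(\aut)$, and because VPA have no $\epsilon$-transitions a \good-synthesis function $f \colon \Sigma^* \to \Delta$ is precisely a resolver for $\aut$: it must map each input prefix to the next transition of an accepting run, for every $w \in L(\aut)$. Thus \good-synthesis for $\daut$ is solvable iff $\aut$ is \HD, giving \exptime-hardness for DVPA inputs, and a fortiori for \HD-VPA inputs since every DVPA is an \HD-VPA.

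The main obstacle is the converse direction of the upper-bound claim. A \good-synthesis function only commits on-the-fly to the output word $w'$, leaving the accepting run of $\aut$ on $\binom{w}{w'}$ existentially quantified, whereas \HDness of $\aut_1$ demands that the whole run be produced on-the-fly. This is exactly where the hypothesis that $\aut$ is history-deterministic (rather than an arbitrary VPA) becomes essential: given a \good-synthesis function $f$, I would compose it with a resolver of $\aut$, feeding the letter $\binom{w(i)}{f(w(0)\cdots w(i))}$ to $\aut$'s resolver at each step to obtain an on-the-fly run, thereby turning $f$ into a resolver for $\aut_1$. For a general, unrestricted VPA the two problems would diverge and \good-synthesis would become a genuine pushdown game of higher complexity, which is precisely why the corollary is stated for \HD-VPA and DVPA inputs.
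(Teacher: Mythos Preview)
Your proposal is correct and follows essentially the same approach as the paper: both reduce \good-synthesis to \HDness of the projected automaton $\aut_1$ for the upper bound (combining a given \good-synthesis function with the resolver of the input \HD-VPA for the harder direction), and both reduce \HDness of an arbitrary VPA to \good-synthesis for the DVPA over $\Sigma\times\Delta$ that recognises accepting runs. Your explicit remark that the visibly pushdown partition of $\Sigma_1\times\Sigma_2$ must depend only on the first component is a useful clarification that the paper leaves implicit.
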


\begin{proof}
We show that deciding \good-synthesis and \HDness, which is \exptime-complete (see Theorem~\ref{thm:vpagfgnesscomplexity}), are polynomially equivalent. We show the upper bound for \HD-VPA and the lower bound for DVPA.

We first reduce the good-enough synthesis problem to the \HDness problem.
Given an \HD-VPA $\aut= (Q,\Sigma_1\times\Sigma_2,\Gamma,q_I,\Delta,F)$, with resolver $r$, let $\aut'$ be $\aut$ projected onto the first component: $\aut'=(Q,\Sigma_1,\Gamma,q_I,\Delta',F)$ has the same states, stack alphabet and final states as $\aut$, but the transitions of $\aut'$ are obtained by replacing each label $\binom{a}{b}$ of a transition of $\aut$ by $a$.
% has an $a$-transition for some $a\in \Sigma_1$ whenever $\aut$ has the same transition over $\binom{a}{b}$ for some $b\in \Sigma_2$. 
Let each transition of $\aut'$ be annotated with a $\Sigma_2$-letter of a corresponding $\aut$-transition. Thus, $\aut'$ recognises the projection of $L(\aut)$ on the first component.

We show that $\aut$ has a \good-synthesis function if and only if  $\aut'$ is \HD.
First, a \good-synthesis function $f$ for $\aut$, combined with $r$, induces a resolver $r'$ for $\aut'$ by using $f$ to choose output letters and $r$ to choose which transition of $\aut$ to use; together these uniquely determine a transition in $\aut'$. Then, if $w\in L(\aut')$, $f$ guarantees that the annotation of the run induced by $r'$ in $\aut'$ is a witness $w'$ such that $\binom{w}{w'}\in \aut$, and then $r$ guarantees that the run is accepting, since the corresponding run in $\aut$ over $\binom{w}{w'}$ must be accepting.
Conversely, assume $\aut'$ is \HD. Then, a resolver for $\aut'$ induces a \good-synthesis function for $\aut$ by reading the $\Sigma_2$-annotation of the chosen transitions in $\aut'$. Indeed, the resolver produces an accepting run with annotation $w'$ of $\aut'$ for every word $w$ in the projection of $L(\aut)$ on the first component. The same run is an accepting run in $\aut$ over $\binom{w}{w'}$ which is therefore in $L(\aut)$. 

For the lower bound, we reduce the \HDness problem of a VPA $\aut=(Q,\Sigma,\Gamma,q_I,\Delta,F)$ to the \good-synthesis problem of a DVPA $\aut'=(Q,\Sigma\times\Delta,\Gamma,q_I,\Delta',F)$. The deterministic automaton $\aut'$ is as $\aut$ except that each transition $\tau \in \Delta$ over a letter $a \in \Sigma$ is replaced with the same transition over $\binom{a}{\tau}$ in $\Delta'$. In other words, $\aut'$ recognises the accepting runs of $\aut$ and its \good-synthesis problem asks whether there is a function that constructs on-the-fly an accepting run for every word in $L(\aut)$, that is, whether $\aut$ has a resolver.
\end{proof}

This result should be compared to the case of LTL specifications, for which the \good-synthesis problem is \twoexptime-complete~\cite{AK20}.

\section{Closure properties}
\label{sec_closure}
We show that, like $\omega$-\HD-PDA, \HD-PDA have poor closure properties, but we start with the positive results and consider closure under intersection, union, and set difference with regular languages.

\begin{thm}\label{thm:gfgcfl_reg}
If $L$ is in \gfgcfl and $R$ is regular, then $L \cup R$, $L \cap R$ and $L \backslash R$ are also in \gfgcfl, but $R \backslash L$ is not necessarily in \gfgcfl.
\end{thm}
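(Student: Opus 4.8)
The plan is to establish the three positive closure results by product constructions with a DFA for the regular language, and then separately exhibit a counterexample witnessing that $R \setminus L$ need not be in \gfgcfl.

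For the positive results, I would take an \HD-PDA $\aut$ recognising $L$ with resolver $\hstrat$, and a DFA $\daut$ recognising $R$. Since VPA (and PDA generally) run a DFA "for free" in the state component without affecting the stack, I would form the synchronous product $\aut \times \daut$, letting the DFA component advance on each $\Sigma$-transition and stay put on $\epsilon$-transitions. The crucial point is that this product introduces \emph{no new nondeterminism}: the DFA is deterministic, so its transitions are forced, and the resolver $\hstrat$ for $\aut$ lifts directly to a resolver for the product by ignoring the DFA component (which is determined anyway). For $L \cap R$ I would set the final states to be pairs where both components are final; for $L \setminus R$, pairs where $\aut$ is final but $\daut$ is not. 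Both are then \HD by the lifted resolver. The case $L \cup R$ is slightly more delicate: the naive choice of accepting when \emph{either} component is final may accept via the regular part a word $w \notin L$, and then the resolver inherited from $\aut$ gives no guidance. I would handle this by observing that $R$ is regular, hence recognised by some DPDA (indeed DFA), which is determinisable and thus \HD; the union of two languages in \gfgcfl is not automatically in \gfgcfl (the class is not known to be closed under union in general), so here I must exploit that \emph{one} side is regular. Concretely, I would run both the DFA for $R$ and the \HD-PDA for $L$ in the product, accept when either is final, and define the resolver to follow $\hstrat$ for the $\aut$-component throughout: if $w \in L$, the $\aut$-run is accepting; if $w \in R \setminus L$, the DFA-component is accepting regardless of which transitions the resolver picks, so the run the resolver produces is still accepting. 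Thus the lifted resolver witnesses \HDness of $L \cup R$.

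\textbf{The main obstacle} I anticipate is precisely this $L \cup R$ case: the argument only works because the regular component is deterministic and runs in lockstep on the input, so its acceptance is insensitive to the nondeterministic choices the resolver makes in the $\aut$-component. I would need to verify carefully that the $\epsilon$-transition bookkeeping does not desynchronise the DFA (it must advance exactly on the letters consumed, which the product construction guarantees since $\epsilon$-transitions leave the DFA state fixed), and that the "no trailing $\epsilon$-transitions" requirement of the resolver definition is preserved — which it is, since whether the combined run ends with a letter-consuming transition is inherited from $\aut$'s run.

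For the negative result, that $R \setminus L$ need not be in \gfgcfl, I would choose $R$ to be a regular (hence trivially \gfgcfl) language and $L \in \gfgcfl$ so that $R \setminus L = R \cap L^c$ falls outside \gfgcfl. The natural candidate is to reuse the separating language from the expressiveness section: taking $R = \{a^n b^m \mid n,m \geqslant 0\}$ (regular) and an appropriate $L \in \gfgcfl$ whose complement intersected with $R$ recovers the language $\{a^n b^n \mid n \geqslant 0\} \cup \{a^n b^{2n} \mid n \geqslant 0\}$ shown \emph{not} to be in \gfgcfl by Lemma~\ref{lem:expressiveness1}. I would design $L$ as the \gfgcfl language $R \setminus (\{a^nb^n\} \cup \{a^nb^{2n}\})$, which a PDA can recognise and can be shown \HD by the same prefix-based resolver technique used for $B_2$, so that $R \setminus L$ is exactly the non-\gfgcfl language. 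Verifying that this $L$ is indeed in \gfgcfl is the one genuine check here; once that is in place, if $R \setminus L$ were in \gfgcfl it would contradict Lemma~\ref{lem:expressiveness1}, completing the proof.
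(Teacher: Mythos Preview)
Your treatment of $L \cap R$ and $L \setminus R$ matches the paper's: a product with a DFA, the lifted resolver, and appropriate final-state pairs. These work because both languages are contained in $L$, so the original resolver is only ever exercised on inputs where it is guaranteed to induce a valid accepting run.

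For $L \cup R$ there is a genuine gap. You assert that ``if $w \in R \setminus L$, the DFA-component is accepting regardless of which transitions the resolver picks, so the run the resolver produces is still accepting,'' but this presupposes that the resolver produces a run at all. By definition a resolver~$\hstrat$ is only required to induce a run on inputs in $L(\aut)$; on a word $w \in R \setminus L$ the transition~$\hstrat(\rho,a)$ may simply fail to be enabled at the current configuration, so the product has no run on $w$ consistent with the lifted resolver and the DFA component never gets to accept. The paper addresses exactly this point before forming the product: it first completes $\aut$ by adding a fresh nonfinal sink state with self-loops and an $a$-transition to the sink from every mode for every $a \in \Sigma$, and extends the resolver to divert to the sink as soon as the input processed so far is no longer a prefix of a word in $L(\aut)$. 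Only with this completed automaton does the union product go through.

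For the negative result your route works but is more roundabout than the paper's, and the justification is slightly off. Your candidate $L = \{a^n b^m \mid m \neq n,\ m \neq 2n\}$ is in fact a \dcfl (push two markers per $a$, pop one per $b$, reject exactly when the $n$th or $2n$th marker is on top at end of input), so it lies in \gfgcfl for a simpler reason than a $B_2$-style resolver. The paper instead takes $R = \Sigma^*$, so that $R \setminus L = L^c$, and appeals to non-closure of \gfgcfl under complementation: $B_2 \in$ \gfgcfl by Lemma~\ref{lem_expressiveness}, while $B_2^c$ is not even context-free by Lemma~\ref{lem_not_DCFL}.
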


\begin{proof}
Consider an \HD-PDA $\aut = (Q, \Sigma, \Gamma, q_\initmark, \Delta, F)$ recognising $L$, and a resolver~$\hstrat$ for $\aut$.
By definition, $\hstrat$ only has to induce a run on every $w \in L(\aut)$, but does not necessarily induce a run on $w \notin L(\aut)$. First, we turn $\aut$ into an equivalent \HD-PDA~$\aut'$ that has a resolver that induces a run on every input $w \in \Sigma^*$.
This property allows us then to take the product of $\aut'$ and a DFA for $R$.

To this end, we add a fresh nonfinal sink state~$q_s$ with a self-loop~$(q_s, X, a, q_s, X) $ for every input letter~$a\in \Sigma$ and every stack symbol~$X \in \Gammabot$. Also, we add transitions so that every configuration has, for every $a \in \Sigma$, an enabled $a$-transition to the sink. 
The resulting PDA~$\aut'$ is equivalent to $\aut$ and $\hstrat$ is still a resolver for it. But, we can also turn $\hstrat$ into a resolver~$\hstrat'$ that induces a run on every possible input as follows:
If $\ell(\tau_0 \cdots \tau_n)$ is a prefix of a word in $L(\aut)$, then we define $\hstrat'(\tau_0 \cdots \tau_n, a) = \hstrat(\tau_0 \cdots \tau_n, a)$ for every $a \in \Sigma$. Otherwise, if $\ell(\tau_0 \cdots \tau_n)$ is not a prefix of a word in $L(\aut)$, we define $\hstrat'(\tau_0 \cdots \tau_n, a) = (q, X, a, q_s, X)$, where $(q, X)$ is the mode of the last configuration of the run induced by $\tau_0 \cdots \tau_n$. 
Thus, as soon as the input can no longer be extended to a word in $L(\aut)$, the run induced by $\hstrat'$ moves to the sink state and processes the remaining input. 

Now, let $\auta$ be a DFA recognising $R$.
For $L \cup R$, we construct the product~PDA~$\aut_\cup$ of $\aut'$ and $\auta$ that simulates a run of $\aut'$ and the unique run of $\auta$ simultaneously on an input word and accepts if either the run of $\aut$ or the run of $\auta$ is accepting.
We note that when an $\epsilon$-transition is chosen in $\aut'$ by the resolver, then no move is made in $\auta$. 
As $\aut'$ has a run on every input, the product PDA~$\aut_\cup$ has one as well.

For $L \cap R$, we construct a PDA $\aut_\cap$ which is similar to $\aut_\cup$ with the difference that $\aut_\cap$ accepts when each of $\aut'$ and $\auta$ has an accepting run on the input word.

The PDA~$\aut_\cup$ and $\aut_\cap$ are both history-deterministic, since only the nondeterminism of $\aut'$ needs to be resolved.

Finally, since $L\backslash R = L \cap R^c$ and the complement~$R^c$ is regular, it follows that $L \backslash R$ is recognised by an \HD-PDA if there is an \HD-PDA recognising $L$.
For $R \backslash L$, note that $\Sigma^*$ is a regular language and $\Sigma^*\backslash L = L^c$. In Theorem~\ref{theorem_closure}, we show that \gfgcfl{} is not closed under complementation, which implies that $R \backslash L$ is not necessarily in \gfgcfl.
\end{proof}

Now, we consider the closure properties of \HD-PDA.

\begin{thm}
\label{theorem_closure}
\HD-PDA are not closed under union, intersection, complementation, set difference, concatenation, Kleene star and homomorphism.
\end{thm}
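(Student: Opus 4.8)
The plan is to establish each non-closure property by exhibiting concrete languages in \gfgcfl{} whose combination falls outside \gfgcfl{}. The central tool will be the language $L = \{a^nb^n \mid n \geqslant 0\} \cup \{a^nb^{2n} \mid n \geqslant 0\}$, which Lemma~\ref{lem:expressiveness1} already shows is \emph{not} in \gfgcfl{} (in fact, a pushed-down version $\widehat L$ lies outside \cfl{} entirely). The strategy is to write $L$, or a closely related separating language, as a union, intersection, difference, concatenation, star, or homomorphic image of languages that \emph{are} in \gfgcfl{}. Since each of the two constituent pieces $\{a^nb^n\}$ and $\{a^nb^{2n}\}$ is in \dcfl{} $\subseteq$ \gfgcfl{} (by Lemma~\ref{lemma_inclusions}), their union is $L \notin$ \gfgcfl{}, immediately giving non-closure under \textbf{union}.

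For the remaining operations I would proceed as follows. For \textbf{complementation}: if \gfgcfl{} were closed under complement, then combined with closure under union with regular languages (Theorem~\ref{thm:gfgcfl_reg}, giving $L\backslash R$ and $L\cup R$) one could derive closure under intersection via De Morgan, and more directly one shows that $L$ itself can be realised as a Boolean combination forcing a contradiction; the cleanest route is to observe that $R\backslash L = L^c$ for $R=\Sigma^*$ and argue that $L^c$ would have to be in \gfgcfl{} while $L$ is not, contradicting closure under complement. For \textbf{intersection}: having non-closure under complement and closure under union, non-closure under intersection follows by De Morgan, since $L_1 \cap L_2 = (L_1^c \cup L_2^c)^c$ — but since complement is itself not available, I would instead exhibit a direct witness, writing $L$ or a variant as an intersection of two \gfgcfl{} (ideally \dcfl{}) languages, for instance intersecting a language enforcing the $a^nb^n$/$a^nb^{2n}$ shape with a regular padding. \textbf{Set difference} non-closure follows from non-closure under intersection together with $L_1 \cap L_2 = L_1 \backslash (L_1 \backslash L_2)$, or directly since $L^c = \Sigma^* \backslash L$ and $\Sigma^*$ is in \dcfl{}.

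For \textbf{concatenation}, \textbf{Kleene star}, and \textbf{homomorphism}, I would engineer separating instances using marker symbols so that the operation reconstructs the forbidden union. For concatenation, take languages over disjoint or marked alphabets whose concatenation encodes the choice between the $b^n$ and $b^{2n}$ branches; the marker ensures each factor is deterministic (hence \gfgcfl{}) while the concatenated language inherits the non-\gfgcfl{} character of $L$. For Kleene star, a single well-chosen \gfgcfl{} language $K$ whose iterates $K^*$ encode unboundedly many independent $a^nb^n$-versus-$a^nb^{2n}$ decisions should land outside \gfgcfl{}; the key is that history-determinism cannot commit to the exponent relationship across repeated blocks. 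For homomorphism, one defines a \gfgcfl{} language over an auxiliary alphabet with separate symbols for the two $b$-lengths, then applies the erasing/collapsing homomorphism that identifies them, producing $L$; since $L \notin$ \gfgcfl{} this witnesses non-closure.

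\textbf{The main obstacle} will be the concatenation, star, and homomorphism cases: for these I must verify that the \emph{source} languages are genuinely history-deterministic (not merely context-free) while the resulting language provably escapes \gfgcfl{}. The escape direction can lean on the proof technique of Lemma~\ref{lem:expressiveness1} — namely, that an \HD-PDA for the resulting language would let one build a PDA for a non-context-free language such as $\widehat L = L \cup \{a^nb^nc^n\}$, yielding a contradiction. The delicate point is ensuring the marker/encoding does not accidentally make the target language deterministic or leave enough regular structure that Theorem~\ref{thm:gfgcfl_reg}'s positive closure results apply. I expect the cleanest uniform witness to be built around $L$ and its relatives, with each operation's source languages chosen to be \dcfl{} (thus unambiguously in \gfgcfl{}) so that only the non-membership of the output requires real argument, reusing the reduction to a non-\cfl{} language established earlier.
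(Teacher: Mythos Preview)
Your union case matches the paper exactly. But your complementation argument has a genuine gap: to witness non-closure under complement you need a language \emph{in} \gfgcfl{} whose complement is \emph{not} in \gfgcfl{}. You propose to use $L = \{a^nb^n\} \cup \{a^nb^{2n}\}$, but $L$ itself is not in \gfgcfl{}, so it cannot serve as the starting point. Your suggested route (``$L^c$ would have to be in \gfgcfl{} while $L$ is not, contradicting closure under complement'') only works if you first establish that $L^c \in$ \gfgcfl{}, which you never do and which is not obvious. The paper instead uses $B_2 = \{a^i\$a^j\$b^k\$ \mid k \leqslant \max(i,j)\}$: Lemma~\ref{lem_expressiveness} shows $B_2 \in$ \gfgcfl{}, and Lemma~\ref{lem_not_DCFL} shows $B_2^c \notin$ \cfl{}, so in particular $B_2^c \notin$ \gfgcfl{}. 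This is the missing idea. For intersection, your De~Morgan remark is also backwards (closure under complement and union would \emph{give} closure under intersection, not the other way), though you correctly retreat to a direct witness; the paper uses the standard $\{a^nb^nc^m\} \cap \{a^mb^nc^n\} = \{a^nb^nc^n\} \notin$ \cfl{}.

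For concatenation, Kleene star, and homomorphism, your proposal is only an outline, and the delicate point you flag is real: you must exhibit concrete source languages in \gfgcfl{} and prove the result lands outside. The paper's constructions are worth knowing. For concatenation and star it uses $L_5 = cL_1 \cup L_2$, which is in \dcfl{} because the leading $c$ disambiguates; then $c^*L_5$ (respectively $L_5^*$) intersected with the regular language $ca^*b^*$ yields $cL_1 \cup cL_2$, and a short argument shows that if this were in \gfgcfl{} then so would $L_1 \cup L_2$ be, contradiction. For homomorphism it tags the $a$'s with $1$ or $2$ to make the language \dcfl{}, then projects the tags away to recover $L_1 \cup L_2$. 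These are exactly the ``marker'' tricks you sketch, but the proof requires the intersection-with-regular step (Theorem~\ref{thm:gfgcfl_reg}) to bring the concatenation/star output back to the known non-\gfgcfl{} witness; without that reduction your plan is incomplete.
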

\begin{proof}
% We need to prove the non-closure of \\HDcfl under union, intersection, complementation, set difference, concatenation, Kleene star and homomorphism.
The proofs for union, intersection, complementation, and set difference are similar to those used for $\omega$-\HD-PDA~\cite[Theorem 5.1]{LZ22}. We state them here for completeness.
% \begin{lem} \label{closure_union}
% \gfgcfl is not closed under union.
% \end{lem}

To show non-closure under union, consider the languages $L_1 = \{a^nb^n  \mid  n \geqslant 0\}$ and $L_2 = \{a^nb^{2n}  \mid  n \geqslant 0\}$ respectively. There exist a DPDA recognising $L_1$ and a DPDA recognising $L_2$.
Hence by Lemma~\ref{lemma_inclusions}, there also exist an \HD-PDA recognising $L_1$ and an \HD-PDA recognising $L_2$.
However, we show in Lemma \ref{lem:expressiveness1} that $L_1 \cup L_2$ is not recognised by any \HD-PDA.
% \end{proof}

% \begin{lem} \label{closure_intersection}
% \gfgcfl is not closed under intersection.
% \end{lem}

% \begin{proof}
For intersection, consider the languages $L_3 = \{a^nb^nc^m  \mid  m,n \geqslant 0\}$ and $L_4 = \{a^mb^nc^n  \mid  m,n \geqslant 0\}$.
There exist DPDA recognising $L_3$ and $L_4$.
Hence by Lemma \ref{lemma_inclusions} there exist \HD-PDA recognising $L_3$ and $L_4$.
Now let $L  = L_3 \cap L_4 = \{a^nb^nc^n  \mid  n \geqslant 0\}$.
As $L$ is not a \cfl  there does not exist any \HD-PDA recognising $L$.
% \end{proof}

% \begin{lem} \label{closure_complementation}
% \gfgcfl is not closed under complementation.
% \end{lem}

% \begin{proof}
For non-closure under complementation, recall the language
\begin{align*}
    B_2 &= \{a^i\$a^j\$b^k\$ \mid k \leqslant \max(i,j)\}.
\end{align*}
We prove in Lemma \ref{lem_expressiveness} that
$B_2 \in $ \gfgcfl,
yet Lemma~\ref{lem_not_DCFL} shows that its complement $B_2^c$
is not even a context-free language.
% \end{proof}

% \begin{lem} \label{closure_difference}
% \gfgcfl is not closed under set difference.
% \end{lem}
% \begin{proof}
Closure under set difference implies closure under complementation since for every language $L$ over alphabet $\Sigma$, we have that the complement~$L^{c}$ is equal to $\Sigma^* \backslash L$.
% \end{proof}

% \begin{lem} \label{closure_concatenation}
Now we show that \gfgcfl is not closed under concatenation.
% \end{lem}
% \begin{proof}
Consider again the languages $L_1 = \{a^n b^n \mid n \geqslant 0\}$, and $L_2 = \{a^n b^{2n} \mid n \geqslant 0\}$.
% Note that $L_1$ and $L_2$ are DCFLs and hence they are \gfgcfls.
We showed that $L_1 \cup L_2$ is not an \gfgcfl.
Now let $L_5 = cL_1 \cup L_2$.
Clearly $L_5$ is a DCFL and hence an \gfgcfl since the initial letter determines whether the word is in $cL_1$ or in $L_2$.
Now the language represented by the regular expression $c^*$ is regular, and hence an \gfgcfl.
We argue that $c^*L_5$ is not an \gfgcfl, and hence \gfgcfl{}s are not closed under concatenation.

Assume for contradiction that $c^*L_5$ is an \gfgcfl.
Let $L_6 = c^*L_5 \:\cap \: ca^*b^* =cL_1 \cup cL_2$.
Since \gfgcfl{s} are closed under intersection with regular languages (see Theorem~\ref{thm:gfgcfl_reg}), $L_6$ is also an \gfgcfl.
Now if $L_6$ is an \gfgcfl, then there exists an \HD-PDA $\aut_6$ with a resolver $r$ such that for a word $w=cv \in L_6$ where $v \in L_1 \cup L_2$, the resolver $r$ after reading $c$ reaches some state $q$ of $\aut_6$ from which it induces an accepting run on $v$.
This allows us to construct from $\aut_6$ an \HD-PDA $\aut_{1 \cup 2}$ accepting $L_1 \cup L_2$ with $q$ as the initial state.
We reach a contradiction since $L_1 \cup L_2$ is not an \gfgcfl.
% \end{proof}

% \begin{lem} \label{closure_Kleene_star}
% \gfgcfl is not closed under Kleene star.
% \end{lem}
% \begin{proof}
Now we show that \gfgcfl{s} are also not closed under Kleene star.
Again we consider the language $L_5$ above which is an \gfgcfl.
We show that $L_5^*$ is not an \gfgcfl.
Had $L_5^*$ been an \gfgcfl, this would imply that $L_5^* \: \cap \: ca^*b^*$ is an \gfgcfl.
However $L_5^* \: \cap \: ca^*b^* = cL_1 \cup cL_2$ which is the language $L_6$ defined above.
Now as we have already shown that the language $L_6$ is not an \gfgcfl, this implies that $L_5^*$ is also not an \gfgcfl.
% \end{proof}

% \begin{lem} \label{closure_homomorphism}
% \gfgcfl is not closed under homomorphism.
% \end{lem}
% \begin{proof}
For non-closure under homomorphism, consider the language
\begin{align*}
L ={} &{}
\left\{
\,\binom{a}{1}^n \binom{b}{\#}^n 
\,\middle|\, n\geqslant 0 \, \right\} \, \cup \,
\left\{
\,\binom{a}{2}^n \binom{b}{\#}^{2n} 
\,\middle|\, n\geqslant 0 \, \right\}
\end{align*}
is recognised by a DPDA, and hence by an \HD-PDA using Lemma \ref{lemma_inclusions}, but its projection (which is a homomorphism)
\[ 
\set{ a^n b^n \mid n\geqslant 0} \cup \set{ a^n b^{2n} \mid n\geqslant 0}
\]
cannot be recognised by an \HD-PDA (see Lemma~\ref{lem:expressiveness1}). 
\end{proof}

\section{Compositionality with games}\label{sec:composition}
One of the appeals of \hd~automata consists of the good compositional properties that they enjoy, which allows them to be used to solve  games. Namely, we consider arena-based games, in which two players, called Player~1 and Player~2, build a path, with the owner of the current position choosing an outgoing edge at each turn, and Player 2 wins if the label of the path is in the winning condition. If the winning condition is the language of a deterministic or \hd~automaton $\daut$ with acceptance condition $C$, taking the product (or composition) of the arena with $\daut$ yields an arena which, when treated as a game with winning condition $C$, has the same winner as the original game with winning condition $L(\daut)$. In short, this product construction reduces games with winning conditions recognised by deterministic or \hd~$C$-automata into $C$-games. For example, it can be used to reduce games with $\omega$-regular winning conditions, which are recognised by \hd{} and deterministic parity automata, into parity games. In this sense, \hd~ automata are just as good as deterministic automata when it comes to solving games via composition, making them appealing for applications such as reactive synthesis, which can be solved via games.

Here we first show that \gfgpda, like regular \hd~ automata, enjoy this compositional property, both with respect to finite and infinite arenas.
We then consider the converse question, namely, whether all \pda~that behave well with respect to composition, are \gfgpda. This is the case for ($\omega$-)regular automata~\cite{BL19}, but not for quantitative automata~\cite{BL21}. We show that for \pda~over finite words, those that enjoy compositionality with infinitely branching games are exactly those that are \hd. On the other hand, unlike for the ($\omega$-)regular setting, compositionality with finitely branching games does not suffice to guarantee that the automaton is \hd, even for automata over finite words.

Our proof of equivalence of compositionality with games and history-determinism for PDA over finite words uses the determinacy of  safety games. 
Generalising this proof to automata over infinite words would require determinacy of games with $\omega$-context-free winning conditions, which is a large cardinal assumption~\cite{Fin13}.

A $\Sigma$-arena $A=(V, V_1, V_2,E,\iota, \ell)$ consists of a directed graph~$(V, E)$ of which the positions~$V$ are partitioned into those belonging to Player~$1$, $V_1$, and those belonging to Player~$2$, $V_2$, of which the edges $E$ are labelled with $\Sigma \cup \set{\sink}$ via a labelling $\ell\colon E\rightarrow \Sigma\cup \{\sink\})$, and rooted at an initial position $\iota\in V$. We assume that all $\sink$-edges lead to a terminal position, that is,  without outgoing edges, that all other positions have at least one successor, and that all edges leading to a terminal position are labelled $\sink$. 

A play on $A$ is a finite path ending in a terminal position, or an infinite path. A strategy for a player is a mapping from finite paths ending in a position that belongs to that player to one of its outgoing edges. A play is consistent with Player~$i$'s strategy $\sigma$ if for all of its prefixes $\pi$ ending in $V_i$, the next edge is $\sigma(\pi)$.

An (arena-based) game~$(A, L)$ consists of a $\Sigma$-arena~$A$ and a language~$L\subseteq \Sigma^*$. 
A finite play is winning for Player~2 if it is labelled with a word in $L\cdot\sink$. 
Note that Player~$2$ loses in particular all infinite plays. 
Winning strategies are defined in the usual way, and we say that a player wins the game if they have a winning strategy.

We now consider the composition of an arena and an automaton, that is, a product game in which Player 2 not only has to guarantee that the label of the play in the arena is in the language of the automaton, but she also has to build an accepting run over that word, transition by transition, as the play in the arena progresses.

Formally, the positions of the product game $G(A,\aut)$ of a $\Sigma$-arena $A=(V,V_1,V_2,E,\iota, \ell)$ and a \pda~$\aut=(Q,\Sigma, \Gamma,q_I,\Delta,F)$ consists of a position in $V$ and a configuration of $\aut$, the initial position being $(\iota, q_I, \bot)$.
At each round~$i$ from position~$(v,q,\gamma)$, the player who owns the position~$v$ chooses an outgoing edge $(v,v')\in E$, labelled by some letter $\ell(v,v')=a$. If $a\neq \sink$, Player~$2$ then chooses a possibly empty sequence of $\varepsilon$-transitions followed by one $a$-transition, so that the combined sequence induces a finite run from  $(q,\gamma)$ leading to some $(q',\gamma')$.
Then, the play proceeds in round~$i+1$ from $(v', q',\gamma')$. If $a=\sink$, then the play ends in $(v',q,\gamma)$, as $v'$ is by assumption a terminal position of the arena. Player~$2$ wins by reaching a position~$(v,q,\gamma)$ where $v$ is the terminal position and $q$ is final. Again, strategies, winning strategies and winning the game are defined as expected.

Observe that $G(A,\aut)$ may have an infinite number of positions, and may have infinite branching, due to unbounded $\varepsilon$-transition sequences. If $A$ is finitely representable, so is $G(A,\aut)$. In particular, if $A$ is finite, then $G(A,\aut)$ is a reachability game on the configuration graph of a pushdown automaton.

\begin{defi}
Let $\aut$ be a \pda\ over an alphabet $\Sigma$.
\begin{itemize}
    \item $\aut$ is compositional if the following holds for all (finitely or infinitely branching) $\Sigma$-arenas~$A$: Player~$2$ wins the game~$(A,L(\aut))$ if and only if she wins $G(A,\aut)$.
    
    \item $\aut$ is weakly compositional if the following holds for all finitely branching $\Sigma$-arenas~$A$: Player~$2$ wins the game~$(A,L(\aut))$ if and only if she wins $G(A,\aut)$.
\end{itemize}
\end{defi}

Note that compositionality implies weak compositionality. 
In the remainder of this section, we compare history-determinism and (weak) compositionality. 
\begin{lem}
\label{lem:hdimpliescomposition}
Every \gfgpda\ is compositional. 
\end{lem}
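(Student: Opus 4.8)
The plan is to show that being history-deterministic lets Player~2 turn any winning strategy in the abstract game $(A, L(\aut))$ into a winning strategy in the product game $G(A,\aut)$, while the converse direction is essentially trivial. The key asset is the resolver $\hstrat$ for $\aut$, which supplies, on-the-fly and knowing only the run prefix built so far plus the next input letter, the sequence of $\varepsilon$-transitions and final $a$-transition needed to process that letter on an accepting trajectory.

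First I would dispatch the easy implication: if Player~2 wins $G(A,\aut)$, she wins $(A, L(\aut))$. Indeed, project any winning strategy in the product onto the arena component; any play consistent with the projected strategy, together with Player~2's transition choices, is a play of $G(A,\aut)$ and hence reaches a terminal position $(v,q,\gamma)$ with $v$ terminal and $q$ final. This means Player~2 has built an accepting run of $\aut$ over the arena-label $w$ of the play, so $w \in L(\aut)$ and the play is won in $(A, L(\aut))$ as well. (One should note the labelling convention: a winning finite play in $(A,L(\aut))$ is labelled by a word in $L\cdot\sink$, and the accepting run in $G(A,\aut)$ is built over the $\Sigma$-part of this label, with the terminal $\sink$-edge carrying no transition.)

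The substantive direction is to assume Player~2 wins $(A, L(\aut))$ via some strategy $\sigma$ and build a winning strategy $\sigma'$ in $G(A,\aut)$. The idea is to play in the arena component exactly according to $\sigma$, and to resolve $\aut$'s nondeterminism using $\hstrat$: whenever the arena play takes an edge labelled by some $a \in \Sigma$, Player~2 responds with the $\varepsilon$-transition block and the $a$-transition that $\hstrat$ prescribes, given the run of $\aut$ built so far and the letter $a$. For positions owned by Player~1, $\sigma'$ simply inherits whatever edge Player~1 chooses, but still uses $\hstrat$ to pick the accompanying transitions. Since $\sigma$ is winning, every maximal play consistent with it in the arena is a finite play whose label $w\cdot\sink$ has $w \in L(\aut)$; because $\aut$ is \HD, the resolver $\hstrat$ induces an accepting run of $\aut$ on this $w$, and this run has no trailing $\varepsilon$-transitions, so it terminates precisely when the last $\Sigma$-letter of $w$ is processed. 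Tracking this run alongside the arena play yields a play of $G(A,\aut)$ ending in $(v,q,\gamma)$ with $v$ terminal and $q$ final, i.e. a win for Player~2.

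The main obstacle I anticipate is purely bookkeeping rather than conceptual: one must verify that the run-prefix argument fed to $\hstrat$ in the product game agrees with the run-prefix that $\hstrat$ would see in its own definition, so that the resolver's guarantee (an accepting $\varepsilon$-trailing-free run on every $w \in L(\aut)$) actually applies. The delicate point is that $\hstrat$ is defined to react to the \emph{run prefix} of transitions, not the input prefix, and the product game reveals the arena letters one at a time; I would make explicit that $\sigma'$ maintains the invariant that, after the arena has emitted input prefix $v$, the transitions chosen by Player~2 form exactly the run prefix that $\hstrat$ prescribes on $v$, so the resolver's choices are always well-defined and consistent. Handling the empty-word and $\sink$-only plays, and confirming that infinite plays cannot arise on $\sigma$-consistent plays won by Player~2 in $(A, L(\aut))$ (since those are lost for Player~2 by definition), completes the argument.
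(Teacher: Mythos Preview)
Your proposal is correct and follows essentially the same approach as the paper: use the arena strategy together with the resolver to win $G(A,\aut)$ from a win in $(A,L(\aut))$, and project onto the arena component for the converse. The paper's proof is a terse three-sentence sketch of exactly this; your additional bookkeeping (tracking that the run prefix fed to $\hstrat$ matches its definition, ruling out infinite plays) is sound and simply fleshes out details the paper leaves implicit.
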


\begin{proof}
Let $\aut$ be an \gfgpda\ over $\Sigma$ and let $A$ be a $\Sigma$-arena. 
If Player~$2$ wins the game $(A,L(\aut))$ with a strategy $\strat$, then she wins $G(A,\aut)$ by using $\strat$ in the $A$-component of $G(A,\aut)$ and the resolver $r$ that witnesses the \hdsm~of $\aut$ in the $\aut$-component of $G(A,\aut)$. Then, $\strat$ guarantees that the word built in $A$ is in $L(\aut)\cdot\sink$, and $r$ guarantees that Player~$2$ reaches a final state at the end of the word in $L(\aut)$, before $\sink$ is encountered.

Conversely, a winning strategy for Player~$2$ in $G(A,\aut)$ projected onto the $A$-component is a winning strategy for Player~$2$ in $(A, L(\aut))$.
\end{proof}

We now consider the converse: whether it is the case that all (weakly) compositional \pda\ are \gfgpda. In the regular setting, automata are weakly compositional if and only if they are compositional~\cite{BL19}. 
Here, however, we observe that the answer depends on whether infinite branching is permitted in the arenas considered. 

\begin{lem}\label{lem:composition-finite}
There exists a \pda\ that is weakly compositional, but not history-deterministic.
\end{lem}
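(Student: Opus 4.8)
The plan is to exhibit a single \pda\ $\aut$ that is not \hd\ yet is weakly compositional. The separating phenomenon is a compactness failure created by $\epsilon$-transitions: for every bounded collection of inputs a resolver-like strategy works, but no single one works for all of $L(\aut)$. Concretely, take $\Sigma=\{a,\$\}$ and let $\aut$ recognise $L(\aut)=\{a^n\$\mid n\ge 0\}$ as follows. From the initial state $q_\initmark$, $\aut$ uses $\epsilon$-transitions $(q_\initmark,X,\epsilon,q_\initmark,XA)$ to push an arbitrary number of markers $A$ (a ``budget''), then commits with an $\epsilon$-transition $(q_\initmark,X,\epsilon,p,X)$ to a reading state $p$; in $p$ it pops one marker per input letter via $(p,A,a,p,\epsilon)$, and on reading $\$$ it moves to the single final state via $(p,X,\$,f,X)$ and $(q_\initmark,X,\$,f,X)$. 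Thus $a^n\$$ is accepted precisely by the run that pushes at least $n$ markers, so $L(\aut)=a^*\$$, and crucially \emph{all} pushes happen strictly before the first $a$ is read, while \emph{overshooting} the budget is harmless.

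First I would show $\aut$ is not \hd. Any resolver $\hstrat$ must, at the empty run prefix with next letter $a$, output the same sequence of $\epsilon$-pushes for every word $a^n\$$ with $n\ge 1$, since this sequence is a function of the (common) run prefix and the (common) next letter $a$. Hence either $\hstrat$ pushes a fixed number $k$ of markers and then commits, in which case its run on $a^{k+1}\$\in L(\aut)$ gets stuck at the $(k{+}1)$-st $a$ with an empty marker stack and is therefore not accepting, or $\hstrat$ loops forever on $\epsilon$-pushes, in which case its run on $a\$$ never processes a single letter. Either way some word of $L(\aut)$ is not accepted by the run induced by $\hstrat$, so no resolver exists and $\aut$ is not \hd.

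Next I would prove weak compositionality. The direction ``Player~$2$ wins $G(A,\aut)\Rightarrow$ Player~$2$ wins $(A,L(\aut))$'' is the projection argument already used in Lemma~\ref{lem:hdimpliescomposition} and holds for every \pda. For the converse, fix a finitely branching arena $A$ and a winning strategy $\strat$ for Player~$2$ in $(A,L(\aut))$. Since Player~$2$ loses all infinite plays, every $\strat$-consistent play is finite, and the tree of $\strat$-consistent plays is finitely branching, so K\"onig's lemma bounds its depth by some $N$; thus every word labelling such a play has at most $N$ occurrences of $a$. Player~$2$ now plays $\strat$ in the $A$-component of $G(A,\aut)$ and, in the $\aut$-component, pushes exactly $N$ markers as leading $\epsilon$-transitions before processing the first $a$, then reads and pops as above. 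Because $\strat$ guarantees that every $\sink$ is reached after a word $a^n\$\in L(\aut)$ with $n\le N$, the budget of $N$ markers always suffices (overshoot being harmless), so the run induced in the $\aut$-component is accepting at every terminal position, and Player~$2$ wins $G(A,\aut)$.

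The main obstacle is making the two requirements coexist: the nondeterminism must be a genuine obstruction to \hdsm---an irrevocable budget choice fixed before any length information is available---while being neutralised as soon as an \emph{upper bound} on the input length is known. The twin facts that make this work, and which I would verify carefully, are (i) that finite branching together with a winning strategy in $(A,L(\aut))$ forces a uniform length bound via K\"onig's lemma, so Player~$2$ may legitimately let her budget depend on $N$ without inspecting the ongoing play, and (ii) that overshooting the budget never destroys acceptance, so the single choice ``push $N$ markers'' simultaneously handles every word the arena can produce. This is exactly the gap that closes under infinite branching, where plays may be arbitrarily long (though still finite) and no fixed finite budget can serve them all, matching the compositionality-equals-\hdsm\ picture for infinitely branching games.
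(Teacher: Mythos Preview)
Your proposal is correct and follows essentially the same approach as the paper: the paper uses a \pda\ for $a^*b$ that first pushes a nondeterministic budget via $\epsilon$-transitions and then pops one symbol per $a$, and argues non-\HDness and weak compositionality exactly as you do, invoking K\"onig's lemma to bound the number of $a$'s in any play consistent with a winning strategy on a finitely branching arena. Your version is slightly more explicit about why K\"onig applies (finite branching of the strategy tree plus finiteness of all winning plays), but the construction and the two arguments are the same.
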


\begin{proof}
Consider the 3-state \pda~in Fig.~\ref{fig:not-hd} that recognises $a^*b$, by first guessing an upper bound $m$ on the number of $a$'s in the word, then processing $a$ no more than $m$ times, followed by a $b$. 
% In more detail, it has an $\varepsilon$-loop in the initial state that pushes $a$ on the stack, an $\varepsilon$-transition to a second state with both a self-loop that processes $a$ while popping the stack, and a $b$ transition to a final state.
The automaton accepts $a^nb$ by pushing $m\geq n$ elements onto the stack before reading $n$ times $a$, followed by $b$.

\begin{figure}\label{fig:not-hd}
    \centering
\begin{tikzpicture}[thick]
\def\y{.9}
\def\x{1.5}
\def\b{10}
\tikzset{every state/.style = {minimum size =22}}
\node[state] (i) at (0*\x,0) {$q_1$};
\node[state] (m) at (2*\x, 0) {$q_2$};
\node[state,fill=lightgray] (fin) at (4*\x,0) {$f$};

\path[-stealth]
(-.75,0) edge (i)
(i) edge[loop above] node[] {$\varepsilon, X\mid Xa$} ()
 (i) edge node[below] {$\varepsilon,X\mid X$} (m)
 (m) edge[] node[below] {$b,X\mid X$} (fin)
% % (i) edge[bend right=90] node[below] {$c,X\mid X$} (acc)
 (m) edge[loop above] node[] {$a, a\mid \epsilon$} ()
;
\end{tikzpicture}
\caption{A PDA recognising $a^* b$ that is weakly compositional but not \hd. Grey states are final, and $X$ is an arbitrary stack
symbol.}
\end{figure}

It is clearly not an \gfgpda, as the resolver would have to predict the number of $a$'s to be seen.
However, for all finitely branching $\Sigma$-arenas $A$, if Player~$2$ wins the game $(A,L(\aut))$, she also wins $G(A,\aut)$. Indeed, if Player~$2$ wins $(A,L(\aut))$ with a strategy $\strat$, then by K\"onig's lemma, there is some bound $n$ on the number of $a$'s in any play that is consistent with $\strat$. Then, in $G(A,\aut)$, Player~$2$ wins by first pushing $n$ letters onto the stack in $\aut$, then playing $\strat$ in the $A$-component. Since $\strat$ guarantees that there are at most $n$ occurrences of $a$ in the play of $L(\aut)$, this strategy is winning for Player~$2$.
\end{proof}

In contrast, \gfgpda~are exactly the \pda~that preserve winners when composed with infinitely branching arenas.

\begin{thm}
A \pda\ is history-deterministic if and only if it is compositional.
\end{thm}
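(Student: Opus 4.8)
The plan is to prove both implications, one of which is already available. The direction that every \gfgpda{} is compositional is exactly Lemma~\ref{lem:hdimpliescomposition}, so it remains to show that a compositional \pda{} is \hd. I would argue the contrapositive: if $\aut$ is not \hd, I exhibit a single infinitely branching arena $A$ witnessing that $\aut$ is not compositional, namely one on which Player~$2$ wins the abstract game $(A,L(\aut))$ but loses the product game $G(A,\aut)$.

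The first step is to package \hdsm{} as a game. Consider the resolver (or letter) game in which Player~$1$ reveals an input letter by letter while Player~$2$ answers each letter by a finite block of $\varepsilon$-transitions followed by one matching transition, thereby building a run on-the-fly; Player~$2$ loses as soon as the word read so far lies in $L(\aut)$ while her run is not in a final state at that exact moment. This is a safety game for Player~$2$ (equivalently, a reachability game for Player~$1$) played on the configuration graph of $\aut$, and $\aut$ is \hd{} precisely when Player~$2$ wins it. Invoking the determinacy of safety games, if $\aut$ is not \hd{} then Player~$1$ has a winning strategy $\strat$; by well-foundedness of the associated strategy tree, $\strat$ forces, along every play, a violation after finitely many letters, i.e.\ it drives every run into a word $u\in L(\aut)$ on which Player~$2$ is stuck in a non-final state.

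The second step turns $\strat$ into an arena. I would take the vertices of $A$ to be the partial Player~$2$-runs that are consistent with Player~$1$ playing $\strat$ and have not yet triggered a violation; from such a vertex the outgoing edges carry the single letter that $\strat$ prescribes and lead to the possible Player~$2$-continuations, while a vertex whose run has just reached a violating word $u\in L(\aut)$ is equipped with a $\sink$-edge to a terminal position. In the product game $G(A,\aut)$ the configuration component exactly recovers the run labelling the current vertex, so the play reproduces the resolver game with Player~$1$ following $\strat$; hence Player~$2$ is forced into a violating vertex and reaches $\sink$ in a non-final state, losing $G(A,\aut)$. In the abstract game $(A,L(\aut))$, on the other hand, well-foundedness of the tree of $\strat$ guarantees that every maximal play is finite and its label is a violating word $u$ followed by $\sink$, with $u\in L(\aut)$; thus every play is won by Player~$2$. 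These two facts contradict compositionality, so $\aut$ must be \hd.

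The delicate point, and the step I expect to be the main obstacle, is reconciling the two roles the arena must play: in $G(A,\aut)$ Player~$1$ must be able to react to Player~$2$'s run (so that $\strat$, which depends on that run, can be reproduced), yet in $(A,L(\aut))$ Player~$1$ must have no way to escape $L(\aut)$ or to prolong the play forever. This is precisely where infinite branching is indispensable: the unbounded $\varepsilon$-blocks make the strategy tree of $\strat$ infinitely branching, so that K\"onig's lemma does not apply and a well-founded (hence every-branch-finite) tree of unbounded depth is possible, which is what forces finiteness of the abstract plays without bounding them a priori. This is exactly the feature that separates the present statement from Lemma~\ref{lem:composition-finite}, where finite branching lets Player~$2$ use K\"onig's lemma to pick a uniform bound and thereby makes weak compositionality strictly weaker than \hdsm. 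I would therefore spend most of the care on checking that the labelling and ownership of $A$ make $G(A,\aut)$ coincide with the resolver game under $\strat$ while keeping every abstract play confined to $L(\aut)\cdot\sink$.
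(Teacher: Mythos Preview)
Your proposal follows essentially the same route as the paper: the forward direction is Lemma~\ref{lem:hdimpliescomposition}, and for the converse you argue the contrapositive via the letter game, invoke its determinacy (the paper uses Martin's theorem on the Borel safety condition; your appeal to determinacy of safety/closed games is equivalent and in fact more elementary), take Challenger's winning strategy~$\strat$, and build the arena from its strategy tree so that every maximal play is labelled by a word in $L(\aut)\cdot\sink$ while the product game reproduces the letter game against~$\strat$.

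The one place where your sketch is under-specified and, as written, would not yet work is the synchronisation you flag in your last paragraph. Your sentence ``the configuration component exactly recovers the run labelling the current vertex'' is not automatic: in the product game Player~$1$ moves in~$A$ \emph{before} Player~$2$ chooses her block of transitions in~$\aut$, so if the arena branches on Resolver's response to the \emph{current} letter, Player~$1$ has no way to copy Player~$2$'s choice. The paper resolves this with a one-step lag: the vertices are play prefixes of the form $a_0\rho_0\cdots\rho_{i-1}a_i$ \emph{ending with a letter}, so that the branching at a vertex encodes Resolver's response $\rho_{i-1}$ to the \emph{previous} letter $a_{i-1}$, which Player~$2$ has already committed to in the $\aut$-component during the preceding round. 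With that lag in place, Player~$1$ can always mimic Player~$2$'s last move in~$\aut$, and then a winning strategy~$\strat_2$ for Player~$2$ in $G(A,\aut)$, restricted to the plays where Player~$1$ mimics, yields a Resolver strategy beating~$\strat$---the desired contradiction. Once you build in this shift, your plan is exactly the paper's proof.
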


\begin{proof}
One direction of the equivalence was shown in Lemma~\ref{lem:hdimpliescomposition}: every \gfgpda\ is compositional.

For the other direction, we show that from a \pda\ that is not history-deterministic, we can construct an arena that witnesses the failure of compositionality. 
To characterise automata that are not history-deterministic, we rely on the so-called \emph{letter game}, a game-based characterisation of history-determinism due to Henzinger and Piterman~\cite{HP06}.

Given a \pda~$\aut$, the letter game is played by two players, called Challenger and Resolver, and $\aut$ is history-deterministic if and only if Resolver wins the letter game induced by $\aut$. 
The game is played in rounds: in each round $i$, Challenger chooses a letter $a_i$ and then Resolver responds with $\rho_i$, the concatenation of a (potentially empty) sequence of $\varepsilon$-transitions and an $a_i$-transition. Resolver's winning condition is a safety condition: before each round $i$, either the word $w=a_0a_1\cdots a_{i-1}$ built by Challenger so far is not in the language $L(\aut)$ or the sequence $\rho_0\rho_1\cdots \rho_{i-1}$ played by Resolver so far induces an accepting run of $\aut$ over $w$. This game characterises history-determinism, as a winning strategy for Resolver is exactly a resolver witnessing the history-determinism of $\aut$.

We now crucially rely on the determinacy of the letter game, which we can easily show for \pda~over finite words. Indeed, the winning condition for Resolver is a safety condition, in the sense that all losing plays have a finite prefix of which all continuations are losing. The winning condition is therefore Borel, and, by Martin's theorem~\cite{Martin}, determined.
This implies that if $\aut$ is not history-deterministic, then Challenger has a winning strategy ${\strat_C}$ in the letter game.

We now build an arena~$A_{\strat_C}$, based on ${\strat_C}$, such that Player~$2$ wins $(A_{\strat_C}, L(\aut))$, but loses the composition game $G(A_{\strat_C},\aut)$, thus witnessing that $\aut$ is not compositional.

The arena $A_{\strat_C}$ emulates the strategy tree for ${\strat_C}$, in the sense that all the plays of the arena are plays that are consistent with the strategy ${\strat_C}$.
Furthermore, the branching in the arena corresponds to the branching of the strategy tree of ${\strat_C}$, which represents Resolver's choices in the letter game. 
All positions of $A_{\strat_C}$ belong to Player~$1$.

Note that a play in the letter game has the form~$a_0 \rho_0 a_1 \rho_1 a_2 \rho_2 \cdots$ where each $a_i$ is a letter in $\Sigma$ and each $\rho_i$ is a (possibly empty) sequence of $\varepsilon$-transitions concatenated with an $a_i$-transition.
Furthermore, as Resolver picks the $\rho_i$, it is her turn to make a move after a prefix of the form~$a_0 \rho_0 a_1 \cdots \rho_{i-1} a_i$ for $i \ge 0$. 
Now, the positions of $A_{\strat_C}$ are the empty play prefix, which is the initial position, and all play prefixes that are consistent with ${\strat_C}$ and end with a letter, i.e. it is Resolver's turn.
There is an $a$-labelled edge from the initial position $\epsilon$ to the position~$a$, where $a$ is the letter picked by ${\strat_C}$ in round zero. 
Furthermore, for every position of the form~$a_0 \rho_0 a_1 \cdots \rho_{i-2} a_{i-1}\rho_{i-1} a_i$ for $i \ge 1$, there is an $a_{i}$-labelled edge from $a_0 \rho_0 a_1 \cdots \rho_{i-2} a_{i-1}$ to $a_0 \rho_0 a_1 \cdots \rho_{i-2} a_{i-1}\rho_{i-1} a_i$.

Consider a position~$a_0 \rho_0 a_1 \cdots \rho_{i-1} a_i$ for $i \ge 1$. 
If $w = a_0 \cdots a_{i-1}$ is in $L(\aut)$, but $\rho_0 \cdots \rho_{i-1}$ is not an accepting run processing $w$, then we replace the label~$a_i$ of the edge from $a_0 \rho_0 a_1 \cdots \rho_{i-2} a_{i-1}$ to $a_0 \rho_0 a_1 \cdots \rho_{i-2} a_{i-1}\rho_{i-1} a_i$ by $\sink$ and remove all outgoing transitions of $a_0 \rho_0 a_1 \cdots \rho_{i-2} a_{i-1}\rho_{i-1} a_i$.

Note that $A_{\strat_C}$ may be infinitely branching, if Resolver can pick arbitrarily long sequences of $\epsilon$-transitions (e.g. recall the \pda\ presented in the proof of Lemma~\ref{lem:composition-finite}).

Recall that ${\strat_C}$ is a winning strategy for Challenger, i.e. during every play that is consistent with it, there is a round at which the word picked by Challenger is in $L(\aut)$, but Resolver has not picked an accepting run on that word.
Thus, by construction, there is no infinite play in $A_{\strat_C}$ and every maximal play ends in a terminal position and is labelled by a word in $L(\aut)\cdot \sink$.
Hence, Player~$2$ wins $(A_{\strat_C}, L(\aut))$.

Now, towards a contradiction, assume that she also wins $G(A_{\strat_C}, \aut)$, say with a strategy~$\strat_2$. We argue that Resolver can mimic $\strat_2$ in the letter game, and thereby win against ${\strat_C}$, a contradiction. 
Indeed, $\strat_2$ only chooses transitions in $\aut$, since all positions in $A_{\strat_C}$ belong to Player~$1$. 
However, note that $\strat_2$ in $G(A_{\strat_C}, \aut)$ can base its decisions on the sequence of transitions picked by Player~$1$ during the play. 
By only considering those plays where Player~$1$'s choices mimic the choices made by $\strat_2$, which is possible due to the construction of $A_{\strat_C}$, we eliminate this additional information.

So, let us define the strategy for the letter game mimicking $\strat_2$ inductively:
In round zero, Challenger picks some $a$ according to $\strat_C$ in the letter game. 
Now, let $\rho_0$ be the sequence of transitions picked by $\strat_2$ in $G(A_{\strat_C},\aut)$ in reaction to the initial move, which leads via an $a$-labelled edge from $\epsilon$ to $a$.
Then, Resolver picks $\rho_0$ during the first round of the letter game. 

Now, assume we are in round~$i>0$ of the letter game, Challenger has picked $a_0 \cdots a_{i}$ so far, Resolver has answered with $\rho_0 \cdots \rho_{i-1}$, and now has to pick $\rho_i$.
Consider the position~$a_0 \rho_0 \cdots \rho_{i-1}a_i$ of $A_{\strat_C}$. 
As $A_{\strat_C}$ is a tree, there is a unique play prefix ending in that position.
By construction of our strategy, this play prefix is consistent with $\strat_2$.
Now, let $\rho_i$ be the sequence of transitions picked by $\strat_2$ for that play prefix.
Then, Resolver picks $\rho_i$ during the $i^{\text{th}}$ round of the letter game.

Recall that $\strat_2$ is a winning strategy for Player~$2$ in $G(A_{\strat_C},\aut)$.
So, for every maximal play in $A_{\strat_C}$, which by construction is labelled by a word in $L(\aut)\cdot\sink$, $\strat_2$ constructs an accepting run of $\aut$ processing that word.
Thus, the mimicking strategy in the letter game is winning for Resolver.
Thus, we have derived the desired contradiction.

We conclude that if $\aut$ is not history-deterministic, then we can build, from a winning strategy for Challenger in the letter game, an arena such that Player~$2$ wins in the arena, but not in the composition with $\aut$, witnessing that $\aut$ is not compositional.
\end{proof}

Observe that infinite branching is key for this proof to work, as exhibited by the counter-example in the proof of Lemma~\ref{lem:composition-finite}. Another ingredient we use is the determinacy of the letter game. While this is easy enough to obtain for \pda~ over finite words, for which the letter game is a safety game (albeit with a potentially complex safe region), this becomes trickier for \pda~over infinite words. Indeed, then the letter game's winning condition for Resolver is no longer a safety condition and it involves the negation of an $\omega$-context-free language. The determinacy of games with a winning condition given by a B\"uchi one-counter automaton is a large cardinal assumption~\cite{Fin13}, so obtaining determinacy for these letter games on $\omega$-\pda~is likely to be challenging.

\section{Solving Games and Universality}
\label{sec_games.tex}
One of the motivations for \HD automata is that solving games with winning conditions given by \HD automata is easier than for nondeterministic automata. This makes them appealing for applications such as the synthesis of reactive systems, which can be modelled as a game between an antagonistic environment and the system. Solving games is undecidable for PDA in general~\cite{DBLP:journals/tcs/Finkel01a}, both over finite and infinite words, while for DPDA~\cite{DBLP:journals/iandc/Walukiewicz01} and $\omega$-\HD-PDA, it is $\exptime$-complete~\cite{LZ22}. As a corollary, universality is also decidable for $\omega$-\HD-PDA, while it is undecidable for PDA, both over finite and infinite words~\cite{Hopcroft}.

Here, we consider Gale-Stewart games~\cite{GaleStewart53}, abstract games induced by a language in which two players alternately pick letters, thereby constructing an infinite word. One player aims to construct a word that is in the language while the other aims to construct one that is not in the language. We use Gale-Stewart games rather than the arena-based games from the previous section since they are more convenient to work with. 
This is not a restriction, since arena-based games with finite arenas and Gale-Stewart games, both with winning conditions recognised by PDA, are effectively equivalent: a finite arena can be simulated by the PDA recognising the winning condition while a Gale-Stewart game can be seen as an arena-based game with a trivial arena.

Formally, given a language $L\subseteq (\Sigma_1\times \Sigma_2)^*$ of sequences of letter pairs, the game $G(L)$ is played between Player~1 and Player~2 in rounds $i=0,1,\ldots$ as follows: At each round~$i$, Player~1 plays a letter $a_i\in\Sigma_1$ and Player~2 answers with a letter $b_i\in \Sigma_2$. A play of $G(L)$ is an infinite word~$\binom{a_0}{b_0}\binom{a_1}{b_1}\cdots$ and Player~2 wins such a play if and only if each of its prefixes is in the language $L$.
 A strategy for Player~$2$ is a mapping from $\Sigma_1^+$ to $\Sigma_2$ that gives for each prefix played by Player~1 the next letter to play. A play is consistent with a strategy $\sigma$ if for each $i$, $b_i=\sigma(a_0a_1\dots a_i)$. 
 Player~2 wins $G(L)$ if she has a strategy such that all plays that are consistent with it are winning for Player~2. Observe that Player~2 loses whenever the projection of $L$ onto its first component is not universal.
Finally, universality reduces to solving these games: $\aut$ is universal if and only if Player~2 wins $G(L)$ for $L=\{\binom{w(0)}{\#} \cdots \binom{w(n)}{\#}\mid w(0) \cdots w(n)\in L(\aut)\}$.

We now argue that solving games for \HD-PDA easily reduces to the case of $\omega$-\HD-PDA, which are just \HD-PDA over infinite words, where acceptance is not determined by final state, since runs are infinite, but rather by the states or transitions visited infinitely often. Here, we only need safety $\omega$-\HD-PDA, in which every infinite run is accepting (i.e. rejection is implemented via missing transitions). The infinite Gale-Stewart game over a language $L$ of infinite words, also denoted by $G(L)$, is as above, except that victory is determined by whether the infinite word built along the play is in $L$.

\begin{lem}
\label{lemma_gamereduction}
Given an \HD-PDA $\aut$, there is a safety $\omega$-\HD-PDA $\aut'$ no larger than $\aut$ such that Player~2 wins $G(L(\aut))$ if and only if she wins $G(L(\aut'))$.
\end{lem}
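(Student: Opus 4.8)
The plan is to collapse the two games onto a single common winning condition. Observe that in $G(L(\aut))$ Player~2 wins a play $w\in(\SigmaI\times\SigmaO)^\omega$ exactly when every finite prefix of $w$ lies in $L(\aut)$, whereas in $G(L(\aut'))$ she wins $w$ exactly when $w$ lies in $L(\aut')$ read as a language of infinite words. Hence it suffices to build a safety $\omega$-\gfgpda~$\aut'$, reusing the states and stack symbols of $\aut$, with
\[
L(\aut') = \set{w\in(\SigmaI\times\SigmaO)^\omega \mid \text{every finite prefix of } w \text{ is in } L(\aut)}.
\]
Once this is achieved the two games have the same moves and the same winning condition, so they have the same winner, and the size bound is immediate since $\aut'$ reuses $Q$ and $\Gamma$.

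First I would construct $\aut'$. The crucial consequence of \HDness is that acceptance of a finite prefix can be read off the configuration reached \emph{right after} the transition processing its last letter: since the run witnessing $v\in L(\aut)$ carries no trailing $\epsilon$-transitions, $v\in L(\aut)$ holds if and only if $\aut$ has a run on $v$ whose last, letter-reading transition lands in a final state. I would therefore let $\aut'$ keep every $\epsilon$-transition of $\aut$ but retain only those $\Sigma$-transitions~$(q,X,a,q',\gamma)$ whose target~$q'$ is in $F$, and, to handle the empty prefix, additionally delete all transitions when $q_I\notin F$ (so that $\aut'$ has the empty language precisely when $\epsilon\notin L(\aut)$). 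The automaton $\aut'$ is then read as a safety $\omega$-PDA: it accepts an infinite word exactly when it admits an infinite run processing that word.

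The correctness of $L(\aut')$ rests on the coherence of resolver-induced runs. Iterating a resolver~$\hstrat$ of $\aut$ on an infinite input~$w$ yields a single infinite run~$\rho$ whose prefix up to and including the transition processing $w(k)$ is exactly the trailing-$\epsilon$-free resolver run on the finite word $w(0)\cdots w(k)$, because $\hstrat$'s choices depend only on the run prefix and the next letter, and these agree for $w$ and all of its prefixes. Consequently, if every finite prefix of $w$ is in $L(\aut)$ then every letter-boundary configuration of $\rho$ is final, so $\rho$ uses only transitions surviving in $\aut'$ and is a valid infinite run of $\aut'$ on $w$; hence $w\in L(\aut')$, and moreover $\hstrat$ restricted to these choices is a resolver witnessing that $\aut'$ is \HD, every infinite run of a safety automaton being accepting. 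Conversely, any infinite run of $\aut'$ on $w$ is a run of $\aut$ using only final-landing $\Sigma$-transitions, so each prefix ending at a letter-reading transition is an accepting, trailing-$\epsilon$-free run of $\aut$, witnessing each finite prefix of $w$ in $L(\aut)$ (together with $q_I\in F$ for the empty prefix). This gives both inclusions, so $L(\aut')$ is as required.

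The main obstacle is the mismatch between the finite-word acceptance of $\aut$, which forbids trailing $\epsilon$-transitions and hence insists on a final state \emph{precisely} at each letter boundary, and the infinite-word safety semantics of $\aut'$, which must encode ``good at every boundary'' as ``the run never gets stuck''. Reconciling these is exactly where \HDness is needed: without a resolver the prefix-runs witnessing membership at successive lengths need not be compatible, and there would be no single infinite run along which the final-state check can be imposed at every boundary. In the write-up I would take care to verify that a run counted as a run \emph{on the infinite word $w$} processes all letters of $w$ rather than diverging on $\epsilon$-transitions, which is automatic in the game since Player~2 must answer every letter, and that the empty-prefix case is correctly captured by the $q_I\in F$ test.
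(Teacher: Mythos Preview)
Your proposal is correct and follows essentially the same approach as the paper: prune the $\Sigma$-transitions of $\aut$ that land in non-final states while keeping all $\epsilon$-transitions, argue that the resulting safety $\omega$-PDA accepts exactly the infinite words all of whose finite prefixes lie in $L(\aut)$, and lift the original resolver using the coherence of resolver-induced runs on prefixes. Your explicit treatment of the empty-prefix case (deleting all transitions when $q_I\notin F$) is a detail the paper glosses over, but otherwise the construction and the justification match.
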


\begin{proof}
Let $\aut'$ be the PDA obtained from $\aut$ by  removing all transitions~$(q,X,a,q',\gamma)$ of $\aut$ with $a \in \Sigma$ and with non-final~$q'$.
Note that all $\epsilon$-transitions are preserved.

With a safety condition, in which every infinite run is accepting, $\aut'$ recognises exactly those infinite words whose prefixes are all accepted by $\aut$. 
Hence, the games $G(L(\aut))$ and $G(L(\aut'))$ have the same winning player.
Note that the correctness of this construction crucially relies on our definition of \HD-PDA, which requires a run on a finite word to end as soon as the last letter is processed.
Then, the word is accepted if and only if the state reached by processing this last letter is final.

Finally, since $\aut$ is \HD, so is $\aut'$. Consider an infinite input in $L(\aut')$. Then, every prefix~$w$ has an accepting run of $\aut$ induced by its resolver, which implies that the last transition of this run (which processes the last letter of $w$) is not one of those that are removed to obtain~$\aut'$. Now, an induction shows that the same resolver works for $\aut'$ as well, relying on the fact that if $w$ and $w'$ with $\size{w}<\size{w'}$ are two such prefixes, then the resolver-induced run of $\aut$ on $w$ is a prefix of the resolver-induced run of $\aut$ on $w'$.
\end{proof}

Our main results of this section are now direct consequences of the corresponding results on $\omega$-words~\cite[Theorem~4.2 \& Corollary~4.3]{LZ22}.

\begin{cor}
Given an \HD-PDA $\aut$, deciding whether $L(\aut)=\Sigma^*$ and whether Player~2 wins $G(L(\aut))$ are both in $\exptime$.
\end{cor}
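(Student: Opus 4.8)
The plan is to derive both statements from the game-solving results for $\omega$-\HD-PDA established in~\cite{LZ20}, using Lemma~\ref{lemma_gamereduction} as the bridge between the finite-word and the infinite-word settings.

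First I would handle the game-solving problem directly. Given an \HD-PDA~$\aut$ over an alphabet of the form~$\Sigma_1 \times \Sigma_2$, Lemma~\ref{lemma_gamereduction} produces a safety $\omega$-\HD-PDA~$\aut'$ that is no larger than $\aut$ and satisfies: Player~2 wins $G(L(\aut))$ if and only if she wins $G(L(\aut'))$. It then remains to invoke the result of~\cite{LZ20} that solving Gale-Stewart games whose winning condition is given by an $\omega$-\HD-PDA is in $\exptime$. Applying this to $\aut'$, and using $|\aut'| \leqslant |\aut|$, yields an $\exptime$ decision procedure in $|\aut|$ for whether Player~2 wins $G(L(\aut))$.

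For universality I would reuse the reduction already recorded just before the corollary: $\aut$ is universal if and only if Player~2 wins $G(L)$, where $L = \{\binom{w(0)}{\#} \cdots \binom{w(n)}{\#} \mid w(0) \cdots w(n) \in L(\aut)\}$. The key observation is that $L$ is itself recognised by an \HD-PDA~$\aut_\#$ of size linear in $|\aut|$: simply relabel each $\Sigma$-transition~$(q,X,a,q',\gamma)$ of $\aut$ as $(q,X,\binom{a}{\#},q',\gamma)$ and leave the $\epsilon$-transitions unchanged. The resolver for $\aut$ still witnesses history-determinism of $\aut_\#$, since the first component of each processed letter pair is exactly the original input letter, so the resolver can read it off and choose transitions as before. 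Feeding $\aut_\#$ into the game-solving procedure from the previous paragraph then decides universality of $\aut$ in $\exptime$.

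The argument involves no genuine obstacle: all the mathematical content already resides in Lemma~\ref{lemma_gamereduction} and in the $\omega$-\HD-PDA results of~\cite{LZ20}. The only points requiring care are bookkeeping ones, namely that both reductions are polynomial (indeed linear) in $|\aut|$ and therefore preserve the $\exptime$ bound, and that the prefix-closed winning condition of the finite-word Gale-Stewart game matches the safety semantics of the automaton produced by Lemma~\ref{lemma_gamereduction}. Since that lemma already certifies this correspondence while keeping the automaton size bounded, the corollary follows by assembling the cited results.
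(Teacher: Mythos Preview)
Your proposal is correct and follows exactly the approach the paper takes: the paper states the corollary as a direct consequence of Lemma~\ref{lemma_gamereduction} together with the $\exptime$ game-solving results for $\omega$-\HD-PDA from~\cite{LZ20}, and handles universality via the reduction to game solving already spelled out in the text preceding the corollary. You have simply made explicit the (linear-size) bookkeeping that the paper leaves implicit.
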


Determining whether, for a given \HD-PDA $\aut$, Player~2 wins $G(L(\aut))$ is in fact $\exptime$-complete, which can be deduced from Walukiewicz's result showing that solving games with $\omega$-\dcfl winning conditions is $\exptime$-hard~\cite[Proposition~31 \& Theorem~26]{DBLP:journals/iandc/Walukiewicz01}.
Whether universality of \HD-PDA is also $\exptime$-hard is left open, as it is for \HD-PDA over $\omega$-words~\cite{LZ22}.

%%%%%%%%%%%%%%%%%%%%%%%%%%%%%%%%%%%%%%%%%
%%%%%%%%%%%%%%%%%%%%%%%%%%%%%%%%%%%%%%%%%

%%%%%%%%%%%%%%%%%%%%%%%%%%%%%%%%%%%%%%%%%%%%%%%%%%%%%%%%
\section{Resolvers}
\label{sec_resolvers}

The definition of a resolver does not put any restrictions on its complexity. In this section we study the complexity of the resolvers that \HD-PDA need. On one hand, we show that resolvers can always be chosen to be \textit{positional}, that is, dependent only on the current configuration of the automaton (which, for pushdown automata, includes the current state and the current stack content). Note that this is not the case for $\omega$-regular automata\footnote{A positional resolver for $\omega$-regular automata implies determinisability by pruning, and we know that this is not always possible~\cite{BKS17}.}, let alone $\omega$-\HD-PDA. On the other hand, we show that they are not always implementable by pushdown transducers.

More formally, a resolver $r$ is positional, if for any two sequences $\rho$ and $\rho'$ of transitions inducing runs ending in the same configuration, $r(\rho, a)=r(\rho',a)$ for all $a\in \Sigma$.

\begin{lem}
\label{lemma_posresolver}
Every \HD-PDA has a positional resolver.
\end{lem}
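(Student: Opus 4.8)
The plan is to show that for any history-deterministic PDA $\aut$ with a (possibly complicated) resolver $\hstrat$, we can build a new resolver that depends only on the current configuration. The key idea is that the resolver is free to choose \emph{any} continuation that keeps the run on a winning track, and whether a configuration is winning depends only on the configuration itself, not on the history that produced it.

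**First I would** introduce the notion of which configurations are \emph{good}, namely those $c$ such that the language of words accepted from $c$ onward is nonempty in the relevant sense. More precisely, for a configuration $c = (q,\gamma)$, define the residual language $L_c = \set{w \in \Sigma^* \mid \aut \text{ has an accepting run from } c \text{ on } w}$. The crucial observation, inherited from the letter game characterisation of \HDness (Henzinger--Piterman~\cite{HP06}), is that $\aut$ is \HD if and only if Resolver wins the letter game, which is a safety game over the configuration space. I would appeal to the existence of a winning strategy for Resolver in this safety game and note that safety games admit \emph{positional} (memoryless) winning strategies over the appropriate position space. The positions of the letter game, from Resolver's perspective at a decision point, are naturally pairs of a configuration and the pending letter to be processed, together with a bit recording whether the input so far is in $L(\aut)$. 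Since this safe region depends only on the current configuration (the information "is the current configuration still able to accept the remaining input" is a property of the configuration, not of how it was reached), a memoryless winning strategy for Resolver is exactly a positional resolver.

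**The main step** is to make precise the claim that Resolver's winning condition in the letter game is a \emph{safety} condition whose safe region is determined configuration-wise, so that a memoryless strategy suffices. Concretely, I would argue that Resolver can maintain the invariant that, after processing any prefix $v$ of the input, the current configuration $c$ satisfies the property that whenever $v \in \text{some prefix of } L(\aut)$, then $c$ can still accept every legal continuation; and that maintaining this invariant requires, at each step, only knowledge of the current configuration and the next letter. A memoryless Resolver strategy for this safety game picks, for each configuration $c$ and each letter $a$, a transition (or $\varepsilon$-transition sequence followed by an $a$-transition) that preserves the invariant. Formally, I would define $r(\rho, a)$ to depend only on the last configuration of the run induced by $\rho$ together with $a$, by fixing such a memoryless strategy; then for any two $\rho, \rho'$ ending in the same configuration, $r(\rho,a) = r(\rho',a)$ by construction, which is exactly positionality.

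**The hard part will be** correctly handling the $\varepsilon$-transitions: a resolver must produce a finite sequence of $\varepsilon$-transitions followed by the letter-processing transition, and positionality must be compatible with this (the choice of the whole $\varepsilon$-block plus $a$-transition must be a function of the current configuration and $a$ alone). I would address this by observing that the decision points of the letter game are precisely the configurations at which Resolver must respond to a freshly offered letter, and the response is a full block of transitions; memorylessness of the safety strategy then assigns a fixed such block to each (configuration, letter) pair. I would also need to check the base case and the empty-word corner case, and confirm that the resulting positional resolver indeed induces an accepting run without trailing $\varepsilon$-transitions on every $w \in L(\aut)$, which follows because the maintained invariant guarantees acceptance exactly when the last letter is processed. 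The one subtlety to flag is that, unlike the $\omega$-word setting (where positional resolvers would force determinisability by pruning and hence cannot always exist), here the finite-word safety nature of the letter game is what makes positionality achievable, so the argument must lean explicitly on finiteness of the words.
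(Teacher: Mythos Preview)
Your approach via the letter game and positional determinacy of safety games is natural, but it has a genuine gap. Positional determinacy for safety games says that the safety player has a winning strategy that depends only on the \emph{current vertex of the game graph}. For the letter game as you describe it, the vertices are play prefixes (equivalently, full histories), so a ``positional'' strategy in that sense still depends on the entire history---exactly what you are trying to avoid. To get a strategy that depends only on the current configuration of $\aut$, you would need to play the letter game on a quotiented arena whose vertices are configurations. But Resolver's losing condition, ``the word played so far is in $L(\aut)$ yet the run built so far is not accepting,'' is not a property of the current configuration: whether the word so far lies in $L(\aut)$ depends on the history. Your sentence ``the safe region depends only on the current configuration'' is therefore not a consequence of general safety-game theory; it is precisely the non-trivial claim the lemma asserts, and you have asserted it rather than proved it. The invariant you propose in the main step also explicitly mentions the prefix $v$, so it does not by itself yield positionality.

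The paper's proof avoids this circularity by a short direct argument. Starting from an arbitrary resolver $r'$, it fixes for each reachable configuration $c$ one representative run prefix $\rho_c$ ending in $c$, and sets $r(\rho,a) = r'(\rho_c,a)$ where $c$ is the last configuration of $\rho$. Correctness is by contradiction: if $r$ induces a rejecting run on some $w \in L(\aut)$, take the \emph{last} configuration $c$ along that run from which the remaining suffix $u$ is still accepted by $\aut$. The transition $r$ chooses at $c$ is $r'(\rho_c,a)$; since $r'$ is a resolver and $\ell(\rho_c)\,u \in L(\aut)$, $r'$ produces an accepting run on $u$ from $c$ beginning with that very transition, contradicting the choice of $c$. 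This ``last good configuration'' step is exactly what does the work your sketch delegates to positional determinacy. Your approach can be salvaged---for instance by defining a configuration-local safety game in which Resolver loses upon taking a non-residual transition block, proving Resolver still wins it (this uses \HDness), and then applying positional determinacy there---but carrying that out is no shorter than the paper's direct argument.
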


\begin{proof}
Let $r'$ be a (not necessarily positional) resolver for $\aut$. We define a resolver~$r$ such that for each configuration and input letter, it makes a choice consistent with $r'$ for some input leading to this configuration.
In other words, for every configuration~$c$ reachable by a run induced by $r'$, let $\rho_c$ be an input to $r'$ inducing a run ending in $c$. 
Then, we define $r(\rho, a) = r'(\rho_c, a)$, where $c$ is the last configuration of the run induced by $\rho$.
Note that every configuration reachable by a run induced by $r$ is also reachable by a run induced by $r'$.

We claim that $r$, which is positional by definition, is a resolver.
Towards a contradiction, assume that this is not the case, i.e. there is a word $w\in L(\aut)$ such that the run $\rho$ induced by $r$ is rejecting. Since this run is finite and $w \in L(\aut)$, there is some last configuration $c$ along the run $\rho$ from which the rest of the word, say $u$, is accepted\footnote{Observe that this is no longer true over infinite words as an infinite run can stay within configurations from where an accepting run exists without being itself accepting. In fact, the lemma does not even hold for coBüchi automata~\cite{KS15} as the existence of positional resolvers implies determinisability by pruning.} (by some other run of $\aut$ having the same prefix as $\rho$ up to configuration~$c$). Let $\tau$ be the next transition along $\rho$ from $c$. Since $r$ chose $\tau$, the resolver $r'$ also chooses $\tau$ after some history leading to $c$, over some word $v$. Since $u$ is accepted from $c$, the word $vu$ is in $L(\aut)$; since $r'$ is a resolver, there is an accepting run over $u$ from $c$ starting with $\tau$, contradicting that $c$ is the last position on $\rho$ from where the rest of the word could be accepted. 
\end{proof}

Contrary to the case of finite and $\omega$-regular automata, since \HD-PDA  have an infinite configuration space, the existence of positional resolvers does not imply determinisability. 
On the other hand, if an \HD-PDA has a resolver which only depends on the mode of the current configuration, then it is \emph{determinisable by pruning}, as transitions that are not used by the resolver can be removed to obtain a deterministic automaton.
However, not all \HD-PDA are determinisable by pruning, e.g. the \HD-PDA for the languages~$C_n$ used to prove Theorem~\ref{theorem:succinctness}.

As even positional resolvers for \HD-PDA are infinite objects mapping configurations to transitions, they are not necessarily finitely representable.
We now turn our attention to whether \HD-PDA always have finitely representable resolvers, i.e., to what kind of memory a \emph{machine} needs in order to implement a resolver for an \HD-PDA.
First, we introduce transducers as a way to implement a resolver. A transducer is an automaton with outputs instead of acceptance, i.e. it computes a function from input sequences to outputs. A pushdown resolver is a pushdown transducer that implements a resolver.

Note that a resolver has to pick enabled transitions in order to induce accepting runs for all inputs in the language.
To do so, it needs access to the mode of the last configuration. 
However, to keep track of this information on its own, the pushdown resolver would need to simulate the stack of the \HD-PDA it controls.
This severely limits the ability of the pushdown resolver to implement computations on its own stack.
Thus, we give a pushdown resolver access to the current mode of the \HD-PDA via its output function, thereby freeing its own stack to implement further functionalities.

Formally, a pushdown transducer (PDT for short) $\taut= (\daut, \lambda)$ consists of a DPDA $\daut$ augmented with an output function $\lambda : Q^\daut \rightarrow \Theta$ mapping the states $Q^\daut$ of $\daut$ to an output alphabet $\Theta$.
The input alphabet of $\taut$ is the input alphabet of $\daut$.

Then, given a PDA $\aut=(Q, \Sigma, \Gamma, q_\initmark, \Delta, F )$, a pushdown resolver for $\aut$ consists of a pushdown transducer $\taut= (\daut,\lambda)$ with input alphabet $\Delta$ and output alphabet $Q\times \Gammabot \times \Sigma\rightarrow \Delta$ such that
the function $r_\taut$, defined as follows, is a resolver for $\aut$:
$ r_\taut(\tau_0\dots \tau_k,a)= \lambda(q_\taut)(q_\aut,X,a)$
where
\begin{itemize}
    \item $q_\taut$ is the state of the last configuration of the longest run of $\daut$ on $\tau_0\dots \tau_k$ (recall that while $\daut$ is deterministic, it may have several runs on an input which differ on trailing $\epsilon$-transitions);
    \item $(q_\aut, X)$ is the mode of the last configuration of the run of $\aut$ induced by $\tau_0\dots\tau_k$.
\end{itemize}
In other words, a transducer implements a resolver by processing the run so far, and then uses the output of the state reached and the state and top stack symbol of the \HD-PDA to determine the next transition in the \HD-PDA.

\begin{figure}
\centering
\begin{tikzpicture}[thick]
\def\y{1.1}
\def\x{1.35}
\tikzset{every state/.style = {minimum size =22}}
\node[state] (i) at (0*\x,0) {$q_1$};
\node[state] (ii) at (2*\x, 0) {$q_2$};
\node[state] (iii) at (4*\x,0) {$q_3$};
\node[state] (iv) at (4*\x,-2*\y) {$p_1$};
\node[state] (vi) at (6*\x,-2*\y) {$p_2$};
\node[state] (vii) at (8*\x,-2*\y) {$p_3$};
\node[state,fill=lightgray] (viii) at (10*\x,-2*\y) {$f$};

\path[-stealth]
(-.75,0) edge (i)
(i) edge[loop above] node[] {$a, X \mid Xa $} ()
(ii) edge[loop above] node[] {$a, X \mid Xa $} ()
(iii) edge[loop above] node[] {$a, X \mid Xa $} ()
(iv) edge[loop below] node[] {$\epsilon, a \mid \epsilon $} ()
(vi) edge[loop below] node[] {$\epsilon, a \mid \epsilon $} ()
(vii) edge[loop below] node[] {$b,a\mid \epsilon$} ()
(i) edge node[above] {$\$,X \mid X\$ $} (ii)
(ii) edge node[above] {$\$,X\mid X\$ $} (iii)
(iii) edge[] node[left] {$\$, X \mid X$} (iv)
(iii) edge[bend left =20] node[right,near start,yshift=.2cm] {$\$, X \mid X$} (vii)
(iii) edge[bend left=0] node[right,xshift=.1cm] {$\$, X \mid X$} (vi)
(iv) edge node[above,near start,xshift=.2cm] {$\epsilon, \$ \mid \epsilon$} (vi)
(vi) edge node[above, near start,xshift=.2cm] {$\epsilon, \$ \mid \epsilon$} (vii)
(vii) edge node[above] {$\$,X \mid X\$ $} (viii)
;
\end{tikzpicture}
    \caption{The PDA $\aut_{\lmax}$ for $\lmax$. Grey states are final, and $X$ is an arbitrary stack symbol.}
    \label{fig:lmax}
\end{figure}

We now give an example of an \HD-PDA which does not have a pushdown resolver. The language in question is the language~$
\lmax=\{a^i\$a^j\$a^k\$b^l\$\mid l\leqslant\max(i,j,k)\}$.
 Compare this to the language $B_2$ in Section~\ref{sec_expressivity} which \emph{does} have a pushdown resolver.
Let $\aut_{\lmax}$ be the automaton in Figure \ref{fig:lmax}, which works analogously to the automaton for $B_2$ in Figure~\ref{fig:gfgpdaexample}.

This automaton recognises $\lmax$: for a run to end in the final state, the stack, and therefore the input, must have had an $a$-block longer than or equal to the final $b$-block; conversely, if the $b$-block is shorter than or equal to some $a$-block, the automaton can nondeterministically pop the blocks on top of the longest $a$-block off the stack before processing the $b$-block. Furthermore, this automaton is \HD: the nondeterminism can be resolved by removing from the stack all blocks until the \textit{longest} $a$-block is at the top of the stack, and this choice can be made once the third $\$$ is processed.

We now argue that this \HD-PDA needs more than a pushdown resolver.
The reason is that a pushdown resolver needs to be able to determine which of the three blocks is the longest while processing a prefix of the form~$a^*\$a^*\$a^*$. However, at least one of the languages induced by these three choices is not context-free, yielding the desired contradiction. 

\begin{lem}
\label{nopdtresolver}
The \HD-PDA $\aut_{\lmax}$ has no pushdown resolver.
\end{lem}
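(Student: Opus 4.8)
The plan is to assume, for contradiction, that $\aut_{\lmax}$ has a pushdown resolver $\taut=(\daut,\lambda)$, and to distil from it a context-free language that the pumping lemma refutes. First I would pin down where the resolver actually has to act: inspecting Figure~\ref{fig:lmax}, the run of $\aut_{\lmax}$ on a prefix $a^i\$a^j\$a^k$ is deterministic (the only moves are the forced pushes), and the sole nondeterministic choice occurs at $q_3$ on the third $\$$, where one of the three transitions to $p_1$, $p_2$, $p_3$ must be selected. After committing, all remaining $\epsilon$- and $b$-moves are forced, and the branch through $p_m$ pops down to the $m$-th $a$-block, so it accepts a continuation $b^l\$$ exactly when $l$ is at most the length of that block. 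Since the resolver commits before seeing $l$, on the prefix $a^i\$a^j\$a^k$ it is forced to pick a block of maximal length (all $l\le\max(i,j,k)$ must be accepted); in particular, if block $m$ is the unique maximum, the resolver must pick $p_m$.

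Next, for $m\in\set{1,2,3}$ I would let $M_m$ be the set of prefixes $a^i\$a^j\$a^k$ (with $i,j,k\ge 1$, so the top stack symbol is fixed) on which the resolver selects the transition to $p_m$. Because the run of $\aut_{\lmax}$ on the prefix is deterministic up to the third $\$$, the selected transition is a function of the state reached by the DPDA $\daut$; hence each $M_m$ is recognised by $\daut$ with a suitable set of output-states and is therefore a \dcfl, in particular context-free. The three sets partition $\{a^i\$a^j\$a^k : i,j,k\ge 1\}$, and by the previous paragraph each satisfies $M_m\subseteq L_m^{+}$, where $L_m^{+}$ denotes the set of such prefixes whose $m$-th block has maximal length (so ``$m$ is the unique maximum'' forces membership in $M_m$, while ties are left free).

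The crux, and the step I expect to be the main obstacle, is precisely this freedom on ties: each $M_m$ is only sandwiched between the ``unique maximum at $m$'' language and the superset $L_m^{+}$, so a naive pumping argument on a witness with a strict maximum has enough slack that a balanced pump (splitting the window across two adjacent blocks) stays inside $M_m$ and defeats the argument. To sidestep this I would use the triple-tie prefixes $a^p\$a^p\$a^p$: each lies in exactly one $M_m$, so by pigeonhole there is an $m^{*}$ for which $a^p\$a^p\$a^p\in M_{m^{*}}$ for infinitely many $p$. I then apply the context-free pumping lemma to $M_{m^{*}}$ with a witness $z=a^p\$a^p\$a^p$ for some such $p$ at least the pumping length of $M_{m^{*}}$.

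For this witness the argument becomes robust: the pumped window $vwx$ has length at most $p$, so it cannot meet both the first and the third $a$-block, and $v,x$ must lie in $a^{*}$ (otherwise iterating changes the number of $\$$'s). A short case analysis on which block(s) $vx$ touches then shows that in every case some iterate $uv^nwx^ny$ makes the $m^{*}$-th block strictly shorter than another block, and hence leaves $L_{m^{*}}^{+}\supseteq M_{m^{*}}$; concretely, because all three blocks start equal, pumping up an untouched-competitor block or pumping down the chosen block breaks maximality with no slack to exploit. This contradicts the pumping lemma, so $M_{m^{*}}$ is not context-free, contradicting that $\daut$ is a DPDA. I would conclude that $\aut_{\lmax}$ has no pushdown resolver.
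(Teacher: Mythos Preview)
Your proposal is correct and follows essentially the same route as the paper: partition $a^+\$a^+\$a^+$ into three context-free sets according to which $p_m$ the resolver selects, use pigeonhole on the diagonal words $a^p\$a^p\$a^p$ to find an $m^*$ hit infinitely often, and then pump one such diagonal word to contradict $M_{m^*}\subseteq L_{m^*}^+$. The only cosmetic differences are that the paper makes the PDA for each $M_m$ explicit (via a product of $\daut$ with a DFA for the deterministic transition sequences, followed by relabelling) and carries out the final case split without invoking the bounded-window clause of the pumping lemma.
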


\begin{proof}
Towards a contradiction, assume that there is a pushdown resolver $r$ for $\aut_{\lmax}$,
implemented by a PDT $\taut= (\daut, \lambda)$.

We construct, for each $i\in \{1,2,3\}$, a PDA~$\daut_i$ that recognises the language of words~$w \in a^+ \$ a^+ \$ a^+$ such that $\taut$ chooses the transition of $\aut_{\lmax}$ going from $q_3$ to $p_i$ when constructing a run on $w\$$.
Then, we show that at least one of the languages $L(\daut_i)$ is not context-free, thereby obtaining the desired contradiction. 

To this end, we first restrict the possible inputs $\daut$ can process.
Let $\tau_i$ for $i \in \set{1,2,3}$ be the self-loop processing $a$ on state~$q_i$ of $\aut_{\lmax}$ and let $\tau_i'$ for $i \in\set{1,2}$ be the $\$$-transition from $q_i$ to $q_{i+1}$.
Now, let $\auta$ be a DFA for the regular language~$\tau_1^+ \tau_1' \tau_2^+ \tau_2' \tau_3^+$ and let $\daut'$ be obtained by taking the product of $\daut$ with $\auta$. 
Note that $\daut'$ does not process any $\epsilon$-transitions nor $b$-transitions of $\aut_{\lmax}$.

Now the PDA~$\daut_i$ for $i\in\set{1,2,3}$ is obtained from $\daut'$ by modifying each transition of $\daut'$ processing a transition~$\tau$ of $\aut_{\lmax}$ to now process $\ell(\tau)\in \{a,\$\}$, and final states are defined as follows:
Recall that states of $\daut'$ are of the form~$(q,q')$ where $q$ is a state of $\daut$ and $q'$ is a state of $\auta$. 
Such a state is final in $\daut_i$ if $q'$ is final in $\auta$, i.e. a word of the form $a^+\$a^+\$a^+$ has been processed, and $\lambda(q)(q_3,a,\$)=(q_3,\$, a, p_i, a)$, i.e. $\taut$ chooses the transition of $\aut_{\lmax}$ going from $q_3$ to $p_i$.

The language $a^+\$a^+\$a^+$
is partitioned among the languages recognized by~$\daut_1$, $\daut_2$ and~$\daut_3$,
which implies that there is at least one $i\in \{1,2,3\}$
such that $\daut_i$ accepts $a^m\$a^m\$a^m$ for infinitely many $m$'s.
Let us fix such an index $i$ for the rest of the proof.
Note that for every $m_1,m_2,m_3 \in \mathbb{N}$, the resolver $\daut$
needs to be able to prolong the run selected over the input $a^{m_1}\$a^{m_2}\$a^{m_3}\$$
into an accepting run if the suffix $b^{\max(m_1,m_2,m_3)}$ is encountered.
This is only possible from the states $q_j$ satisfying $\max(m_1,m_2,m_3)= m_j$,
which implies that every word $a^{m_1}\$a^{m_2}\$a^{m_3}$
accepted by $\daut_i$ satisfies $\max(m_1,m_2,m_3)= m_i$.

To reach a contradiction, we now argue that this $\daut_i$ recognises a language that is not context-free. Indeed, if it were, then by applying the pumping lemma for context-free languages, there would be a large enough $m$ such that the word $a^m\$a^m\$a^m\in L(\daut_i)$ could be decomposed as $uvwyz$ such that $|vy|\geqslant 1$ and $uv^n w y^n z$ is in the language of $\daut_i$ for all $n\geqslant 0$. In this decomposition, $v$ and $y$ must be $\$$-free. Then, if either $v$ or $y$ occurs in the $i^{\text{th}}$ block and is non-empty, by setting $n=0$ we obtain a contradiction as the $i^{\text{th}}$ block is no longer the longest. Otherwise, we obtain a similar contradiction by setting $n=2$. In either case, this shows that $\taut$ is not a pushdown resolver for $\aut$.
\end{proof}

Another restricted class of resolvers are finite-state resolvers, which can be seen as pushdown resolvers that do not use their stack. 
Similarly to the case of $\omega$-\HD-PDA~\cite[Theorem~7.2]{LZ22}, the product of an \HD-PDA and a finite-state resolver yields a DPDA for the same language.

\begin{rem}
Every \HD-PDA with a finite-state resolver is determinisable.
\end{rem}

Note that the converse does not hold. For example, consider the regular, and therefore deterministic context-free, language~$L_{10}$ of words~$w\#$ with $w \in \set{a,b}^*$ with infix $a^{10}$. An \HD-PDA~$\aut_{10}$ recognising $L_{10}$ can be constructed as follows: $\aut_{10}$ pushes its input  onto its stack until processing the first $\#$. Before processing this letter, $\aut_{10}$ uses $\epsilon$-transitions to empty the stack again. While doing so, it can nondeterministically guess whether the next $10$ letters removed from the stack are all $a$'s. If yes, it accepts; in all other cases (in particular if the input word does not end with the first $\#$ or the infix~$a^{10}$ is not encountered on the stack) it rejects.
This automaton is history-deterministic, as a resolver has access to the whole prefix before the first $\#$ when searching for $a^{10}$ while emptying the stack. This is sufficient to resolve the nondeterminism.
On the other hand, there is no finite-state resolver for $\aut_{10}$, as resolving the nondeterminism, intuitively, requires to keep track of the whole prefix before the first $\#$ (recall that a finite-state resolver only has access to the topmost stack symbol).

% In Appendix~\ref{app:resolversstackaccess} 
We consider another model of pushdown resolver, namely one that does not only have access to the mode of the \HD-PDA, but can check the full stack for regular properties. 
% We show that this change does not increase the class of history-deterministic context-free languages that are recognised by an \HD-PDA with a pushdown resolver.
Thus, we consider a more general model where the transducer can use information about the whole stack when determining the next transition. 
More precisely, we consider a regular abstraction of the possible stack contents by fixing a DFA running over the stack and allowing the transducer to base its decision on the state reached by the DFA as well.

Then, given a PDA $\aut=(Q, \Sigma, \Gamma, q_\initmark, \Delta, F )$, a pushdown resolver with regular stack access $\taut= (\daut, \auta, \lambda)$ consists of a DPDA~$\aut$ with input alphabet~$\Delta$, a DFA~$\auta$ over $\Gammabot$ with state set~$Q^\auta$, and an output function~$\lambda$ with output alphabet~$Q\times Q^\auta \times \Sigma\rightarrow \Delta$ such that
the function $r_\taut$ defined as follows, is a resolver for $\aut$:
\[ r_\taut(\tau_0\dots \tau_k,a)= \lambda(q_\taut)(q_\aut,q_\auta,a)\]
where
\begin{itemize}
    \item $q_\taut$ is the state of the last configuration of the longest run of $\daut$ on $\tau_0\dots \tau_k$ (recall that while $\daut$ is deterministic, it may have several runs on an input which differ on trailing $\epsilon$-transitions).
    \item Let $c$ be the last configuration of the run of $\aut$ induced by $\tau_0\dots\tau_k$. Then, $q_\aut$ is the state of $c$ and $q_\auta$ is the state of $\auta$ reached when processing the stack content of $c$.
\end{itemize}

Every pushdown resolver with only access to the current mode is a special case of a pushdown resolver with regular stack access. 
On the other hand, having regular access to the stack is strictly stronger than having just access to the mode. However, by adapting the underlying \HD-PDA, one can show that the \emph{languages} recognised by \HD-PDA with pushdown resolvers does not increase when allowing regular stack access. 

\begin{lem}
Every \HD-PDA with a pushdown resolver with regular stack access can be turned into an equivalent \HD-PDA with a pushdown resolver.
\end{lem}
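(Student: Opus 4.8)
The plan is to push the information that the DFA~$\auta$ computes about the stack directly \emph{into} the stack of the \HD-PDA, so that a plain pushdown resolver, which already has access to the mode (and hence to the top stack symbol) of the automaton it controls, can recover the state~$q_\auta$ on its own. This mirrors the stack-annotation technique already used in the proof of Lemma~\ref{lemma_eow}.

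First I would build an equivalent \HD-PDA~$\aut'$ from $\aut = (Q, \Sigma, \Gamma, q_\initmark, \Delta, F)$ by refining its stack alphabet to $(\Gamma \times Q^\auta)$. The transition relation of $\aut'$ simulates $\aut$ while maintaining the invariant that whenever $\aut'$ reaches a stack content~$\bot (X_1,p_1)\cdots(X_s,p_s)$, the annotation~$p_j$ equals the state~$\delta_\auta^*(q_\initmark^\auta, \bot X_1 \cdots X_j)$ reached by $\auta$ on the stack prefix up to position~$j$. Maintaining this invariant is routine: on a push, the annotation of a new symbol is obtained by applying the transition function of $\auta$ to the annotation currently on top (or to the fixed state $\delta_\auta^*(q_\initmark^\auta,\bot)$ if the stack holds only $\bot$); a pop exposes a symbol that already carries its correct annotation; and a skip leaves the annotation unchanged. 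Since the annotations are determined by the sequence of transitions taken, there is a bijection between runs of $\aut$ and runs of $\aut'$ that forgets the annotations, and this bijection preserves the processed word, the visited states, and hence acceptance; thus $L(\aut') = L(\aut)$. No new nondeterminism is introduced, so $\aut'$ is still an \HD-PDA, and the crucial point is that the annotation~$p_s$ of the topmost stack symbol of $\aut'$ is exactly the state~$q_\auta$ that the regular-stack-access resolver needed to inspect.

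Next I would turn $\taut = (\daut, \auta, \lambda)$ into a plain pushdown resolver~$\taut' = (\daut', \lambda')$ for $\aut'$. Let $\proj \colon \Delta' \rightarrow \Delta$ be the projection sending each transition of $\aut'$ to the $\aut$-transition it simulates. The transducer DPDA~$\daut'$ runs on sequences over $\Delta'$ by simulating $\daut$ on their image under $\proj$; since $\proj$ is a fixed relabelling, $\daut'$ is deterministic and reaches the same state~$q_\taut$ that $\daut$ reaches on the corresponding $\aut$-run. For the output, $\lambda'$ exploits that a plain pushdown resolver for $\aut'$ is handed the mode~$(q, (X,p))$ of $\aut'$: from the top stack symbol~$(X,p)$ it recovers both the original symbol~$X$ and the DFA state~$q_\auta = p$, computes the $\aut$-transition $\tau = \lambda(q_\taut)(q, p, a)$ that $\taut$ would have chosen, and outputs the unique $\aut'$-transition enabled in mode~$(q,(X,p))$ on letter~$a$ that projects to $\tau$ under $\proj$. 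When the top symbol is $\bot$, the value $q_\auta = \delta_\auta^*(q_\initmark^\auta,\bot)$ is a fixed state and is handled in the same way.

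Finally I would verify that the resulting function~$r_{\taut'}$ is a resolver for $\aut'$. Because the runs of $\aut$ and $\aut'$ correspond under the annotation-forgetting bijection, and $r_{\taut'}$ makes in each configuration exactly the choice that $\hstrat$ (the regular-stack-access resolver) makes in the corresponding configuration of $\aut$, the run that $r_{\taut'}$ induces on any $w \in L(\aut') = L(\aut)$ is the image of the accepting run that $\hstrat$ induces on $w$, and is therefore accepting; this exhibits $\aut'$ as an \HD-PDA with a plain pushdown resolver that is equivalent to $\aut$. The main thing to get right is the stack invariant across all three stack operations together with the treatment of the bottom symbol~$\bot$; once the top annotation is guaranteed to coincide with $q_\auta$, everything the regular-stack-access resolver computed is available to a plain pushdown resolver, which is the only obstacle to collapsing the two models.
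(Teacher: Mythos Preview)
Your proposal is correct and follows essentially the same approach as the paper: annotate the stack symbols of the \HD-PDA with the state that the DFA~$\auta$ reaches on the corresponding stack prefix (exactly the construction from Lemma~\ref{lemma_eow}), so that the top stack symbol already carries $q_\auta$ and a plain pushdown resolver can read it off the mode. The paper's proof is much terser, but your added detail on maintaining the invariant across push/pop/skip and on relabelling $\daut$ along the projection~$\proj$ is precisely what the paper leaves implicit.
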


\begin{proof}
Let $\aut=(Q, \Sigma, \Gamma, q_\initmark, \Delta, F )$ be an \HD-PDA and let $(\daut, \auta, \lambda)$ be a pushdown resolver with stack access for $\aut$.
We keep track of the state $\auta$ reaches on the current stack as in the proof of Lemma~\ref{lemma_eow}:
If a stack content~$\bot(X_1, q_1) \cdots (X_s, q_s)$ is reached, then $q_j$ is the unique state of $\aut$ reached when processing $\bot X_1 \cdots X_j$.
Now, it is straightforward to turn $(\daut, \auta, \lambda)$ into a pushdown resolver for $\aut$ that has only access to the top stack symbol.
\end{proof}

Finally, for \HD-VPA, the situation is again  much better. Using the one-token game discussed in the proof of Theorem~\ref{thm:vpagfgnesscomplexity} and known results~\cite{DBLP:conf/fsttcs/LodingMS04} about VPA games having VPA strategies, we obtain our final theorem.

\begin{thm}
\label{theorem_visiblyresolver}
Every \HD-VPA has a (visibly) pushdown resolver that can be computed in exponential time.
\end{thm}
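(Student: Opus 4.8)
The plan is to leverage the characterisation of \HDness via the one-token game established in the proof of Theorem~\ref{thm:vpagfgnesscomplexity}, combined with the fact that visibly pushdown games admit visibly pushdown winning strategies. The overall strategy is: show that the one-token game on an \HD-VPA is a visibly pushdown safety game, extract a visibly pushdown winning strategy for Player~2 in exponential time, and then convert this strategy into a visibly pushdown resolver using the copycat argument already developed for Theorem~\ref{thm:vpagfgnesscomplexity}.

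First I would observe that the one-token game on a VPA $\aut$ is played on a genuinely visibly pushdown arena. Its positions are pairs of configurations of $\aut$, i.e.\ configurations of the product of $\aut$ with itself. Since both copies read the same input letter in lockstep and VPA stack operations are dictated solely by the letter (push on calls, pop on returns, no stack change on internals), the two stacks grow and shrink synchronously. Hence the product operates over the stack alphabet $\Gamma\times\Gamma$ in a visibly pushdown fashion, and the one-token game is a safety game on a visibly pushdown arena of size polynomial in $\size{\aut}$.

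Next, since $\aut$ is \HD, Theorem~\ref{thm:vpagfgnesscomplexity} guarantees that Player~2 wins this one-token game. I would then invoke the results of Löding, Madhusudan, and Serre~\cite{DBLP:conf/fsttcs/LodingMS04}, which show that visibly pushdown games with $\omega$-regular winning conditions (safety being a special case) can be solved in exponential time, and that the winner has a winning strategy implementable by a visibly pushdown transducer computable in exponential time. Applied to our safety game, this yields a visibly pushdown transducer $\taut_{\strat}$ realising a winning strategy $\strat$ for Player~2, of exponential size.

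Finally, I would convert $\strat$ into a resolver exactly as in the proof of Theorem~\ref{thm:vpagfgnesscomplexity}: fixing Player~1 to the copycat strategy collapses the two runs onto the diagonal where both components coincide, so the strategy's decisions depend only on the single run of $\aut$ constructed so far, which is precisely the input of a resolver. Because a winning $\strat$ never plays a non-residual transition (as argued via the copycat strategies), the induced map is a resolver for $\aut$, and the transducer $\taut_{\strat}$, specialised to copycat play, implements it; its stack operations remain driven by the letters read, so it is a \emph{visibly} pushdown resolver. The main obstacle is this last step: one must verify that specialising $\taut_{\strat}$ to the copycat diagonal yields a well-formed visibly pushdown transducer reading the transition sequence of $\aut$, and that this specialisation preserves the property that the resolver never produces a non-residual transition, so that it is a genuine resolver and not merely a one-token-game strategy.
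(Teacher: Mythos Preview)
Your proposal is correct and follows essentially the same approach as the paper: both use the one-token game characterisation from Theorem~\ref{thm:vpagfgnesscomplexity}, observe it is a safety game on a polynomial-size visibly pushdown arena, extract a visibly pushdown winning strategy for Player~2 in exponential time, and convert it to a resolver via the copycat argument. The paper cites Walukiewicz~\cite{DBLP:journals/iandc/Walukiewicz01} rather than L\"oding--Madhusudan--Serre for the pushdown-strategy result, but this is immaterial; your proposal in fact spells out more of the details (the synchronised $\Gamma\times\Gamma$ stack, the diagonal specialisation) that the paper leaves implicit.
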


\begin{proof}
%Fix a VPA~$\aut = (Q, \Sigma, \Gamma, q_\initmark, \Delta, F)$ and
Recall the one-token game that characterises \HDness of a VPA~$\aut$ from the proof of Theorem \ref{thm:vpagfgnesscomplexity}.
We argued there that the one-token game is a safety game played on the configuration graph of a deterministic VPA whose size is polynomial in the size of $\aut$. 
Therefore, Player~$2$, when she wins, has a winning strategy that can be implemented by a visibly pushdown transducer, which can be computed in exponential time~\cite{WF,DBLP:journals/iandc/Walukiewicz01}.
The proof of Theorem~\ref{thm:vpagfgnesscomplexity} shows that such a strategy for Player~$2$ in the one-token game for VPA induces a resolver. We conclude that \HD-VPA have visibly pushdown resolvers computable in exponential time.
\end{proof}

Note that in the above proof, we could also compute directly a strategy to the letter game, rather than the one-token game, by using the closure of VPA under complementation and union to encode the letter game as a safety game on the configuration graph of a VPA. However, the resulting VPA would be of exponential, rather than polynomial, size.
%\input{content/vpa_resolver.tex}

%%%%%%%%%%%%%%%%%%%%%%%%%%%%%%%%%%%%%%%%%%%%%%%%%%%%%%%%%%%%
%%%%%%%%%%%%%%%%%%%%%%%%%%%%%%%%%%%%%%%%%%%%%%%%%%%%%%%%%%%%

\section{Conclusion}
\label{section_conc}
We have continued the study of history-deterministic pushdown automata initiated by Lehtinen and Zimmermann~\cite{LZ22}, focusing here on expressiveness and succinctness. In particular, we have shown that \HD-PDA are not only more expressive than DPDA (as had already been shown for the case of infinite words), but also more succinct than DPDA:
We have introduced the first techniques for using \HD nondeterminism to succinctly represent languages that do not depend on the coBüchi condition.
Similarly, for the case of VPA, for which deterministic and nondeterministic automata are equally expressive, we proved a (tight) exponential gap in succinctness.

Solving games and universality are decidable for \HD-PDA, but \HDness is undecidable and \HD-PDA have limited closure properties.
On the other hand, \HDness for VPA is decidable and they inherit the closure properties of VPA, e.g. union, intersection and complementation, making \HD-VPA an exciting class of pushdown automata.
Finally, we have studied the complexity of resolvers for \HD-PDA, showing that positional ones always suffice, but that they are not always implementable by pushdown transducers.
Again, \HD-VPA are better-behaved, as they always have a resolver implementable by a VPA.

Let us conclude by mentioning some open problems raised by our work.
\begin{itemize}

    \item Is universality for \HD-PDA $\exptime$-hard? Note that the analogous problem for \HD-PDA over $\omega$-words is also open~\cite{LZ22}.

    \item It is known that the succinctness gap between PDA and DPDA is noncomputable~\cite{Hartmanis80,Valiant76}, i.e. there is no computable function~$f$ such that any PDA of size~$n$ that has some equivalent DPDA also has an equivalent DPDA of size~$f(n)$. Due to our hierarchy results, at least one of the succinctness gaps between PDA and \HD-PDA and between \HD-PDA and DPDA has to be uncomputable, possibly both.
    
    \item We have shown that some \HD-PDA do not have pushdown resolvers, but there are no known upper bounds on the complexity of resolvers.
    It is even open whether every \HD-PDA has at least a computable resolver. 
    
    \item On the level of languages, it is open whether every language in \gfgcfl has an \HD-PDA recognising it with a resolver implementable by a pushdown transducer.
    
    \item We have shown that \HDness is undecidable, both for PDA and for context-free languages. Is it decidable whether a given \HD-PDA has an equivalent DPDA?
    
    \item Equivalence of DPDA is famously decidable~\cite{DBLP:journals/tcs/Senizergues01} while it is undecidable for PDA~\cite{Hopcroft}. Is equivalence of \HD-PDA decidable?
    
    \item Does every \HD-PDA that is equivalent to a DPDA have a finite-state resolver with regular stack access?
    
    \item There is a plethora of fragments of context-free languages one can compare \gfgcfl to, let us just mention a few interesting ones: Height-deterministic context-free languages~\cite{NS07}, context-free languages with bounded nondeterminism~\cite{Her97} and preorder typeable visibly pushdown languages~\cite{KMV07}. 
    
    \item We have shown that history-determinism and good-for-gameness coincide for PDA over finite words; it remains open whether this is also the case for PDA over infinite words.
\end{itemize}

%%%%%%%%%%%%%%%%%%%%%%%%%%%%%%%%%%%%%%%%%%%%%%%%%%%%%%%%%%%%
%%%%%%%%%%%%%%%%%%%%%%%%%%%%%%%%%%%%%%%%%%%%%%%%%%%%%%%%%%%%
%% Acknowledgments

%% SWITCH 4 %%%%%%%%%%%%%%%%%%%%%%%%%%%%%%%
%\input{arxiv/backmatter_arxiv.tex}
%-------------------------------------------

%%%%%%%%%%%%%%%%%%%%%%%%%%%%%%%%%%%%%%%%%%%%%%%%%%%%%%%%%%%%
%%%%%%%%%%%%%%%%%%%%%%%%%%%%%%%%%%%%%%%%%%%%%%%%%%%%%%%%%%%%
%% Bibliography
\bibliographystyle{alphaurl}
\bibliography{content/biblio.bib}

%%%%%%%%%%%%%%%%%%%%%%%%%%%%%%%%%%%%%%%%%%%%%%%%%%%%%%%%%%%%
%%%%%%%%%%%%%%%%%%%%%%%%%%%%%%%%%%%%%%%%%%%%%%%%%%%%%%%%%%%%
%% Appendix
%\clearpage
% \appendix
% \section{Appendix}

%%%%%%%%%%%%%%%%%%%%%%%%%%%%%%%%%%%%%%%%%%%%%%%%%%%%%%%%%%%
%%%%%%%%%%%%%%%%%%%%%%%%%%%%%%%%%%%%%%%%%%%%%%%%%%%%%%%%%%%%
%  \input{content/appendix.tex}
%\input{content/proofs.tex}

% %%%%%%%%%%%%%%%%%%%%%%%%%%%%%%%%%%%%%%%%%%%%%%%%%%%%%%%%%%%%
% %%%%%%%%%%%%%%%%%%%%%%%%%%%%%%%%%%%%%%%%%%%%%%%%%%%%%%%%%%%%
\end{document}